\documentclass[11pt]{article}
\usepackage{amsmath,amssymb,amsfonts}
\usepackage{geometry}
\usepackage{graphicx}
\usepackage{float}
\usepackage{algorithm}
\usepackage{algpseudocode}
\usepackage{cite}
\usepackage{authblk}
\usepackage{pdflscape}
\usepackage{booktabs}
\usepackage{array}
\usepackage{ragged2e}
\usepackage{afterpage}
\usepackage{titletoc}

\usepackage{xcolor}
\definecolor{mycolor}{RGB}{0,70,140}

\newcommand{\bfm}[1]{\mathbf{#1}}

\usepackage{sectsty}
\colorlet{sectioncolor}{black}      
\sectionfont{\color{sectioncolor}}
\subsectionfont{\color{sectioncolor}}

\colorlet{eqnumcolor}{black}      
\makeatletter
\newcommand{\eqtag@color}{\color{eqnumcolor}}
\def\tagform@#1{\maketag@@@{{\eqtag@color(#1)}}}
\DeclareRobustCommand{\eqref}[1]{%
    \textup{{\def\eqtag@color{\color{mycolor}}\tagform@{\ref{#1}}}}}
\makeatother

\usepackage[colorlinks=true,
            linkcolor=mycolor,
            citecolor=mycolor,
            urlcolor=mycolor]{hyperref}

\usepackage{amsthm}
\usepackage[framemethod=TikZ]{mdframed}
\definecolor{boxframe}{RGB}{0,70,140}
\definecolor{boxfill}{RGB}{246,249,253}
\mdfdefinestyle{boxedstyle}{%
    linecolor=boxframe,
    backgroundcolor=boxfill,
    linewidth=0.5pt,
    roundcorner=2pt,
    innertopmargin=8pt,
    innerbottommargin=8pt,
    innerleftmargin=10pt,
    innerrightmargin=10pt,
    skipabove=8pt,
    skipbelow=8pt,
}
\theoremstyle{plain}
\newmdtheoremenv[style=boxedstyle]{theorem}{Theorem}[section]
\theoremstyle{definition}
\newmdtheoremenv[style=boxedstyle]{problem}[theorem]{Problem}
\newmdtheoremenv[style=boxedstyle]{definition}[theorem]{Definition}

\definecolor{physframe}{RGB}{0,70,140}
\definecolor{physfill}{RGB}{246,249,253}
\mdfdefinestyle{physstyle}{%
    linecolor=physframe,
    backgroundcolor=physfill,
    linewidth=0.5pt,
    roundcorner=2pt,
    innertopmargin=8pt,
    innerbottommargin=8pt,
    innerleftmargin=10pt,
    innerrightmargin=10pt,
    skipabove=8pt,
    skipbelow=8pt,
}
\theoremstyle{plain}
\newmdtheoremenv[style=physstyle]{phystheorem}[theorem]{Theorem}

\definecolor{resultframe}{RGB}{160,35,35}
\definecolor{resultfill}{RGB}{255,238,238}
\mdfdefinestyle{resultstyle}{%
    linecolor=resultframe,
    backgroundcolor=resultfill,
    linewidth=0.6pt,
    roundcorner=2pt,
    innertopmargin=8pt,
    innerbottommargin=8pt,
    innerleftmargin=10pt,
    innerrightmargin=10pt,
    skipabove=8pt,
    skipbelow=8pt,
}

\usepackage{etoolbox}
\definecolor{algoframe}{gray}{0.5}
\definecolor{algofill}{gray}{0.96}
\mdfdefinestyle{algostyle}{%
    linecolor=algoframe,
    backgroundcolor=algofill,
    linewidth=0.5pt,
    roundcorner=2pt,
    innertopmargin=6pt,
    innerbottommargin=6pt,
    innerleftmargin=10pt,
    innerrightmargin=10pt,
    skipabove=8pt,
    skipbelow=8pt,
}

\BeforeBeginEnvironment{algorithm}{\begin{mdframed}[style=algostyle]}
\AfterEndEnvironment{algorithm}{\end{mdframed}}

\title{Unifying Physical Backpropagation}
\author[1]{Cyrill B\"osch\thanks{cb7454@princeton.edu}}
\author[2]{Yigithan Gediz}
\author[2]{Hakan T\"ureci}
\affil[1]{Department of Computer Science, Princeton University, Princeton, NJ 08540, USA}
\affil[2]{Department of Electrical and Computer Engineering, Princeton University, Princeton, NJ 08540, USA}

\date{\today}

\begin{document}
\maketitle

\begin{abstract}
Physical computing systems exploit device dynamics for computation, but their gradient-based optimization is challenging: backpropagation through a digital twin suffers from model-reality gap. On-device gradient computation could resolve this issue, and a handful of theoretical and experimental studies have proposed ways to achieve it. Yet a unifying theory identifying when a physical system can compute the gradient of its own performance has been missing. Here we develop such a unification, based on the adjoint method: we identify sufficient conditions under which the adjoint field required for formally exact gradients can be generated on the same hardware that performs the computation. Linear and nonlinear systems obey fundamentally different conditions: for linear systems damping or gain is admissible provided reciprocity is preserved. For nonlinear trajectory systems the sufficient conditions are reciprocity of the linearized system and the existence of a time-reversal mirror.
Algorithmically, the nonlinear case requires infinitesimal nudging, whereas linear systems admit a finite-amplitude experiment.
We recover Equilibrium Propagation, Hamiltonian echo backpropagation, fully forward mode training and in situ gradient methods in integrated-photonic and free-space-optical systems. We further show that reciprocity is only the simplest instance of a more general intertwining condition, which extends exact on-device gradient computation to a class of non-Hermitian, non-reciprocal systems.
Further generalizations include time-dependent parameters, Onsager-reciprocal dynamics and nonlinear, $\mathcal{PT}$-symmetric Schr\"odinger equations. Our work provides a unified theoretical basis for formally exact physical learning algorithms and a template for constructing them across a range of physical systems.
\end{abstract}

\startcontents[main]
\section*{\contentsname}
\begingroup\hypersetup{linkcolor=sectioncolor}
\printcontents[main]{}{0}{\setcounter{tocdepth}{1}}
\endgroup

\section{Introduction}\label{s:introduction}
Physical computing uses the controlled dynamics or steady-state response of a material or device for information processing~\cite{jaeger2023toward, hughes2019wave, zangeneh2021analogue}. Inputs may be encoded in initial or boundary conditions, external drives, or programmable parameters, and outputs are measured from selected physical degrees of freedom. Representative platforms include mechanical and elastic
networks~\cite{dubvcek2024sensor}, analog electrical circuits~\cite{dillavou2022demonstration}, spintronic and memristive devices~\cite{sebastian2020memory, markovic2020physics},
integrated photonic circuits~\cite{shastri2021photonics}, nonlinear multimode fibers~\cite{teugin2021scalable}, free-space optical systems~\cite{wetzstein2020inference,lin2018all, mcmahon2023physics} and programmable wave guide systems~\cite{onodera2024scaling, yanagimoto2026programmable}. These platforms span transient, fixed-point, and frequency-domain computations.

A common special case is Physical Reservoir Computing~\cite{tanaka2019recent}, in which the internal physical dynamics is fixed and only a linear read-out layer is trained. Here, instead, we address the optimization of parameters that enter the physical dynamics itself, which remains a major challenge~\cite{momeni2025training}.
Training only a digital twin~\cite{nikkhah2024inverse} is vulnerable to a model--reality gap.
Physics-aware training, in which the forward pass is performed in the physical
device but differentiation is performed using a digital model, can reduce this
gap, but still requires a differentiable surrogate for the backward pass and
communication between the physical and digital domains~\cite{wright2022deep,oguz2024optical}.
Over the past decade, in situ (equivalently, on-device) training methods, in which the
physical system participates in both the forward computation and the
generation of the learning signal, have emerged as a promising
alternative~\cite{hermans2015trainable}. These approaches
range from heuristic or approximate rules to methods that compute formally
exact gradients. Approximate rules can be simple and local, but their updates
need not coincide with the true gradient, and their convergence and performance
can therefore be system dependent~\cite{momeni2025training}. This motivates
formally exact physical-gradient constructions, which have been developed for
free-space optics~\cite{zhou2020situ,mididoddi2025threading}, integrated
photonics~\cite{pai2023experimentally}, mechanics~\cite{li2024training},
resistive electrical networks~\cite{lin2026train}, and
Hamiltonian systems~\cite{lopez2023self}, among other platforms.

In this article, we unify these formally exact physical backpropagation methods using the adjoint method from PDE-constrained optimization~\cite{pontryagin2018mathematical,fichtner2010full, guillamon2025situ}. The main contribution of this work is twofold. First, we show that seemingly disparate physical backpropagation algorithms are instances of a single adjoint construction: platform-specific methods become one family in a common language that lets those algorithms be compared and extended. Second, we identify sufficient conditions under which the required adjoint field---and hence the cost function gradient---can be obtained from a physical experiment that runs in the \emph{same} system. The only allowed modifications between the forward and adjoint experiments are changes of initial conditions, external forces, and reversal of explicit time dependences.

We show that linear and nonlinear systems give rise to different physical experiments and different conditions. Linear reciprocal systems admit direct, finite-amplitude adjoint experiments, and damping or gain is compatible with exact physical gradients as long as reciprocity is preserved.
For example, we show for the first time that even in the overdamped, reciprocal case, the gradient of trajectories can be obtained physically, whereas in the nonlinear case trajectory learning is obstructed. Instead, under certain conditions, Equilibrium Propagation~\cite{scellier2017equilibrium} for fixed point engineering emerges.

Nonlinear trajectory systems are more restrictive: they require infinitesimal nudging, multiple experiments, a particular form of time reversal symmetry and reciprocity of the linearized dynamics---the latter two are independent conditions. In the nonlinear regime, ordinary damping or gain is generically obstructive, and even the ability to experimentally swap damping and gain does not in general restore a same-hardware adjoint construction.

We show that the adjoint-based theory recovers a broad range of existing exact physical-backpropagation algorithms: Equilibrium Propagation~\cite{scellier2017equilibrium}, Hamiltonian echo backpropagation and its recurrent variants~\cite{lopez2023self,pourcel2025learning}, the Fully Forward Mode (FFM) algorithm~\cite{xue2024fully}, early reciprocity-based recurrent learning~\cite{hermans2015trainable}, and physical gradient algorithms from integrated photonics~\cite{pai2023experimentally,hughes2018training}, free-space and diffractive optics~\cite{zhou2020situ,wagner1987multilayer,guo2021backpropagation,spall2025training,mididoddi2025threading}, linear wave scattering~\cite{wanjura2024fully,guillamon2025situ}, static and topological mechanics~\cite{li2024training,li2025topological}, and the projector-based analytical gradient for passive resistor networks~\cite{lin2026train}. In particular, we identify FFM training as a special case of a more general condition, in which the forward operator and its transpose are related by a constant invertible transformation. This viewpoint extends FFM from static optics to time-dependent trajectory learning and yields exact physical gradients even in a class of non-Hermitian and non-reciprocal systems---which we call ``twisted reciprocal''---illustrated on a Hatano--Nelson chain~\cite{hatano1996localization}. Furthermore, we provide various generalizations to non-autonomous, Parity-time symmetric and Onsager reciprocal systems. 

\paragraph{Disclaimer.}
In this article we do not cover coupled~\cite{stern2021supervised,stern2023learning, dillavou2024machine,stern2024training,dillavou2022demonstration, mcginnis2026coercivity} and heuristic~\cite{cheng2024multimodal,de2026multimodal} physical learning rules. We also do not discuss forward-forward and related local-learning schemes for physical systems, which forgo gradients of a global loss function in favor of layer-local objectives~\cite{oguz2023forward, momeni2023backpropagation, wang2026training}. Furthermore, we do not cover probabilistic physical computing machines. Boltzmann machines~\cite{ernoult2019using,guzman2026unsupervised} and score-based generative models~\cite{bosch2025local} have been shown to give rise to exact (unbiased) local learning rules, although these do not derive from the adjoint method. We consider a setup where the gradient is calculated from physically observed forward and adjoint fields. We are agnostic to how the parameter updates are then applied and assume for simplicity that an external teacher does so. This means we do not cover the problem of designing a physical system that adapts by itself~\cite{wagner1987multilayer, lopez2023self, dillavou2022demonstration, dillavou2024machine}.

\paragraph{Structure.}
The structure of the article is as follows: In Sec.~\ref{s:problem-statement} we formally define the optimization problem and introduce the adjoint method exemplified in a second-order real-valued system. In Sec.~\ref{s:physical-question} we precisely define which part of the gradient computation is performed by the physical system and provide a summary of our results across various systems. In Sec.~\ref{s:local-learning-rules} we introduce the notion of local learning rules and discuss the difference between knowing the model and knowing the parameters. In Sec.~\ref{s:Onsager} we discuss reciprocity and the generalization to Onsager reciprocity. Then, in Sections~\ref{s:physical-adjoints}--\ref{s:stationary} we present the main results: sufficiency conditions for obtaining the adjoint fields physically in second- (Sec.~\ref{s:physical-adjoints}) and first-order overdamped (Sec.~\ref{s:first-order}) real-valued systems, in Schr\"odinger-type complex-valued systems (Sec.~\ref{s:schrodinger}), and in stationary problems, where Equilibrium Propagation and the Helmholtz-type constructions live side by side (Sec.~\ref{s:stationary}).
At the end of each section we identify which existing physical learning frameworks are recovered by the presented adjoint formalism.
Finally, in Sec.~\ref{s:symmetry-adjoints} we show that reciprocity is the simplest instance of a more general intertwining condition, recover fully forward mode training as a special case, extend it to time-dependent trajectory learning, and obtain exact physical gradients in a class of non-Hermitian, non-reciprocal systems, exemplified by a Hatano--Nelson chain.

We also provide an extensive and self-contained appendix that derives and discusses all the different cases in more detail. 


\section{Introducing the adjoint method}\label{s:problem-statement}
Here we consider $N$-dimensional discretized systems governed by ODEs, though the theory developed also applies to fields which are governed by PDEs. If the ODE resulted from discretizing a continuous PDE through, e.g., the Finite Element method, then, in the language of PDE-constrained optimization, this would correspond to the principle of ``discretize-then-optimize''~\cite{fichtner2010full}.
We denote the physical state by $\bfm{u}(t)\in \mathbb{R}^N$ and the trainable parameters by $\bfm{p}\in \mathbb{R}^M$. Inputs may be encoded traditionally in initial conditions, external drives~\cite{richardson2025nonlinear}, or directly in parameters. The latter endows linear systems with nonlinear computing capabilities~\cite{wanjura2024fully,xia2024nonlinear,yildirim2024nonlinear}. Suppressing explicit encoders, decoders, and batch averages, we write the cost as
\begin{equation}\label{eq:cost-trajectory-terminal}
    \chi[\bfm{u}]
    =
    \int_0^T \theta\bigl[\mathbb{P}\bfm{u}(t)\bigr]\,dt
    +
    \psi\bigl[\mathbb{P}\bfm{u}(T),\mathbb{P}\dot{\bfm{u}}(T)\bigr],
\end{equation}
where $\mathbb{P} = \mathbb{P}^T = \mathbb{P}^2$ projects onto the measured or output degrees of freedom (DOFs) which encode the solution to the computation. The first term is a trajectory loss and the second a final-state loss.

We now introduce the adjoint method~\cite{pontryagin2018mathematical,Giles2000, Plessix2006,VeronisDutton2004,Werschnik2007, Aronna2019, Brandwood1983, Hjorungnes2007, Sorber2012} that we will use throughout the text. The constructions used here are rooted in PDE constrained optimization, where an objective is minimized subject to a governing equation enforced as a constraint through Lagrange multipliers~\cite{hinze2009optimization}. A full self-contained derivation can be found in the Appendix~\ref{apA}.

\begin{figure}[t!]
    \centering
    \includegraphics[width=1\textwidth]{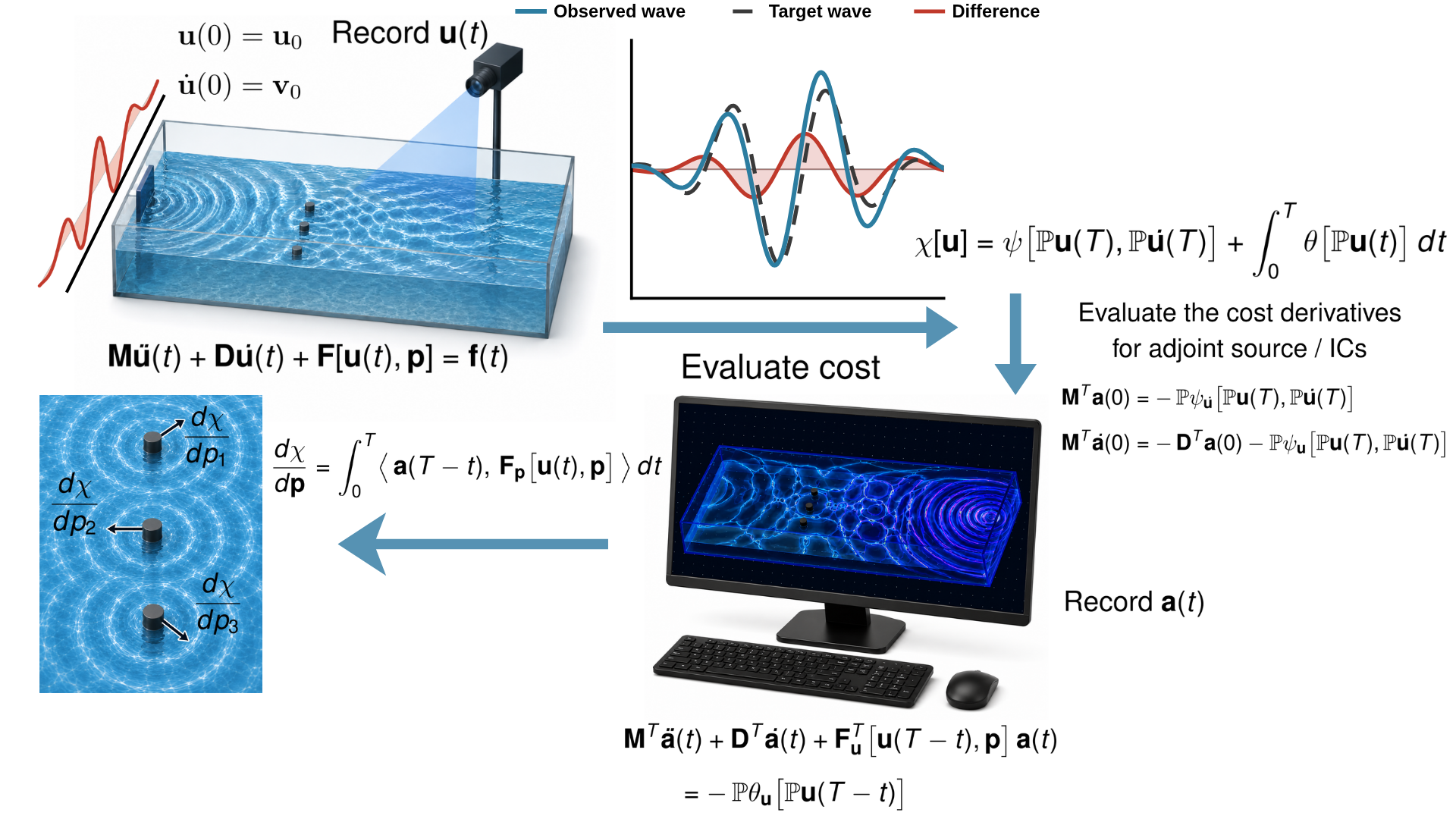}
    \caption{Conceptual overview of optimizing a physical system with the adjoint method. The physical state $\bfm{u}(t)$ evolves under the forward dynamics $\bfm{M}\ddot{\bfm{u}}+\bfm{D}\dot{\bfm{u}}+\bfm{F}[\bfm{u},\bfm{p}]=\bfm{f}(t)$, with mass matrix $\bfm{M}$, damping matrix $\bfm{D}$, internal forces $\bfm{F}$ carrying the trainable parameters $\bfm{p}$, external drive $\bfm{f}$, and initial conditions $\bfm{u}(0)=\bfm{u}_0$, $\dot{\bfm{u}}(0)=\bfm{v}_0$. The recorded trajectory and terminal state induce the cost $\chi[\bfm{u}]=\psi[\mathbb{P}\bfm{u}(T),\mathbb{P}\dot{\bfm{u}}(T)]+\int_0^T\theta[\mathbb{P}\bfm{u}(t)]\,dt$, where $\mathbb{P}$ projects onto the measured output degrees of freedom and $\psi$ and $\theta$ are the terminal and trajectory losses. Evaluating the derivatives of the losses provides the drive and initial conditions for the adjoint field $\bfm{a}(t)$. The adjoint state equation involves the transposes of the mass and damping matrices and the linearized internal forces $\bfm{F}_{\bfm{u}}^T$ along the time-reversed forward trajectory $\bfm{u}(T-t)$; in general it therefore corresponds to a different physical system than the one at hand and must, in most cases, be simulated digitally. If the forward pass is obtained by a physical experiment and the adjoint pass via a digital twin, this is known as physics-aware training~\cite{wright2022deep,oguz2024optical}. Finally, the gradient $d\chi/d\bfm{p}$ is evaluated from overlap integrals of the recorded forward and adjoint fields, and the physical parameters are updated.}
    \label{fig:general_adjoint_method}
\end{figure}

For concreteness, consider a second-order real-valued system
\begin{equation}\label{eq:adjoint-method-second-order-state}
    \bfm{M}(\bfm{p}_M)\ddot{\bfm{u}}(t)
    +
    \bfm{D}(\bfm{p}_D)\dot{\bfm{u}}(t)
    +
    \bfm{F}\bigl[\bfm{u}(t),\bfm{p}_F,t\bigr]
    -
    \bfm{f}(t,\bfm{p}_f)
    =
    \bfm{0},
    \qquad
    \bfm{u}(0)=\bfm{u}_0(\bfm{p}_{u_0}),
    \qquad
    \dot{\bfm{u}}(0)=\bfm{v}_0(\bfm{p}_{v_0}).
\end{equation}
Here $\bfm{M}$ is the invertible mass matrix, $\bfm{D}$ the damping or gain matrix, $\bfm{F}$ the internal force, which carries the constitutive law of the system and is in general nonlinear in $\bfm{u}$ and explicitly time dependent, and $\bfm{f}$ the external drive; $\bfm{u}_0$ and $\bfm{v}_0$ are the initial state and its rate of change. We use mechanical language throughout, but the same three operators appear as the inertial, dissipative, and restoring terms of electrical, optical, and other physical platforms. Each operator carries its own block of trainable parameters, collected into $\bfm{p}=(\bfm{p}_M,\bfm{p}_D,\bfm{p}_F,\bfm{p}_f,\bfm{p}_{u_0},\bfm{p}_{v_0})$. The constrained optimization problem is
\begin{equation}\label{eq:abstract-opt}
    \min_{\bfm{p}}\;\chi[\bfm{u}]
    \qquad
    \text{subject to}
    \qquad
    \eqref{eq:adjoint-method-second-order-state}.
\end{equation}
The adjoint method replaces the differentiation of the full trajectory $\bfm{u}(t;\bfm{p})$ with respect to every parameter by a single adjoint solve. Written in forward time, the adjoint field $\bfm{a}(t)$ satisfies (see Appendix~\ref{apA})
\begin{equation}\label{eq:adjoint-method-second-order-adjoint}
    \bfm{M}^T\ddot{\bfm{a}}(t)
    +
    \bfm{D}^T\dot{\bfm{a}}(t)
    +
    \bfm{F}_{\bfm{u}}^T
    \bigl[\bfm{u}(T-t),\bfm{p}_F,T-t\bigr]\bfm{a}(t)
    +\bfm{f}_{\text{adj}}(t)
    =
    \bfm{0},
\end{equation}
where $(\cdot)_\bfm{x}$ denotes a partial derivative, i.e. $\bfm{F}_{\bfm{u}} := \partial \bfm{F}/\partial\bfm{u}$ (in the parameters, $\bfm{p}_X$, the subscripts are labels).
The adjoint source is given by
\begin{equation}\label{eq:adjoint-source}
    \bfm{f}_{\text{adj}}(t)=\mathbb{P}\theta_{\bfm{u}}
    \bigl[\mathbb{P}\bfm{u}(T-t)\bigr]
\end{equation}
and the adjoint initial conditions are
\begin{align}
    \bfm{M}^T\bfm{a}(0)
    &=
    -\,\mathbb{P}\psi_{\dot{\bfm{u}}}
    \bigl[\mathbb{P}\bfm{u}(T),\mathbb{P}\dot{\bfm{u}}(T)\bigr],
    \label{eq:adjoint-method-second-order-adjoint-ic1}\\
    \bfm{M}^T\dot{\bfm{a}}(0)
    &=
    -\,\bfm{D}^T\bfm{a}(0)
    -\,\mathbb{P}\psi_{\bfm{u}}
    \bigl[\mathbb{P}\bfm{u}(T),\mathbb{P}\dot{\bfm{u}}(T)\bigr].
    \label{eq:adjoint-method-second-order-adjoint-ic2}
\end{align}
The derivative of the trajectory loss, $\mathbb{P}\theta_{\bfm{u}}$, is therefore a forcing term in the adjoint experiment. The derivatives of the final-state loss, $\mathbb{P}\psi_{\bfm{u}}$ and $\mathbb{P}\psi_{\dot{\bfm{u}}}$, set the adjoint initial displacement and velocity. If there is no trajectory loss, the adjoint is unforced; if there is no terminal loss, the adjoint starts from zero.

Once $\bfm{a}(t)$ is known, the gradient is obtained from overlap integrals (see Appendix~\ref{apA:ss:gradient})
\begin{align}
    \frac{d\chi}{d\bfm{p}_M}
    &=
    \int_0^T
    \left\langle
        \bfm{a}(T-t),
        \bfm{M}_{\bfm{p}_M}\ddot{\bfm{u}}(t)
    \right\rangle dt,
    \label{eq:adjoint-method-gradient-pM}\\
    \frac{d\chi}{d\bfm{p}_D}
    &=
    \int_0^T
    \left\langle
        \bfm{a}(T-t),
        \bfm{D}_{\bfm{p}_D}\dot{\bfm{u}}(t)
    \right\rangle dt,
    \label{eq:adjoint-method-gradient-pD}\\
    \frac{d\chi}{d\bfm{p}_F}
    &=
    \int_0^T
    \left\langle
        \bfm{a}(T-t),
        \bfm{F}_{\bfm{p}_F}\bigl[\bfm{u}(t),\bfm{p}_F,t\bigr]
    \right\rangle dt,
    \label{eq:adjoint-method-gradient-pF}\\
    \frac{d\chi}{d\bfm{p}_f}
    &=
    -\int_0^T
    \left\langle
        \bfm{a}(T-t),
        \bfm{f}_{\bfm{p}_f}(t,\bfm{p}_f)
    \right\rangle dt,
    \label{eq:adjoint-method-gradient-pf}\\
    \frac{d\chi}{d\bfm{p}_{u_0}}
    &=
    -\,\left\langle
    \bigl[
        \bfm{M}^T\dot{\bfm{a}}(T)+\bfm{D}^T\bfm{a}(T)\bigr],\bfm{u}_{0,\bfm{p}_{u_0}} \right\rangle
    ,
    \label{eq:adjoint-method-gradient-u0}\\
    \frac{d\chi}{d\bfm{p}_{v_0}}
    &=
    -\,\left\langle\bfm{M}^T\bfm{a}(T),\bfm{v}_{0,\bfm{p}_{v_0}} \right\rangle,
    \label{eq:adjoint-method-gradient-v0}
\end{align}
where $ \left\langle \bfm{x},\bfm{y} \right\rangle = \bfm{x}^T\bfm{y}$ is the Euclidean inner product. 
Thus the gradient with respect to dynamical parameters is an overlap between the adjoint field and the parameter sensitivity of the respective forward operators. The gradient with respect to initial-condition parameters is an endpoint overlap. In particular, if the initial displacement and velocity themselves are trainable, then
\begin{equation}
    \frac{d\chi}{d\bfm{u}_0}
    =
    -\bfm{M}^T\dot{\bfm{a}}(T)-\bfm{D}^T\bfm{a}(T),
    \qquad
    \frac{d\chi}{d\bfm{v}_0}
    =
    -\bfm{M}^T\bfm{a}(T).
\end{equation}

\paragraph{Interpretation.} The overlap integrals show that the time-reversed adjoint field $\bfm{a}(T-t)$ encodes the sensitivity of the loss with respect to a force applied at time $t$ during the forward evolution. All parameters act in this way. Writing the state equation as $\bfm{M}\ddot{\bfm{u}}+\bfm{D}\dot{\bfm{u}}+\bfm{F}=\bfm{f}$, a parameter change $\delta\bfm{p}$ perturbs the net force acting on the system along the forward trajectory by
\begin{equation}\label{eq:effective-force-perturbation}
    \delta\bfm{f}(t)
    =
    \bfm{f}_{\bfm{p}_f}\delta\bfm{p}_f
    -\bfm{M}_{\bfm{p}_M}\ddot{\bfm{u}}(t)\,\delta\bfm{p}_M
    -\bfm{D}_{\bfm{p}_D}\dot{\bfm{u}}(t)\,\delta\bfm{p}_D
    -\bfm{F}_{\bfm{p}_F}\bigl[\bfm{u}(t),\bfm{p}_F,t\bigr]\,\delta\bfm{p}_F,
\end{equation}
the internal contributions entering with the opposite sign because they act against the drive. The trajectory gradients~\eqref{eq:adjoint-method-gradient-pM}--\eqref{eq:adjoint-method-gradient-pf} then collapse into the single statement $\delta\chi=-\int_0^T\langle\bfm{a}(T-t),\delta\bfm{f}(t)\rangle\,dt$.
If the force change aligns with the adjoint field, the cost therefore decreases at the largest rate per unit of forcing, while a force orthogonal to it leaves the cost unchanged. The endpoint overlaps~\eqref{eq:adjoint-method-gradient-u0} and~\eqref{eq:adjoint-method-gradient-v0} read the same way for the initial conditions.

\paragraph{The arrow of time.} As discussed in Appendix~\ref{apA:ss:forward-time}, the adjoint state as it emerges from the derivation evolves backward in time.
The equations above already incorporate the change of variable $t\mapsto T-t$ that recasts this evolution in forward time; this will become crucial for the physical realization of the adjoint field in Sec.~\ref{s:physical-adjoints}.

\paragraph{Algorithm.} The algorithm is summarized in Fig.~\ref{fig:general_adjoint_method}. One first solves the forward problem for the forward field $\bfm{u}(t)$, either digitally or, in physics-aware training, on the physical system itself. Next, the adjoint problem---which in general differs from the forward problem through the transposes and the linearization---is solved digitally for $\bfm{a}(t)$. The gradient then follows from the overlap integrals in~\eqref{eq:adjoint-method-gradient-pM}--\eqref{eq:adjoint-method-gradient-v0}, and the parameters are finally updated according to a chosen optimization algorithm.

\paragraph{A brief note on computational complexity.} The advantage of the adjoint method in digital optimization is that the cost of this gradient is essentially independent of the number of parameters~\cite{Giles2000,Plessix2006,fichtner2010full}. Finite-difference methods and forward sensitivity analysis require a separate solve for every entry of $\bfm{p}$, so their computational cost grows linearly with the parameter count $\dim(\bfm{p})$. The adjoint method, by contrast, requires only two trajectory solves regardless
of how many parameters are trained: one forward solve for $\bfm{u}(t)$ and one adjoint solve for $\bfm{a}(t)$. 
All gradient components are then read off from the overlap integrals~\eqref{eq:adjoint-method-gradient-pM}--\eqref{eq:adjoint-method-gradient-v0}, which are cheap because the only ingredients they require are the parameter sensitivities of the forward operators, such as $\bfm{M}_{\bfm{p}_M}$, $\bfm{D}_{\bfm{p}_D}$, and $\bfm{F}_{\bfm{p}_F}$ which are typically sparse, structured, and available in closed form from the model itself (see Sec.~\ref{s:local-learning-rules}).

\paragraph{How to test the gradient?} To test any adjoint gradient implementation one can employ a Taylor test. We detail this procedure in Appendix~\ref{a:taylor-test}. 

\paragraph{Forward and adjoint systems are different.} The adjoint construction is exact and fully general: it applies to any physical system, linear or nonlinear, conservative or dissipative. The adjoint problem~\eqref{eq:adjoint-method-second-order-adjoint} involves transposes and derivatives of the internal forces, and is hence in general \emph{different} from the physical system that evolves with the forward state equation~\eqref{eq:adjoint-method-second-order-state}.

\section{What is computed physically, under what conditions and how: an overview}\label{s:physical-question}

The question we ask in this article is: under what conditions and how can the adjoint field be obtained from a physical experiment in the \emph{same} system? By the same system we mean the same hardware and trainable parameters; we may only change initial conditions and applied sources, and reverse explicit time dependences.

All constructions below have the same form: run the forward dynamics and record $\bfm{u}(t)$; construct adjoint sources (Eq.~\eqref{eq:adjoint-source}) and initial conditions (Eqs.~\eqref{eq:adjoint-method-second-order-adjoint-ic1}, \eqref{eq:adjoint-method-second-order-adjoint-ic2}) from the loss derivatives; obtain the adjoint field $\bfm{a}(t)$ from a second physical experiment; then evaluate the gradient by an overlap between the measured forward/adjoint fields and known parameter sensitivities (Eqs.~\eqref{eq:adjoint-method-gradient-pM}--\eqref{eq:adjoint-method-gradient-v0}). 

The distinction between linear and nonlinear systems is the central organizing principle of the paper. Across all systems we consider, the conditions and algorithms can be summarized on a high level as:

\begin{mdframed}[style=resultstyle]
\noindent\textbf{Sufficient physical adjoint conditions and constructions.}
\begin{itemize}
    \item \textbf{Linear systems:} Reciprocity, or Onsager reciprocity in an appropriate metric, is sufficient; Hermiticity is not required, so loss and gain are permitted. The adjoint field is obtained by a direct finite-amplitude run of the same physical system with adjoint sources and initial conditions and any explicit time dependence played in reverse.
    \item \textbf{Nonlinear trajectory systems:} The forward evolution must be recoverable by a ($\mathcal{PT}$-) time-reversal operation solely on the final state and the linearized dynamics around the reversed trajectory must be self-adjoint in the sense the adjoint equation demands. The adjoint field is then obtained as the infinitesimal response of a nudged time-reversed trajectory. Explicit time dependences are played in reverse.
    \item \textbf{Stationary linear problems.}
    In linear (Onsager) reciprocal fixed-point or Helmholtz-type systems the adjoint field can be obtained directly with a single finite-amplitude experiment.
    \item \textbf{Stationary nonlinear problems.}
    In nonlinear stationary systems, the exact adjoint is obtained in the infinitesimal limit from the difference between a free solution and a nudged solution, provided the system relaxes to the steady state and the tangent operators there are self-adjoint in the same sense. This is Equilibrium Propagation~\cite{scellier2017equilibrium}.
\end{itemize}
\noindent For real fields this self-adjointness is reciprocity, $\bfm{F}_{\bfm{u}}^T=\bfm{F}_{\bfm{u}}$. For complex fields the two Wirtinger tangent operators are constrained separately. In Sec.~\ref{s:symmetry-adjoints} we show that in linear systems the reciprocity condition can be relaxed to a more general intertwining condition that allows us to extend physical backpropagation to certain types of ``twisted-reciprocal'' non-Hermitian systems.
\end{mdframed}

The conditions formulated here are sufficient, but not necessary. In particular, there may be special cases in which an additional transformation or rescaling makes the adjoint field physically accessible in the same device, even when the stated conditions are not strictly satisfied. One example thereof is developed in Sec.~\ref{s:symmetry-adjoints}, where a spatial symmetry restores physical access to the adjoint field in a class of linear systems that are not reciprocal at all.

Table~\ref{tab:physical-adjoint-map} organizes the cases treated in the remainder of the article along these lines, and records for each of them whether the adjoint field is available on the device and which experiment or proposal in the literature realizes it.

%
%

\definecolor{yescol}{RGB}{20,110,60}
\definecolor{nocol}{RGB}{160,35,35}
\definecolor{newcol}{RGB}{190,120,20}
\newcommand{\adjyes}{\textcolor{yescol}{\textbf{yes}}}
\newcommand{\adjno}{\textcolor{nocol}{\textbf{no}}}
\newcommand{\adjnew}{\textcolor{newcol}{$\bigstar$}}
\newcommand{\adjbranch}[1]{%
  \multicolumn{6}{@{}>{\RaggedRight\arraybackslash}p{0.95\adjtw}@{}}{\hspace{1em}\textbf{#1}}\\[2pt]}
\newcommand{\adjcase}[1]{\leavevmode\hangindent=1em\hangafter=0\hspace{1em}#1}


\newlength{\adjtw}   
\newsavebox{\adjhalfbox}
\newcommand{\adjhalf}[2]{%
  \sbox{\adjhalfbox}{#2}%
  \mbox{%
    \fbox{\parbox[c][\dimexpr\ht\adjhalfbox+\dp\adjhalfbox-2\fboxsep-2\fboxrule\relax][c]{1.5em}{%
      \centering\rotatebox[origin=c]{90}{\footnotesize\bfseries #1}}}%
    \hspace{6pt}%
    \usebox{\adjhalfbox}%
  }%
}

\makeatletter
\newcommand{\adjsmallcaption}{%
  \renewcommand{\@makecaption}[2]{%
    \vskip\abovecaptionskip
    \footnotesize
    \sbox\@tempboxa{##1: ##2}%
    \ifdim \wd\@tempboxa >\hsize
      ##1: ##2\par
    \else
      \global\@minipagefalse
      \hb@xt@\hsize{\hfil\box\@tempboxa\hfil}%
    \fi
    \vskip\belowcaptionskip}}
\makeatother

\afterpage{%
\begin{landscape}
\begin{table}[p]
\centering
\fontsize{8.5}{10}\selectfont
\setlength{\tabcolsep}{4pt}
\setlength{\fboxsep}{3pt}
\setlength{\adjtw}{\dimexpr\linewidth-25pt\relax}
\adjsmallcaption
\renewcommand{\arraystretch}{1.0}

\caption{The cases treated in this article. The first split is linear versus nonlinear,
the second is the structure that makes the adjoint field physically accessible. ``On device'' means the adjoint field, or its complex
conjugate, is produced by the hardware that ran the forward problem. A \adjnew\ marks a case available on device for which we are not aware of any in the literature. In all trajectory cases, $\bfm{K}$ and $\bfm{F}$ may carry an
explicit, externally programmable time dependence: the stated conditions must then
hold at every instant, and the schedule is replayed in reverse during the adjoint
run. Onsager reciprocity comes on top of this tree rather than
branching it: every row holds in the $\bfm{V}$-metric, with sources and nudges
pre-multiplied by $\bfm{V}$ and $\bfm{a}=\bfm{V}^{-1}\bfm{d}$ (Sec.~\ref{s:Onsager}).}
\label{tab:physical-adjoint-map}

\adjhalf{Linear systems}{%
\begin{tabular}{@{}
  >{\RaggedRight\arraybackslash}p{0.160\adjtw}
  >{\RaggedRight\arraybackslash}p{0.255\adjtw}
  >{\RaggedRight\arraybackslash}p{0.225\adjtw}
  >{\centering\arraybackslash}p{2.1em}
  >{\centering\arraybackslash}p{3.4em}
  >{\RaggedRight\arraybackslash}p{0.225\adjtw}@{}}
\toprule
\textbf{System} &
\textbf{Sufficient condition} &
\textbf{Adjoint construction} &
\textbf{Sec.} &
\textbf{On device} &
\textbf{Representative realization} \\
\midrule

\adjbranch{Reciprocal}

\adjcase{Second order, real} &
$\bfm{M}^T=\bfm{M}$ and $\bfm{D}^T=\bfm{D}$, both constant in time, and $\bfm{K}^T=\bfm{K}$; damping is no obstacle &
Forward drive and initial conditions (ICs) swapped for
adjoint source and ICs and applied at full amplitude &
\ref{ss:second-order-linear} &
\adjyes &
Reciprocal physical media~\cite{hermans2015trainable} \\
\addlinespace[2pt]

\adjcase{First order, real} &
$\bfm{K}^T=\bfm{K}$ &
As above &
\ref{ss:first-order-linear} &
\adjyes &
\adjnew\ none; the trajectory rule of~\cite{berneman2026training} is approximate \\
\addlinespace[2pt]

\adjcase{Schr\"odinger type} &
$\bfm{K}^T=\bfm{K}$; reciprocity and not Hermiticity, so gain and loss are admissible &
As above; what propagates is the conjugate adjoint $\bfm{c}=\overline{\bfm{a}}$ &
\ref{ss:schrodinger-linear} &
\adjyes &
In situ optical backpropagation~\cite{zhou2020situ}; holographic networks~\cite{wagner1987multilayer} \\
\addlinespace[2pt]

\adjcase{Stationary, Helmholtz} &
$\bfm{K}^T=\bfm{K}$ &
Forward drive swapped for the adjoint source; applied at full amplitude &
\ref{ss:stationary-linear} &
\adjyes &
Photonic meshes~\cite{hughes2018training,pai2023experimentally}; wave scattering~\cite{wanjura2024fully,guillamon2025situ} \\
\addlinespace[2pt]

\adjcase{Reciprocal, with a hardware symmetry} &
$\bfm{K}^T=\bfm{K}$ and $\bfm{S}\bfm{K}\bfm{S}^{-1}=\bfm{K}$ with $\bfm{S}\mathbb{P}\bfm{S}^{-1}=\mathbb{P}^{\text{in}}$ &
The error signal enters where the data does and propagates forward; the adjoint follows from $\bfm{a}=\bfm{S}^{-1}\bfm{d}$ &
\ref{s:symmetry-adjoints} &
\adjyes &
Fully forward mode training~\cite{xue2024fully}, in static optics \\
\addlinespace[3pt]

\adjbranch{Twisted reciprocal}

\adjcase{Non-reciprocal $\bfm{K}$} &
Orthogonal $\bfm{S}$ with $\bfm{S}\bfm{K}_s\bfm{S}^{-1}=\bfm{K}_s$, $\bfm{S}\bfm{K}_a\bfm{S}^{-1}=-\bfm{K}_a$; e.g.\ a mirror-even Hatano--Nelson chain &
One run with relocated sources; the adjoint follows from $\bfm{a}=\bfm{S}^{-1}\bfm{d}$ &
\ref{s:symmetry-adjoints} &
\adjyes &
\adjnew\ none \\
\bottomrule
\end{tabular}}

\par\vspace{3pt}

\adjhalf{Nonlinear systems}{%
\begin{tabular}{@{}
  >{\RaggedRight\arraybackslash}p{0.160\adjtw}
  >{\RaggedRight\arraybackslash}p{0.255\adjtw}
  >{\RaggedRight\arraybackslash}p{0.225\adjtw}
  >{\centering\arraybackslash}p{2.1em}
  >{\centering\arraybackslash}p{3.4em}
  >{\RaggedRight\arraybackslash}p{0.225\adjtw}@{}}
\toprule

\adjbranch{Trajectories}

\adjcase{Second order, real, undamped} &
$\bfm{D}=\bfm{0}$, so a time-reversal mirror exists, and $\bfm{M}^T=\bfm{M}$ with $\bfm{F}_{\bfm{u}}^T=\bfm{F}_{\bfm{u}}$ along the reversed trajectory &
Infinitesimal response of a nudged time-reversed trajectory &
\ref{ss:second-order-nonlinear} &
\adjyes &
Hamiltonian echo backpropagation~\cite{lopez2023self,pourcel2025learning} \\
\addlinespace[2pt]

\adjcase{Second order, real, damped} &
None: time reversal turns damping into gain but the adjoint requires damping &
Obstructed &
\ref{ss:second-order-nonlinear} &
\adjno &
Contrastive trajectory rule~\cite{berneman2025equilibrium}, approximate \\
\addlinespace[2pt]

\adjcase{First order, real, overdamped} &
None: reversal requires $\bfm{F}\mapsto-\bfm{F}$, the adjoint $+\bfm{F}_{\bfm{u}}^T$ &
Obstructed; only the steady state survives &
\ref{ss:first-order-nonlinear} &
\adjno &
Rayleighian trajectory rule~\cite{berneman2026training}, approximate \\
\addlinespace[2pt]

\adjcase{Schr\"odinger, $\mathcal{PT}$-TRM} &
$\mathcal{PT}$ symmetry of $\bfm{F}$ and the parity-pulled-back tangent conditions; e.g. a Kerr medium then needs $\bfm{K}$ Hermitian with $\bfm{\Pi}\bfm{K}\bfm{\Pi}=\bfm{K}^T$, admitting complex couplings. $\bfm{\Pi}=\bfm{I}$ is ordinary time reversal with real symmetric $\bfm{K}$ &
Infinitesimal response of a nudged $\mathcal{PT}$-reversed trajectory, with the parity undone; gives $\bfm{c}=\overline{\bfm{a}}$ &
\ref{ss:schrodinger-nonlinear} &
\adjyes &
\adjnew\ none for $\bfm{\Pi}\neq\bfm{I}$; at $\bfm{\Pi}=\bfm{I}$~\cite{lopez2023self} \\
\addlinespace[3pt]

\adjbranch{Stationary problems}

\adjcase{Real fixed point (Equilibrium Propagation)} &
A stable fixed point, and $\bfm{F}_{\bfm{u}}^T=\bfm{F}_{\bfm{u}}$ there &
Difference between the free and the nudged equilibrium &
\ref{ss:stationary-nonlinear} &
\adjyes &
Equilibrium Propagation (EP)~\cite{scellier2017equilibrium} \\
\addlinespace[2pt]

\adjcase{Complex steady state (Complex-valued EP)} &
A stable steady state, with $\bfm{F}_{\bfm{u}}$ Hermitian and $\bfm{F}_{\overline{\bfm{u}}}$ symmetric &
As above &
\ref{ss:stationary-nonlinear} &
\adjyes &
\adjnew\ none; \cite{laborieux2022holomorphic} is holomorphic, \cite{cin2025training,sajnok2025near} approximate \\
\addlinespace[2pt]

\adjcase{Non-reciprocal tangent} &
The condition fails, so the nudged response no longer solves the adjoint equation &
Obstructed on the unmodified device &
\ref{ss:stationary-nonlinear} &
\adjno &
Approximate extensions~\cite{scellier2018generalization,laborieux2024improving}; \cite{scurria2026equilibrium} modifies the learning dynamics \\
\bottomrule
\end{tabular}}

\end{table}
\end{landscape}%
}

\section{Local learning rules and ``knowing the model'' vs. ``knowing the parameters''}\label{s:local-learning-rules}

If the adjoint field can be observed physically and the parameter sensitivity is local, the remaining overlap gives a local learning rule. Consider two coupled Duffing oscillators with internal force
\begin{equation}\label{eq:intro-duffing-force}
    \bfm{F}(\bfm{u},\bfm{p})
    =
    \begin{bmatrix}
        k_1u_1+\beta_1u_1^3+c(u_1-u_2)\\
        k_2u_2+\beta_2u_2^3+c(u_2-u_1)
    \end{bmatrix},
    \qquad
    \bfm{p}=(k_1,k_2,\beta_1,\beta_2,c).
\end{equation}
For a parameter in the first oscillator, for example $\beta_1$, the parameter sensitivity is given by
\begin{equation}\label{eq:intro-duffing-beta-sensitivity}
    \bfm{F}_{\beta_1}(\bfm{u})
    =
    \begin{bmatrix}
        u_1^3\\
        0
    \end{bmatrix}.
\end{equation}
The corresponding gradient contribution is therefore
\begin{equation}\label{eq:intro-duffing-beta-gradient}
    \frac{d\chi}{d\beta_1}
    =
    \int_0^T
    a_1(T-t)\,u_1(t)^3\,dt .
\end{equation}
Only the forward and adjoint trajectories of the first DOF are needed. Similarly, $d\chi/dk_1=\int_0^T a_1(T-t)u_1(t)\,dt$. The coupling parameter requires both oscillators:
\begin{equation}\label{eq:intro-duffing-c-gradient}
    \frac{d\chi}{dc}
    =
    \int_0^T
    \bigl(a_1(T-t)-a_2(T-t)\bigr)
    \bigl(u_1(t)-u_2(t)\bigr)\,dt .
\end{equation}
These examples illustrate what we mean by a \emph{local learning rule} (e.g.~\cite{stern2023learning,dillavou2022demonstration,scellier2017equilibrium}): the gradient with respect to a given parameter depends only on the forward and adjoint dynamics of the degree(s) of freedom that this parameter couples to. This locality follows directly from the parameter sensitivity of the internal force $\bfm{F}_{(\cdot)}$---and likewise of the mass, damping, and drive operators---being supported only on those degrees of freedom, so that the gradient overlap integral collapses onto them.

\paragraph{Not knowing the parameters.} A further simplification arises when the parameters enter the forward operators linearly, as they do in Eq.~\eqref{eq:intro-duffing-force}. The sensitivity $\bfm{F}_{\bfm{p}_F}$ is then independent of $\bfm{p}_F$, so the overlap integrals~\eqref{eq:intro-duffing-beta-gradient} and~\eqref{eq:intro-duffing-c-gradient} are assembled entirely from the measured forward and adjoint trajectories and never reference the parameter values themselves. In an experiment the gradient is thus evaluated at whatever parameter values the device happens to sit at, so slow drift, aging, and unknown calibration offsets are absorbed automatically instead of biasing the gradient estimate.

\paragraph{But knowing the model.} Knowing which observable to record, and which overlap integral belongs to which parameter, is a statement about the structure of $\bfm{F}$ and therefore still requires a model, in contrast to self-learning machines, which need none~\cite{scellier2022agnostic,dillavou2022demonstration,lopez2023self}.

\section{Onsager reciprocity}\label{s:Onsager}
We show that the gradient can be obtained physically if the forward dynamics or its linearization are reciprocal. Formally, reciprocity means that the forward operators (or their linearization) are symmetric. However, there is a more general notion of reciprocity under which gradients remain physically realizable: as a self-adjointness condition in the inner product natural to irreversible dynamics. Let $\bfm{V}=\bfm{V}^T\succ0$ (or Hermitian if complex-valued) be a constant Onsager mobility~\cite{onsager1931reciprocal,onsager1953fluctuations}, and define
\begin{equation}
    \langle \bfm{x},\bfm{y}\rangle_{\bfm{V}^{-1}}
    =
    \bfm{x}^T\bfm{V}^{-1}\bfm{y}.
\end{equation}
The adjoint of a linear operator $\bfm{A}$ in this inner product is
\begin{equation}
    \bfm{A}^{\dagger_{\bfm{V}^{-1}}}
    =
    \bfm{V}\bfm{A}^T\bfm{V}^{-1},
\end{equation}
since
$
\langle \bfm{A}\bfm{x},\bfm{y}\rangle_{\bfm{V}^{-1}}
=
\langle \bfm{x},\bfm{A}^{\dagger_{\bfm{V}^{-1}}}\bfm{y}\rangle_{\bfm{V}^{-1}}
$.
We call $\bfm{A}$ \emph{$\bfm{V}$-reciprocal} if it is self-adjoint in this Onsager metric,
\begin{equation}\label{eq:V-reciprocal-real}
    \bfm{V}\bfm{A}^T\bfm{V}^{-1}
    =
    \bfm{A}.
\end{equation}
With $\bfm{V}=\bfm{I}$, this reduces to the usual Euclidean reciprocity condition $\bfm{A}^T=\bfm{A}$.
Applied to the forward operators of a linear system, Eq.~\eqref{eq:V-reciprocal-real} is precisely Onsager reciprocity, and we call the linearized dynamics of a nonlinear system Onsager reciprocal if $\bfm{F}(\bfm{u}) = \bfm{V}E_\bfm{u}(\bfm{u})$.

\paragraph{Example: resistor--capacitor networks with heterogeneous capacitances.}
By Kirchhoff's current law, the node voltages $\bfm{u}(t)$ of a passive resistor network with node capacitances and external current injections $\bfm{j}(t)$ evolve according to
\begin{equation}\label{eq:onsager-rc-network}
    \bfm{C}\dot{\bfm{u}}(t)+\bfm{G}\bfm{u}(t)-\bfm{j}(t)=\bfm{0},
\end{equation}
where $\bfm{G}=\bfm{G}^T$ is the conductance matrix and $\bfm{C}$ is the diagonal, positive matrix of node capacitances. Written in the first-order form of Sec.~\ref{s:first-order}, $\dot{\bfm{u}}+\bfm{K}\bfm{u}-\bfm{C}^{-1}\bfm{j}=\bfm{0}$ with $\bfm{K}=\bfm{C}^{-1}\bfm{G}$. Unless all capacitances are identical, $\bfm{K}$ is \emph{not} Euclidean-symmetric, $\bfm{K}^T\neq\bfm{K}$. It is, however, exactly $\bfm{V}$-reciprocal with the constant mobility $\bfm{V}=\bfm{C}^{-1}$: Eq.~\eqref{eq:V-reciprocal-real} holds because $\bfm{V}\bfm{K}^T\bfm{V}^{-1}=\bfm{C}^{-1}\bfm{G}^T=\bfm{C}^{-1}\bfm{G}=\bfm{K}$; equivalently, $\bfm{K}\bfm{u}=\bfm{V}E_{\bfm{u}}$ derives from the energy $E(\bfm{u})=\tfrac12\bfm{u}^T\bfm{G}\bfm{u}$ through the mobility $\bfm{V}=\bfm{C}^{-1}$. If the linear resistors are replaced by diodes, transistors or memristive elements whose constitutive laws derive from a co-content function $\mathcal{E}(\bfm{u})$~\cite{dillavou2024machine}, the internal force $\bfm{F}(\bfm{u})=\bfm{C}^{-1}\mathcal{E}_{\bfm{u}}(\bfm{u})=\bfm{V}\mathcal{E}_{\bfm{u}}(\bfm{u})$ retains the constant-mobility gradient-flow structure, and the nonlinear fixed-point construction of Sec.~\ref{ss:stationary-nonlinear} (Equilibrium Propagation) applies.

\bigskip
For the simplicity of the presentation, we set $\bfm{V}=\bfm{I}$ throughout the main text. If $\bfm{V} \neq \bfm{I}$, the hardware propagates the $\bfm{V}$-weighted adjoint field instead. We show in Sec.~\ref{s:symmetry-adjoints} that this is an instance of a general intertwining construction, and that in practice the $\bfm{V}$-weighting is largely applied by the hardware itself (see also Appendix~\ref{a:onsager-second-order}). In Sec.~\ref{s:symmetry-adjoints} we also discuss the linear case in more detail and show that neither the symmetry nor the positivity of $\bfm{V}$ is required: for linear systems, any constant, known, invertible intertwiner gives physical access to the adjoint field. In the nonlinear setting the two properties play distinct roles. Symmetry is what makes the linearized dynamics of the gradient flow $\bfm{F}=\bfm{V}E_{\bfm{u}}$ $\bfm{V}$-reciprocal. Positivity is used only in the fixed-point case, where it makes the relaxation a gradient descent (Sec.~\ref{ss:stationary-nonlinear}).
\section{Second-order real-valued systems}\label{s:physical-adjoints}

Consider a second-order, real-valued and non-autonomous system introduced in Sec.~\ref{s:problem-statement}. The explicit time dependence is assumed to be externally programmable, so that it can be replayed as $t\mapsto T-t$ during the adjoint experiment.
We wish to shape its trajectory such that it minimizes the cost function~\eqref{eq:cost-trajectory-terminal}. The adjoint field required to compute the loss function's gradients Eqs.~\eqref{eq:adjoint-method-gradient-pM}--\eqref{eq:adjoint-method-gradient-v0} can be obtained on the same system under the following sufficient conditions:

\begin{phystheorem}[Sufficient conditions for physical adjoints for second-order real-valued systems]\label{thm:physical-adjoint}
The adjoint field can be obtained on the same hardware in the following cases:
\begin{itemize}
    \item \textbf{Linear: Reciprocity.} Suppose
    \[
        \bfm{F}[\bfm{u},\bfm{p}_F,t]=\bfm{K}(\bfm{p}_K,t)\bfm{u},
    \]
    and reciprocity holds, i.e.,
    \begin{equation}\label{eq:main-second-order-linear-V}
        \bfm{M}^T=\bfm{M},\qquad
        \bfm{D}^T=\bfm{D},\qquad
        \bfm{K}(\bfm{p}_K,t)^T=\bfm{K}(\bfm{p}_K,t),
    \end{equation}
    then the adjoint is propagated by the same linear hardware, with the explicit time dependence replayed in reverse and with adjoint source~\eqref{eq:adjoint-source} and initial conditions~\eqref{eq:adjoint-method-second-order-adjoint-ic1}, \eqref{eq:adjoint-method-second-order-adjoint-ic2}.

\item \textbf{Nonlinear: Existence of a Time Reversal Mirror (TRM) and reciprocity of the
linearized dynamics.} Suppose the time-reversed trajectory can be generated on the same
hardware by an operation on the final state only (applying the TRM), together with the
reversal of explicit time dependences. This means initializing with $\bfm{u}(T)$ and
$-\dot{\bfm{u}}(T)$. If, in addition,
\begin{equation}\label{eq:main-second-order-nonlinear-V}
    \bfm{D}=\bfm{0},
    \qquad
    \bfm{M}^T=\bfm{M},
    \qquad
    \bfm{F}_{\bfm{u}}\bigl[\bfm{u}(t),\bfm{p}_F,t\bigr]^T
    =
    \bfm{F}_{\bfm{u}}\bigl[\bfm{u}(t),\bfm{p}_F,t\bigr]
    \quad \text{for all } t\in[0,T],
\end{equation}
then the adjoint is obtained as the infinitesimal response of a nudged time-reversed trajectory.

\end{itemize}
\end{phystheorem}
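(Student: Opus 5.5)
The proof compares the adjoint equation~\eqref{eq:adjoint-method-second-order-adjoint} with the forward equation~\eqref{eq:adjoint-method-second-order-state} and shows that, under the stated symmetries, they coincide up to relabelling of sources, initial conditions and the direction of explicit time dependences.

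\textbf{Linear case.} With $\bfm{F}=\bfm{K}(\bfm{p}_K,t)\bfm{u}$ we have $\bfm{F}_{\bfm{u}}=\bfm{K}$. This operator is independent of the forward trajectory, so the argument $\bfm{u}(T-t)$ in~\eqref{eq:adjoint-method-second-order-adjoint} drops out and only the explicit time argument $T-t$ remains. Using~\eqref{eq:main-second-order-linear-V}, the adjoint equation becomes
\begin{equation*}
\bfm{M}\ddot{\bfm{a}}+\bfm{D}\dot{\bfm{a}}+\bfm{K}(\bfm{p}_K,T-t)\bfm{a}=-\bfm{f}_{\text{adj}}(t).
\end{equation*}
This is the forward state equation of the same hardware with three changes: the schedule of $\bfm{K}$ is replayed as $t\mapsto T-t$, the drive is replaced by $-\bfm{f}_{\text{adj}}$, and the initial conditions are set by~\eqref{eq:adjoint-method-second-order-adjoint-ic1}--\eqref{eq:adjoint-method-second-order-adjoint-ic2}. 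The latter are admissible initial data because $\bfm{M}$ is invertible. Two points deserve explicit mention. First, $\bfm{D}$ enters with the same sign, so damping stays damping; this is why damping is no obstacle. Second, linearity means no small-amplitude limit is needed, so a single finite-amplitude run suffices.

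\textbf{Nonlinear case.} The plan has four steps.

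Step~1 (mirror). With $\bfm{D}=\bfm{0}$, if $\bfm{u}$ solves the forward equation then $\tilde{\bfm{u}}(t)=\bfm{u}(T-t)$ solves
\begin{equation*}
\bfm{M}\ddot{\tilde{\bfm{u}}}+\bfm{F}[\tilde{\bfm{u}},\bfm{p}_F,T-t]=\bfm{f}(T-t),
\end{equation*}
with $\tilde{\bfm{u}}(0)=\bfm{u}(T)$ and $\dot{\tilde{\bfm{u}}}(0)=-\dot{\bfm{u}}(T)$. This holds because only second derivatives appear. By uniqueness of the initial value problem, the TRM run on the hardware reproduces $\tilde{\bfm{u}}$.

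Step~2 (nudge). Run the same reversed experiment with perturbed initial data
\begin{equation*}
\tilde{\bfm{u}}_\epsilon(0)=\bfm{u}(T)+\epsilon\bfm{\alpha},\qquad \dot{\tilde{\bfm{u}}}_\epsilon(0)=-\dot{\bfm{u}}(T)+\epsilon\bfm{\beta},
\end{equation*}
and the additional drive $-\epsilon\bfm{f}_{\text{adj}}(t)$. Differentiating in $\epsilon$ at $\epsilon=0$ (smooth dependence on data, by standard ODE theory for $\bfm{F}$ of class $C^1$) gives $\bfm{w}=\partial_\epsilon\tilde{\bfm{u}}_\epsilon|_{0}$, which satisfies
\begin{equation*}
\bfm{M}\ddot{\bfm{w}}+\bfm{F}_{\bfm{u}}[\bfm{u}(T-t),\bfm{p}_F,T-t]\bfm{w}=-\bfm{f}_{\text{adj}}(t).
\end{equation*}

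Step~3 (identify with the adjoint). By~\eqref{eq:main-second-order-nonlinear-V}, $\bfm{F}_{\bfm{u}}=\bfm{F}_{\bfm{u}}^T$ along the trajectory and $\bfm{M}=\bfm{M}^T$, so the equation for $\bfm{w}$ is exactly~\eqref{eq:adjoint-method-second-order-adjoint} with $\bfm{D}=\bfm{0}$. Choosing $\bfm{M}\bfm{\alpha}$ and $\bfm{M}\bfm{\beta}$ equal to the right-hand sides of~\eqref{eq:adjoint-method-second-order-adjoint-ic1}--\eqref{eq:adjoint-method-second-order-adjoint-ic2} (with $\bfm{D}=\bfm{0}$), uniqueness gives $\bfm{w}=\bfm{a}$.

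Step~4 (measurement). Experimentally, $\bfm{a}$ is estimated as $\bigl(\tilde{\bfm{u}}_\epsilon-\tilde{\bfm{u}}_0\bigr)/\epsilon$ from two runs, or by a symmetric difference, with an $O(\epsilon)$ (respectively $O(\epsilon^2)$) error. This is why the nudging must be infinitesimal.

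\textbf{Main obstacle.} The delicate step is Step~1 together with its interaction with Step~2: the TRM must reproduce the reversed trajectory exactly so that linearization occurs around $\bfm{u}(T-t)$. This breaks if $\bfm{D}\neq\bfm{0}$. Reversal sends $\bfm{D}\mapsto-\bfm{D}$, whereas the adjoint requires $+\bfm{D}^T$, and this sign mismatch is what rules out damping. It also breaks through sensitivity of chaotic trajectories, which is an issue of physical stability rather than of the formal argument. I would state smoothness and uniqueness assumptions explicitly and note that the formal argument is exact only in the limit $\epsilon\to0$.
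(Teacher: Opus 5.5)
Your proposal is correct and follows the paper's proof essentially step for step. In the linear case you substitute reciprocity into the forward-time adjoint equation~\eqref{eq:adjoint-method-second-order-adjoint}, which recovers the forward hardware with drive $-\bfm{f}_{\text{adj}}$ and a reversed schedule. In the nonlinear case you use $\bfm{D}=\bfm{0}$ to reproduce $\bfm{u}(T-t)$, nudge with the adjoint source and the $\bfm{M}^{-1}$-scaled terminal-loss data, and identify the linearized response with $\bfm{a}$ via $\bfm{M}^T=\bfm{M}$, $\bfm{F}_{\bfm{u}}^T=\bfm{F}_{\bfm{u}}$ and uniqueness.

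There are only two cosmetic differences. You differentiate in $\epsilon$ directly rather than expanding to $\mathcal{O}(\epsilon^2)$ as the paper does. Also, the unnudged reversed run need not be performed separately, since it equals the recorded $\bfm{u}(T-t)$.
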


\paragraph{$\mathcal{PT}$-TRM.} The existence of TRM can be generalized if instead of time-reversal symmetry one has $\mathcal{PT}$ symmetry. We discuss this in Theorem~\ref{thm:schrodinger} for nonlinear Schr\"odinger equations; for second-order real-valued systems the same logic covers undamped gyroscopic dynamics, $\bfm{M}\ddot{\bfm{u}}+\bfm{G}\dot{\bfm{u}}+\bfm{F}[\bfm{u}]=\bfm{f}$ with $\bfm{G}^T=-\bfm{G}$, as realized by Coriolis forces in rotating frames, Lorentz forces on charged lattices, and gyroscopic metamaterials~\cite{nash2015topological}. A real orthogonal mirror $\bfm{\Pi}=\bfm{\Pi}^T=\bfm{\Pi}^{-1}$ that anticommutes with $\bfm{G}$ and leaves the remaining dynamics invariant along the trajectory,
\[
    \bfm{\Pi}\bfm{G}\bfm{\Pi}=-\bfm{G},
    \qquad
    \bfm{\Pi}\bfm{M}\bfm{\Pi}=\bfm{M},
    \qquad
    \bfm{F}\bigl[\bfm{\Pi}\bfm{u}(t),\bfm{p}_F,t\bigr]
    =
    \bfm{\Pi}\bfm{F}\bigl[\bfm{u}(t),\bfm{p}_F,t\bigr],
\]
implements the TRM as $\bfm{w}(t)=\bfm{\Pi}\bfm{u}(T-t)$, with the drive replayed as $\bfm{f}(t)\mapsto\bfm{\Pi}\bfm{f}(T-t)$. Theorem~\ref{thm:physical-adjoint} then applies with the reciprocity condition pulled back by $\bfm{\Pi}$, $\bfm{\Pi}\bfm{F}_{\bfm{u}}[\bfm{\Pi}\bfm{u}(t),\bfm{p}_F,t]\bfm{\Pi}=\bfm{F}_{\bfm{u}}[\bfm{u}(t),\bfm{p}_F,t]^T$.\footnote{Balanced gain and loss, the usual sense of $\mathcal{PT}$-symmetric mechanics~\cite{bender2013observation}, is excluded: parity-odd damping also admits a $\mathcal{PT}$-TRM, but matching the reversed adjoint to the hardware requires $\bfm{\Pi}\bfm{N}^T\bfm{\Pi}=\bfm{N}$ for the velocity coupling $\bfm{N}$, which together with $\bfm{\Pi}\bfm{N}\bfm{\Pi}=-\bfm{N}$ forces $\bfm{N}^T=-\bfm{N}$: gyroscopic, not dissipative.}

\paragraph{Onsager reciprocity.} The extension to Onsager reciprocal systems as introduced in Sec.~\ref{s:Onsager} can be found in Appendix~\ref{a:onsager-second-order}.

\paragraph{Non-autonomous dynamics and reciprocity.} In Theorem~\ref{thm:physical-adjoint} we allow the internal force $\bfm{F}$ to be a function of time, i.e. the parameters may now specify a time-dependent schedule rather than being static. The reciprocity condition is instantaneous, i.e. it must hold at every time step, and the time evolution must be played backwards in time during the adjoint experiment. On the other hand, $\bfm{M}$ and $\bfm{D}$ need to be constant in time because otherwise the adjoint state equation will include time derivatives of those matrices and so the forward and adjoint systems will be fundamentally different.


\subsection{Linear}\label{ss:second-order-linear}
The forward dynamics is
\begin{equation}\label{eq:second-order-linear-forward-simple}
    \bfm{M}\ddot{\bfm{u}}(t)
    +
    \bfm{D}\dot{\bfm{u}}(t)
    +
    \bfm{K}(t)\bfm{u}(t)
    -
    \bfm{f}(t)
    =
    \bfm{0}.
\end{equation}
In the reciprocal case where $\bfm{M}^T=\bfm{M}$, $\bfm{D}^T=\bfm{D}$, and $\bfm{K}(t)^T=\bfm{K}(t)$, the adjoint equation~\eqref{eq:adjoint-method-second-order-adjoint} becomes
\begin{equation}\label{eq:second-order-linear-adjoint-physical-form}
    \bfm{M}\ddot{\bfm{a}}(t)
    +
    \bfm{D}\dot{\bfm{a}}(t)
    +
    \bfm{K}(T-t)\bfm{a}(t)
    +
    \,\mathbb{P}\theta_{\bfm{u}}
    \bigl[\mathbb{P}\bfm{u}(T-t)\bigr] = \bfm{0}.
\end{equation}
Thus the adjoint experiment has exactly the same \emph{damped} second-order form as the forward experiment. This proves the linear part of Theorem~\ref{thm:physical-adjoint}. \qed

\paragraph{Physical adjoint algorithm.} One reuses the same hardware, replaces the external source by the adjoint source, replaces the initial displacement and velocity by the cost-derived adjoint initial conditions (no nudging required), and replays any explicit time dependence as $t\mapsto T-t$. In this linear reciprocal regime the same damped hardware propagates the adjoint field (see Fig.~\ref{fig:linear_vs_nonlinear}, a).

\paragraph{Remark.} If the adjoint source or the initial conditions are too strong, it may excite the system into a large amplitude regime where the linear equations no longer hold. In that case, the adjoint source and initial conditions can simply be rescaled by a constant, which rescales the norm of the gradient by the same constant.

\begin{figure}[t!]
    \centering
    \includegraphics[width=1\textwidth]{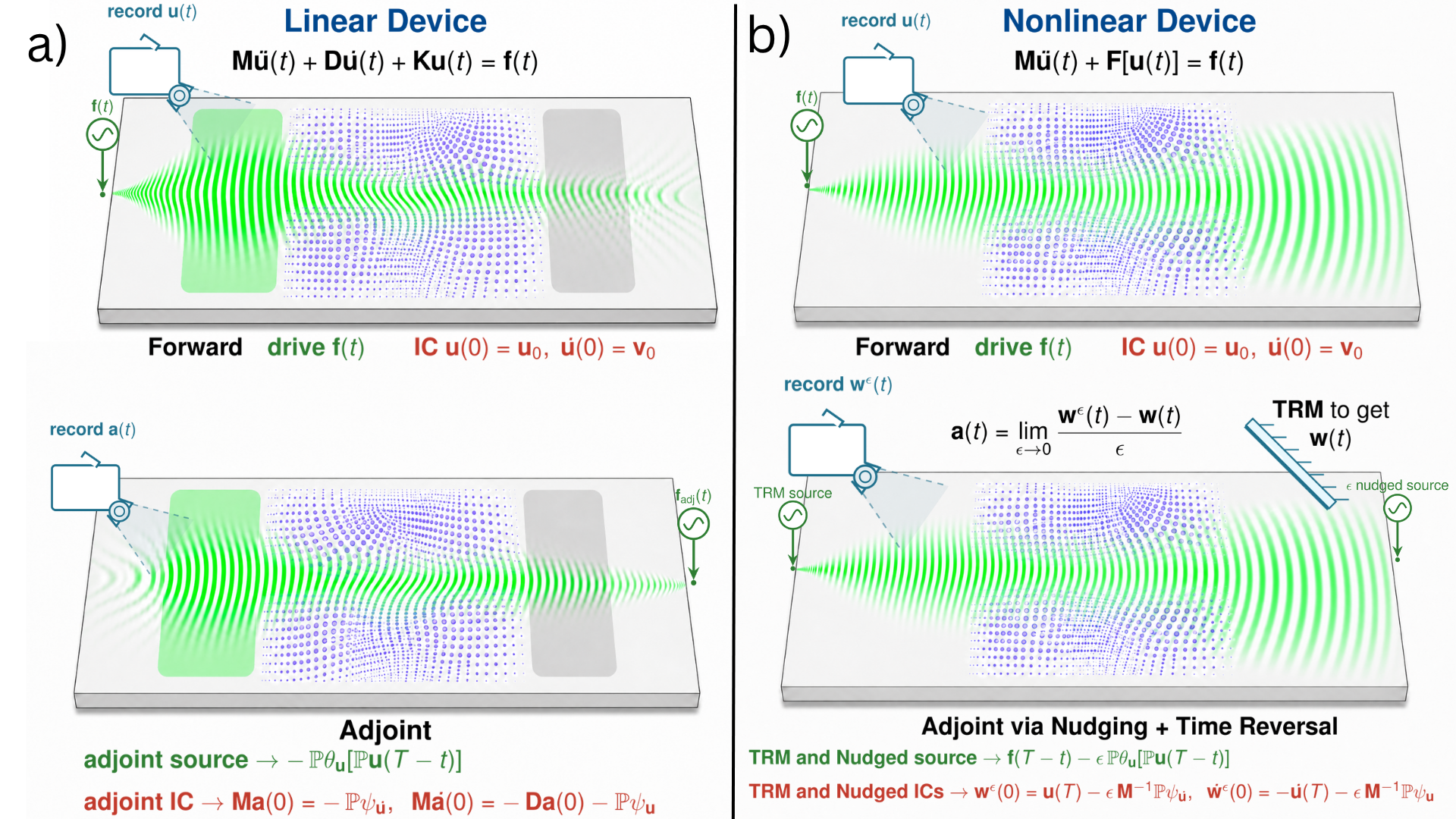}
    \caption{A schematic illustration of the experiments to obtain the adjoint field physically. For the linear devices (a), reciprocity allows the same device to be used, with the drive and ICs adjusted for the forward and adjoint field propagations. Damping and gain are allowed, so long as reciprocity is preserved.
    In the nonlinear case (b), the existence of a time reversal mirror (TRM)~\cite{fink1992time} together with reciprocity of the linearized system provide indirect access to the adjoint field in the same hardware. To obtain the adjoint field physically, simultaneously apply the TRM to the final state of the forward run and perturb the time-reversed trajectory with the adjoint drive and ICs. The adjoint field is then obtained as the (infinitesimal) differential response of the reverse trajectory to the nudge.}
    \label{fig:linear_vs_nonlinear}
\end{figure}

\paragraph{Literature.} In numerical wave-based inverse problems it is well known that the same operator governs the forward and the adjoint problem for reciprocal problems~\cite{fichtner2010full}. In physical learning, reciprocity has been used for backpropagation in an early work by Hermans et al.~\cite{hermans2015trainable}. Berneman and Hexner~\cite{berneman2025equilibrium} study a contrastive trajectory rule for damped reciprocal systems. Their algorithm relies on nudged trajectories and applies to periodic steady states or resting initial conditions. The adjoint framework computes the same gradients without nudging in the linear reciprocal case, and extends to arbitrary initial conditions, including gradients with respect to them. In Sec.~\ref{s:schrodinger} and Sec.~\ref{s:stationary} we discuss further references that harness reciprocity for same-device gradient calculation in Schr\"odinger (including free space optics) and stationary (including Helmholtz) problems, respectively.

\subsection{Nonlinear}\label{ss:second-order-nonlinear}
The nonlinear case is more restrictive because the experiment must reproduce the reversed forward trajectory before the infinitesimal nudge can reveal the adjoint. To see the obstruction, set $\bfm{w}(t)=\bfm{u}(T-t)$. Evaluating~\eqref{eq:adjoint-method-second-order-state} at $T-t$ gives
\begin{equation}\label{eq:main-second-order-reversed-state}
    \bfm{M}\ddot{\bfm{w}}(t)
    -
    \bfm{D}\dot{\bfm{w}}(t)
    +
    \bfm{F}\bigl[\bfm{w}(t),\bfm{p}_F,T-t\bigr]
    -
    \bfm{f}(T-t,\bfm{p}_f)
    =
    \bfm{0}.
\end{equation}
The damping term has the wrong sign for the same hardware. Reproducing $\bfm{w}$ would therefore require either $\bfm{D}=\bfm{0}$ or an experimental replacement $\bfm{D}\mapsto-\bfm{D}$, i.e. turning damping into gain. Even granting such a gain-flipped run does not fix the adjoint problem. A perturbation $\delta\bfm{w}$ due to a small nudge into the direction of the adjoint source and around the gain-flipped reversed trajectory obeys (see Appendix~\ref{a:second-order-nonlinear-physical})
\begin{equation}\label{eq:main-second-order-reversed-perturbation-gain}
    \bfm{M}\delta\ddot{\bfm{w}}(t)
    -
    \bfm{D}\delta\dot{\bfm{w}}(t)
    +
    \bfm{F}_{\bfm{u}}\bigl[\bfm{w}(t),\bfm{p}_F,T-t\bigr]\delta\bfm{w}(t)
    +
    \epsilon\,\mathbb{P}\theta_{\bfm{u}}
    \bigl[\mathbb{P}\bfm{u}(T-t)\bigr] + \mathcal{O}(\epsilon^2)
    =
    \bfm{0}, \quad |\epsilon|\ll 1.
\end{equation}
The adjoint equation, however, propagates with $+\bfm{D}^T\dot{\bfm{a}}$. For reciprocal damping, $\bfm{D}^T=\bfm{D}$, the two tangent operators differ by $2\bfm{D}\dot{\bfm{a}}$. The same hardware cannot simultaneously reproduce the reversed nonlinear background trajectory and propagate the correct adjoint perturbation around it, except in the undamped case. 

In the case of mass proportional damping, if one is able to flip damping (gain) to gain (damping) then the adjoint field can be obtained physically and the gradient can be calculated by exponential time weighting with the known damping coefficient~\cite{pourcel2025lagrangian} (Appendix~\ref{a:mass-proportional-damping}).
This, however, falls outside our setting, since flipping the sign of the damping changes the hardware between the forward and the adjoint run.

If $\epsilon \rightarrow 0$ and $\bfm{D}=\bfm{0}$ the perturbation dynamics~\eqref{eq:main-second-order-reversed-perturbation-gain} recovers the adjoint state equation~\eqref{eq:adjoint-method-second-order-adjoint} if $\bfm{F}_\bfm{u}^T =\bfm{F}_\bfm{u}$---reciprocity along the forward trajectory. Then $\delta\bfm{w}(t)$ is proportional to the adjoint field, which can therefore be extracted from the difference of the reversed and reversed nudged dynamics (see Eq.~\eqref{eq:adjoint-extraction-nonlinear}). 
This proves the nonlinear part of Theorem~\ref{thm:physical-adjoint}. \qed

\paragraph{Remark: only the tangent condition is used.}
We require $\bfm{F}_{\bfm{u}}^T=\bfm{F}_{\bfm{u}}$ only along the visited trajectory; no potential is needed for the construction. If this symmetry holds on an open neighborhood, the Poincar\'e lemma gives $\bfm{F}=E_{\bfm{u}}$ locally there~\cite{spivak1965calculus}; if it holds throughout a simply connected state space, a single global potential exists. The same distinction applies in the complex case of Sec.~\ref{ss:stationary-nonlinear}: on an open set, condition~\eqref{eq:stationary-nonlinear-cond} is equivalent to closedness of $2\operatorname{Re}\langle\bfm{F},d\bfm{u}\rangle_{\mathbb C}$ and hence locally to~\eqref{eq:stationary-complex-gradient-flow}.

\paragraph{Physical adjoint algorithm.} One first runs the forward dynamics and records $\bfm{u}(t)$; the time-reversed trajectory $\bfm{w}(t)=\bfm{u}(T-t)$ is then available directly from this recording and need not be regenerated physically. The adjoint requires only one further experiment, in which the TRM and the nudge are applied \emph{together}: On the same (undamped) hardware, initialize at the final state with the velocity flipped and both initial conditions perturbed in the directions set by the terminal-loss derivatives,
\begin{equation}
    \bfm{w}^{\epsilon}(0)=\bfm{u}(T)-\epsilon\,\bfm{M}^{-1}\mathbb{P}\psi_{\dot{\bfm{u}}},
    \qquad
    \dot{\bfm{w}}^{\epsilon}(0)=-\dot{\bfm{u}}(T)-\epsilon\,\bfm{M}^{-1}\mathbb{P}\psi_{\bfm{u}},
    \qquad |\epsilon|\ll1,
\end{equation}
drive the system with the reversed external drive superimposed with the adjoint source, $\bfm{f}(T-t)-\epsilon\,\mathbb{P}\theta_{\bfm{u}}\bigl[\mathbb{P}\bfm{u}(T-t)\bigr]$, and evolve it forward while replaying the explicit time dependence of the internal forces in reverse, $\bfm{F}[\,\cdot\,,\bfm{p}_F,t]\mapsto\bfm{F}[\,\cdot\,,\bfm{p}_F,T-t]$. Recording $\bfm{w}^{\epsilon}(t)$, the adjoint field is recovered as the infinitesimal differential response
\begin{equation}\label{eq:adjoint-extraction-nonlinear}
    \bfm{a}(t)=\lim_{\epsilon\to0}\frac{\bfm{w}^{\epsilon}(t)-\bfm{w}(t)}{\epsilon},
    \qquad \bfm{w}(t)=\bfm{u}(T-t),
\end{equation}
and the gradient is evaluated digitally from the overlap integrals~\eqref{eq:adjoint-method-gradient-pM}--\eqref{eq:adjoint-method-gradient-v0} (recall $\bfm{D}=\bfm{0}$; see Fig.~\ref{fig:linear_vs_nonlinear}, b). 

\paragraph{Symmetric nudging.} At finite $\epsilon$ the extraction~\eqref{eq:adjoint-extraction-nonlinear} is a forward difference, so the measured field is the adjoint up to an $\mathcal{O}(\epsilon)$ bias from the second-order term of the perturbation expansion. Running instead two nudged experiments with opposite signs and forming the central difference,
\begin{equation}\label{eq:adjoint-extraction-central}
    \bfm{a}(t)=\lim_{\epsilon\to0}\frac{\bfm{w}^{+\epsilon}(t)-\bfm{w}^{-\epsilon}(t)}{2\epsilon},
\end{equation}
leaves an $\mathcal{O}(\epsilon^2)$ bias, at the cost of one additional experiment~\cite{laborieux2021scaling, pourcel2025learning}.

\paragraph{Literature.} For Hamiltonian dynamics with time-independent parameters, the construction above coincides with Hamiltonian echo backpropagation~\cite{lopez2023self} and its recurrent extension by Pourcel and Ernoult~\cite{pourcel2025learning}, the latter shown to be equivalent to the continuous adjoint method; the self-adaptation of the physical parameters proposed by L\'opez-Pastor and Marquardt~\cite{lopez2023self}, in which the parameters themselves become dynamical, is outside our scope. Our formulation additionally covers time dependent parameters of the internal force and generalizes to Onsager-weighted gradient flows $\bfm{F} = \bfm{V}E_{\bfm{u}}$ (Appendix~\ref{a:onsager-second-order}). A complementary, variational route recasts the learning signal through the classical action: Massar~\cite{massar2025equilibrium} and Pourcel et al.~\cite{pourcel2025lagrangian} generalize Equilibrium Propagation~\cite{scellier2017equilibrium} from energy-based fixed points to full trajectories (Lagrangian Equilibrium Propagation), replacing its two-state energy contrast with a contrast of Lagrangian parameter-derivatives integrated over the trajectory. Because this contrast follows from an integration by parts, it carries boundary terms, so the resulting rule depends on the imposed boundary conditions which either vanish by construction (e.g.\ fixed or periodic endpoints) or persist as generally intractable residuals. For time-reversible systems, Pourcel et al.~\cite{pourcel2025lagrangian} identify a Parametric Final Value Problem boundary choice that renders these residuals tractable and recovers the recurrent Hamiltonian echo rule~\cite{pourcel2025learning} as the physically practical instantiation of this variational construction.

\subsection{Summary: linear and nonlinear systems behave fundamentally differently}
We have shown that linear systems give rise to fundamentally different, weaker conditions and simpler finite amplitude physical adjoint field constructions (Fig.~\ref{fig:linear_vs_nonlinear}). The simplicity of physical adjoint experiments for linear systems makes a strong case for parameter encoding schemes (structural nonlinearities)~\cite{xia2024nonlinear,yildirim2024nonlinear,richardson2025nonlinear}, where the linear system is endowed with nonlinear computational capabilities. Yet physical gradients are readily accessible with finite amplitude constructions and in the realistic damped setting~\cite{wanjura2024fully}. Furthermore, damping and gain parameters, whose gradients are also available once the adjoint field is obtained (Eq.~\eqref{eq:adjoint-method-gradient-pD}) can also be harnessed to increase the expressivity of the physical device.

\section{First-order real-valued and overdamped systems}\label{s:first-order}
We next consider first-order, real-valued, overdamped and non-autonomous dynamics
\begin{equation}\label{eq:main-first-order-state}
    \dot{\bfm{u}}(t)+\bfm{F}\bigl[\bfm{u}(t),\bfm{p}_F,t\bigr]-\bfm{f}(t,\bfm{p}_f)=\bfm{0}.
\end{equation}
The corresponding adjoint equation, derived in Appendix~\ref{a:first-order-adjoint}, is
\begin{equation}\label{eq:main-first-order-adjoint}
    \dot{\bfm{a}}(t)
    +
    \bfm{F}_{\bfm{u}}^T
    \bigl[\bfm{u}(T-t),\bfm{p}_F,T-t\bigr]\bfm{a}(t)
    +
    \bfm{f}_{\text{adj}}(t)
    =
    \bfm{0},
\end{equation}
with adjoint source and initial conditions given by
\begin{equation}\label{eq:adj-src-IC-first-order}
\bfm{f}_{\text{adj}}(t) = \mathbb{P}\theta_{\bfm{u}}
    \bigl[\mathbb{P}\bfm{u}(T-t)\bigr], \qquad \bfm{a}(0)
    =
    -\,\mathbb{P}\psi_{\bfm{u}}
    \bigl[\mathbb{P}\bfm{u}(T)\bigr].
\end{equation}
Again, we assume external control of the explicit time dependences of $\bfm{F}$ and $\bfm{f}$, so that it can be replayed as $t\mapsto T-t$ during the adjoint experiment.

\begin{phystheorem}[Sufficient conditions for physical adjoints for first-order real-valued systems]\label{thm:first-order-physical}
The adjoint field can be obtained on the same hardware in the following cases:
\begin{itemize}
    \item \textbf{Linear: Reciprocity.} Suppose
    \[
        \bfm{F}[\bfm{u},\bfm{p}_F,t]=\bfm{K}(\bfm{p}_K,t)\bfm{u}
    \]
    and reciprocity holds, i.e.,
    \begin{equation}\label{eq:main-first-order-linear-V}
        \bfm{K}(\bfm{p}_K,t)^T=\bfm{K}(\bfm{p}_K,t),
    \end{equation}
    then the adjoint is propagated by the same linear hardware, with the explicit time dependence replayed in reverse and with the adjoint source and initial conditions~\eqref{eq:adj-src-IC-first-order}.

    \item \textbf{Nonlinear, overdamped: trajectories are obstructed, the steady state is not.} 
\end{itemize}
\end{phystheorem}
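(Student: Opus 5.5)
The plan mirrors the proof of Theorem~\ref{thm:physical-adjoint}, treating the linear and nonlinear bullets separately and working directly with the adjoint equation~\eqref{eq:main-first-order-adjoint} and its data~\eqref{eq:adj-src-IC-first-order}.

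\textbf{Linear case.} With $\bfm{F}=\bfm{K}(\bfm{p}_K,t)\bfm{u}$ we have $\bfm{F}_{\bfm{u}}^T[\bfm{u}(T-t),\bfm{p}_F,T-t]=\bfm{K}(\bfm{p}_K,T-t)^T$. This expression no longer depends on the forward trajectory, so the adjoint equation is a linear ODE whose only coupling to the forward run is through the source. Under~\eqref{eq:main-first-order-linear-V} the adjoint equation becomes
\[
    \dot{\bfm{a}}(t)+\bfm{K}(\bfm{p}_K,T-t)\bfm{a}(t)+\mathbb{P}\theta_{\bfm{u}}\bigl[\mathbb{P}\bfm{u}(T-t)\bigr]=\bfm{0},
\]
which has exactly the form of~\eqref{eq:main-first-order-state}. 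It is realized by replaying the schedule $t\mapsto T-t$, using the external drive $-\mathbb{P}\theta_{\bfm{u}}[\mathbb{P}\bfm{u}(T-t)]$, and starting from the initial condition $-\mathbb{P}\psi_{\bfm{u}}$. No nudging is needed. By linearity the amplitude can be rescaled, as in the Remark of Sec.~\ref{ss:second-order-linear}. Since first-order dynamics has no mass or damping matrix, the only condition is symmetry of $\bfm{K}$ at each instant. This settles the first bullet.

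\textbf{Nonlinear trajectories: the obstruction.} I follow the argument of Sec.~\ref{ss:second-order-nonlinear}. Set $\bfm{w}(t)=\bfm{u}(T-t)$. Evaluating~\eqref{eq:main-first-order-state} at $T-t$ gives
\[
    \dot{\bfm{w}}-\bfm{F}[\bfm{w},\bfm{p}_F,T-t]+\bfm{f}(T-t)=\bfm{0}.
\]
A TRM acting on the final state alone therefore cannot exist on the same hardware: it would require $\bfm{F}\mapsto-\bfm{F}$. There is no velocity to flip, and the dissipative flow is contracting, not time-reversible. Next, suppose one grants a sign-flipped run and linearizes around it with a nudge $\epsilon\,\mathbb{P}\theta_{\bfm{u}}$. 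The perturbation then obeys $\delta\dot{\bfm{w}}-\bfm{F}_{\bfm{u}}\delta\bfm{w}+\epsilon(\cdots)=\bfm{0}$, whereas the adjoint requires $+\bfm{F}_{\bfm{u}}^T$. Even under reciprocity $\bfm{F}_{\bfm{u}}^T=\bfm{F}_{\bfm{u}}$, the two tangent operators differ by $2\bfm{F}_{\bfm{u}}\delta\bfm{w}$. The alternative is to run the unflipped hardware forward from $\bfm{u}(T)$. That gives the correct tangent sign, but the background is then the forward continuation rather than $\bfm{u}(T-t)$, so $\bfm{F}_{\bfm{u}}$ is evaluated along the wrong trajectory. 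Generically this mismatch is nonzero, which establishes the obstruction for trajectory losses. I expect this step to be the main obstacle, because "obstructed" must be phrased as the failure of this particular sufficient construction, not as an impossibility theorem. I would make that scope explicit and, if needed, exhibit a scalar example: take $F=\beta u^3$, for which $\bfm{F}_{\bfm{u}}$ along the two backgrounds differs.

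\textbf{Nonlinear steady state.} Here the claim is that a stationary target survives. At a stable fixed point $\bfm{u}^*$ the background is constant, so time reversal of the background is trivial and the sign mismatch above disappears. The adjoint then reduces to the stationary equation $\bfm{F}_{\bfm{u}}^T(\bfm{u}^*)\bfm{a}+\mathbb{P}\psi_{\bfm{u}}=\bfm{0}$. Take a nudged relaxation $\dot{\bfm{u}}+\bfm{F}+\epsilon\,\mathbb{P}\psi_{\bfm{u}}=\bfm{0}$. Its fixed point satisfies $\bfm{F}_{\bfm{u}}(\bfm{u}^*)\,\partial_\epsilon\bfm{u}^\epsilon=-\mathbb{P}\psi_{\bfm{u}}$. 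This coincides with the adjoint equation when $\bfm{F}_{\bfm{u}}^T=\bfm{F}_{\bfm{u}}$ at $\bfm{u}^*$, with invertibility of the tangent supplied by stability. The result is Equilibrium Propagation, which I would defer to Sec.~\ref{ss:stationary-nonlinear} for the full statement.
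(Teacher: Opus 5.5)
Your proposal is correct and takes essentially the same route as the paper. For the linear bullet you substitute $\bfm{F}_{\bfm{u}}=\bfm{K}$ and $\bfm{K}^T=\bfm{K}$ into the forward-time adjoint equation, which recovers the forward hardware with the schedule replayed, drive $-\mathbb{P}\theta_{\bfm{u}}$ and initial condition $-\mathbb{P}\psi_{\bfm{u}}$. For the nonlinear bullet you use the same time-reversal sign obstruction ($\bfm{F}\mapsto-\bfm{F}$ for the background versus $+\bfm{F}_{\bfm{u}}^T$ for the adjoint) and defer the steady state to Equilibrium Propagation. Your explicit perturbation computation, the check of the unflipped forward continuation, and your scoping of ``obstructed'' as the failure of this particular construction (not an impossibility result) spell out what the paper states only by analogy with the damped second-order case.
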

 
\subsection{Linear}\label{ss:first-order-linear}
The forward dynamics is
\begin{equation}\label{eq:first-order-linear-forward-simple}
    \dot{\bfm{u}}(t)+\bfm{K}(t)\bfm{u}(t)-\bfm{f}(t)=\bfm{0}.
\end{equation}
If the operator is reciprocal, $\bfm{K}(t)^T=\bfm{K}(t)$, the adjoint equation~\eqref{eq:main-first-order-adjoint} becomes
\begin{equation}\label{eq:first-order-linear-adjoint-physical-form}
    \dot{\bfm{a}}(t)
    +
    \bfm{K}(T-t)\bfm{a}(t)
    +
    \,\mathbb{P}\theta_{\bfm{u}}
    \bigl[\mathbb{P}\bfm{u}(T-t)\bigr]
    =
    \bfm{0}.
\end{equation}
Thus the adjoint experiment has exactly the same overdamped first-order form as the forward experiment and can hence be obtained in the same hardware. \qed

\paragraph{Physical adjoint algorithm.} One reuses the same hardware, replaces the external source by the adjoint source, replaces the initial condition by the loss-derived adjoint initial condition, and replays any explicit time dependence as $t\mapsto T-t$. 

\paragraph{Literature.} To the best of the authors' knowledge this is the first theoretical demonstration that in linear, reciprocal but overdamped dynamics not only fixed points but also full trajectory gradients can be obtained physically. The Rayleighian trajectory rule of Berneman and Hexner~\cite{berneman2026training} is approximate for trajectories.

\subsection{Nonlinear}\label{ss:first-order-nonlinear}

The first-order ODE of this section is the large-damping limit of the second-order ODE (Sec.~\ref{s:physical-adjoints}). The nonlinear adjoint construction is therefore obstructed for the same reason discussed there: time reversal requires flipping damping into gain (here $\bfm{F}\rightarrow -\bfm{F}$) while the adjoint requires damping. Trajectory gradients of nonlinear overdamped dynamics are therefore not accessible on the same hardware.

What remains accessible is the steady state. If $\bfm{F}$ and $\bfm{f}$ are time-independent and the dynamics converges to a linearly exponentially stable fixed point, the loss depends on the parameters only through that fixed point and the problem becomes stationary. That case is Equilibrium Propagation, treated in Sec.~\ref{ss:stationary-nonlinear}; its relation to the trajectory adjoint of this section is derived in Appendix~\ref{a:static-adjoint}.

\section{Schr\"odinger-type complex systems}\label{s:schrodinger}
We now consider complex fields governed by
\begin{equation}\label{eq:main-schrodinger-state}
    i\dot{\bfm{u}}(t)
    +
    \bfm{F}\bigl[\bfm{u}(t),\overline{\bfm{u}}(t),\bfm{p}_F,t\bigr]
    -
    \bfm{f}(t,\bfm{p}_f)
    =
    \bfm{0},\qquad
    \bfm{u}(0)=\bfm{u}_0(\bfm{p}_{u_0}).
\end{equation}
Throughout this section derivatives are Wirtinger derivatives~\cite{remmert1991theory,koor2023short}: for $z=x+iy$, $\partial_z=\frac12(\partial_x-i\partial_y)$ and $\partial_{\overline z}=\frac12(\partial_x+i\partial_y)$, and for vector fields we treat $\bfm{u}$ and its complex conjugate $\overline{\bfm{u}}$ as formally independent variables. 
Thus $\bfm{F}_{\bfm{u}}$ and $\bfm{F}_{\overline{\bfm{u}}}$ denote the two Wirtinger Jacobians, and for the real-valued cost functions the corresponding derivatives satisfy $\theta_{\bfm{u}}=\overline{\theta_{\overline{\bfm{u}}}}$ and $\psi_{\bfm{u}}=\overline{\psi_{\overline{\bfm{u}}}}$ and similarly for $\chi$. 
As we will see, the physical adjoint construction delivers the complex conjugate of the adjoint field.
Writing $\bfm{c}(t)=\overline{\bfm{b}(T-t)}$, the forward-time complex conjugate adjoint equation derived in Appendix~\ref{a:schrodinger-wirtinger} is
\begin{equation}\label{eq:main-schrodinger-adjoint}
    i\dot{\bfm{c}}(t)
    +
    \bfm{F}_{\bfm{u}}^T
    \bigl[\bfm{u}(T-t),\overline{\bfm{u}}(T-t),\bfm{p}_F,T-t\bigr]\bfm{c}(t)
    +
    \overline{
    \bfm{F}_{\overline{\bfm{u}}}
    \bigl[\bfm{u}(T-t),\overline{\bfm{u}}(T-t),\bfm{p}_F,T-t\bigr]}^T
    \overline{\bfm{c}}(t)
    +
   \bfm{f}_{\text{adj}}(t)
    =
    \bfm{0},
\end{equation}
where $\overline{(\cdot)}^{\,T}$ is the conjugate transpose, and with source and initial conditions
\begin{equation}\label{eq:main-schrodinger-adjoint-ic}
    \bfm{f}_{\text{adj}}(t) = \mathbb{P}\theta_{\bfm{u}}
    \bigl[
    \mathbb{P}\bfm{u}(T-t),
    \mathbb{P}\overline{\bfm{u}}(T-t)
    \bigr],
    \qquad
    \bfm{c}(0)
    =
    i\,\mathbb{P}\psi_{\bfm{u}}
    \bigl[
    \mathbb{P}\bfm{u}(T),
    \mathbb{P}\overline{\bfm{u}}(T)
    \bigr].
\end{equation}
Complex conjugating Eq.~\eqref{eq:main-schrodinger-adjoint} flips the signs in front of the Wirtinger Jacobians, which is why $\bfm{c}$, rather than $\bfm{b}$, can be obtained physically in the same hardware.

Inner products now include complex conjugation, $\langle \bfm{x},\bfm{y}\rangle_{\mathbb{C}}=\overline{\bfm{x}}^T\bfm{y}$.
The total variation of $\chi$ is therefore $\delta \chi = \langle d\chi/d\overline{\bfm{u}}, \delta \bfm{u} \rangle_{\mathbb{C}}+ \langle d\chi/d\bfm{u}, \delta \overline{\bfm{u}}\rangle_{\mathbb{C}} = 2\operatorname{Re} \langle d\chi/d\overline{\bfm{u}}, \delta \bfm{u} \rangle_{\mathbb{C}}$.
The gradient overlap integrals expressed in terms of $\bfm{c}$ then become (Appendix~\ref{a:schrodinger-wirtinger}):
\begin{align}
\frac{d\chi}{d\bfm{p}_F}
&=
2\operatorname{Re}
\int_0^T
\left\langle
\overline{\mathbf c(T-t)},
\bfm{F}_{\bfm{p}_F}\bigl[\bfm{u}(t),\overline{\bfm{u}}(t),\bfm{p}_F,t\bigr]
\right\rangle_{\mathbb C}
\,dt,\label{eq:gradient-schrodinger-int_force}
\\
\frac{d\chi}{d\bfm{p}_f}
&=
-2\operatorname{Re}
\int_0^T
\left\langle
\overline{\mathbf c(T-t)},
\bfm{f}_{\bfm{p}_f}(t)
\right\rangle_{\mathbb C}
\,dt,\label{eq:gradient-schrodinger-ext_force}
\\
\frac{d\chi}{d\bfm{p}_{u_0}}
    &=
    -\,2\operatorname{Re}\left\langle\overline{i \mathbf c(T)},\bfm{u}_{0,\bfm{p}_{u_0}}\right\rangle_{\mathbb C}.\label{eq:gradient-schrodinger-IC}
\end{align}

The sufficient conditions under which $\bfm{c}$ can be obtained in the same hardware are as follows:

\begin{phystheorem}[Sufficient conditions for physical adjoints for Schr\"odinger-type systems]\label{thm:schrodinger}
The complex conjugate of the adjoint field can be obtained on the same hardware in the following cases:
\begin{itemize}
    \item \textbf{Linear: Reciprocity.} Suppose
    \[
        \bfm{F}[\bfm{u},\overline{\bfm{u}},\bfm{p}_F,t]
        =
        \bfm{K}(\bfm{p}_K,t)\bfm{u}
    \]
    and reciprocity holds, i.e.,
    \[
        \bfm{K}(\bfm{p}_K,t)^T
        =
        \bfm{K}(\bfm{p}_K,t)
    \]
    then the complex conjugate of the adjoint is propagated by the same linear hardware, with the explicit time dependence replayed in reverse and with the adjoint source and initial conditions~\eqref{eq:main-schrodinger-adjoint-ic}.
\item \textbf{Nonlinear: Existence of a $\mathcal{PT}$-TRM and reciprocity of the
parity-pulled-back linearized dynamics.} Let $\bfm{\Pi}$ be a real orthogonal parity
involution, $\bfm{\Pi}^{-1}=\bfm{\Pi}^{T}=\bfm{\Pi}$, and let
\[
    \bfm{w}(t):=\bfm{\Pi}\,\overline{\bfm{u}(T-t)}
\]
be the $\mathcal{PT}$-reversed trajectory. Suppose the dynamics are
$\mathcal{PT}$-symmetric along the forward trajectory,
\begin{equation}\label{eq:main-schrodinger-pt}
    \bfm{F}\bigl[
        \bfm{\Pi}\overline{\bfm{u}(t)},\bfm{\Pi}\bfm{u}(t),\bfm{p}_F,t
    \bigr]
    =
    \bfm{\Pi}\,\overline{
        \bfm{F}\bigl[\bfm{u}(t),\overline{\bfm{u}(t)},\bfm{p}_F,t\bigr]
    },
    \qquad t\in[0,T],
\end{equation}
so that $\Theta_{\mathcal{PT}}=\bfm{\Pi}\mathcal{K}$ implements a TRM: $\bfm{w}$ is
generated on the same hardware by applying $\Theta_{\mathcal{PT}}$ to the final state
$\bfm{u}(T)$, reversing the explicit time dependences, and transforming the external
drive as $\bfm{f}(t)\mapsto\bfm{\Pi}\overline{\bfm{f}(T-t)}$.

Suppose, independently, that the Wirtinger Jacobians satisfy
\begin{align}
    \bfm{\Pi}\,
    \bfm{F}_{\bfm{u}}\bigl[
        \bfm{w}(T-t),\overline{\bfm{w}(T-t)},\bfm{p}_F,t
    \bigr]\,
    \bfm{\Pi}
    &=
    \bfm{F}_{\bfm{u}}\bigl[
        \bfm{u}(t),\overline{\bfm{u}(t)},\bfm{p}_F,t
    \bigr]^{T},
    \label{eq:main-schrodinger-Fu-V}\\
    \bfm{\Pi}\,
    \bfm{F}_{\overline{\bfm{u}}}\bigl[
        \bfm{w}(T-t),\overline{\bfm{w}(T-t)},\bfm{p}_F,t
    \bigr]\,
    \bfm{\Pi}
    &=
    \overline{
        \bfm{F}_{\overline{\bfm{u}}}\bigl[
            \bfm{u}(t),\overline{\bfm{u}(t)},\bfm{p}_F,t
        \bigr]
    }^{\,T},
    \label{eq:main-schrodinger-Fubar-V}
\end{align}
for all $t\in[0,T]$. Then the complex-conjugate adjoint is obtained, up to the parity
pull-back $\bfm{\Pi}$, as the infinitesimal response of a nudged
$\mathcal{PT}$-reversed trajectory.

\end{itemize}
\end{phystheorem}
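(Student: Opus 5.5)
The plan mirrors the proof of Theorem~\ref{thm:physical-adjoint}. In both cases I compare the forward-time conjugate adjoint equation~\eqref{eq:main-schrodinger-adjoint} with an equation that the unmodified hardware~\eqref{eq:main-schrodinger-state} can realize.

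\emph{Linear case.} With $\bfm{F}=\bfm{K}(t)\bfm{u}$ the Wirtinger Jacobians are $\bfm{F}_{\bfm{u}}=\bfm{K}(t)$ and $\bfm{F}_{\overline{\bfm{u}}}=\bfm{0}$. Equation~\eqref{eq:main-schrodinger-adjoint} then collapses to
\[
i\dot{\bfm{c}}(t)+\bfm{K}(T-t)^T\bfm{c}(t)+\bfm{f}_{\text{adj}}(t)=\bfm{0}.
\]
Reciprocity $\bfm{K}^T=\bfm{K}$ makes this exactly the state equation~\eqref{eq:main-schrodinger-state}, with three substitutions: the schedule is replayed as $t\mapsto T-t$, the drive $\bfm{f}$ is replaced by $-\bfm{f}_{\text{adj}}$, and the initial condition is $\bfm{c}(0)$ from~\eqref{eq:main-schrodinger-adjoint-ic}. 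This is a single finite-amplitude run. No Hermiticity is used, since only the transpose (not the conjugate transpose) enters, so gain and loss in $\bfm{K}$ are allowed. That closes the linear part.

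\emph{Nonlinear case.} I proceed in three steps.

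\textbf{Step 1 (TRM).} Set $\bfm{w}(t)=\bfm{\Pi}\overline{\bfm{u}(T-t)}$. Conjugate the state equation at time $T-t$. This turns $i\dot{\bfm{u}}$ into $+i\,d/dt$ of the reversed conjugate, because the sign flip from $\partial_t$ cancels the one from $\bar i=-i$. Apply $\bfm{\Pi}$ and use~\eqref{eq:main-schrodinger-pt}. The result is that $\bfm{w}$ solves~\eqref{eq:main-schrodinger-state} with initial state $\bfm{\Pi}\overline{\bfm{u}(T)}$, reversed explicit time dependence and drive $\bfm{\Pi}\overline{\bfm{f}(T-t)}$. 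This is a routine calculation.

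\textbf{Step 2 (nudged run).} Add the nudge $-\epsilon\bfm{\Pi}\bfm{f}_{\text{adj}}(t)$ to the drive and perturb the initial state by $\epsilon\bfm{\Pi}\bfm{c}(0)$. To first order, $\delta\bfm{w}$ obeys the tangent equation
\[
i\delta\dot{\bfm{w}}+\bfm{F}_{\bfm{u}}[\bfm{w},\overline{\bfm{w}}]\delta\bfm{w}+\bfm{F}_{\overline{\bfm{u}}}[\bfm{w},\overline{\bfm{w}}]\overline{\delta\bfm{w}}+\epsilon\bfm{\Pi}\bfm{f}_{\text{adj}}+\mathcal{O}(\epsilon^2)=\bfm{0}.
\]
Here the Jacobians are evaluated at the point of the $\mathcal{PT}$-reversed trajectory visited at time $t$, with the explicit time argument reversed.

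\textbf{Step 3 (matching).} Define $\bfm{e}=\bfm{\Pi}\delta\bfm{w}/\epsilon$ and multiply the tangent equation by $\bfm{\Pi}$. The conditions~\eqref{eq:main-schrodinger-Fu-V} and~\eqref{eq:main-schrodinger-Fubar-V} then replace $\bfm{\Pi}\bfm{F}_{\bfm{u}}\bfm{\Pi}$ by $\bfm{F}_{\bfm{u}}^T$ and $\bfm{\Pi}\bfm{F}_{\overline{\bfm{u}}}\bfm{\Pi}$ by $\overline{\bfm{F}_{\overline{\bfm{u}}}}^T$, both evaluated along the forward trajectory at $T-t$. This reproduces~\eqref{eq:main-schrodinger-adjoint} term by term, and the initial data match as well. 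Uniqueness of solutions of this linear ODE gives $\bfm{c}=\lim_{\epsilon\to0}\bfm{\Pi}(\bfm{w}^\epsilon-\bfm{w})/\epsilon$. This is the adjoint with the parity undone.

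I expect the main obstacle to be the bookkeeping in Step 3. The time arguments of the Jacobians must align exactly as stated in~\eqref{eq:main-schrodinger-Fu-V}, where the Jacobians are evaluated at $\bfm{w}(T-t)$ with explicit time $t$. Relabelling $t\mapsto T-t$ is what produces the forward-trajectory argument $\bfm{u}(T-t)$ required by~\eqref{eq:main-schrodinger-adjoint}.

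The conjugate-linear term $\bfm{F}_{\overline{\bfm{u}}}\overline{\delta\bfm{w}}$ is the other delicate point. I will check that $\bfm{\Pi}$ commutes with conjugation, which holds because $\bfm{\Pi}$ is real. Only then does $\overline{\delta\bfm{w}}$ map to $\overline{\bfm{e}}$ with the conjugate-transposed Jacobian and no extra sign. As in Step 1, the signs from $i$ and from time reversal should be verified once and then reused.
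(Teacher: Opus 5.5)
Your proposal is correct and follows essentially the same route as the paper. The linear case is the direct substitution $\bfm{F}_{\bfm{u}}=\bfm{K}$, $\bfm{F}_{\overline{\bfm{u}}}=\bfm{0}$ into the conjugated adjoint equation. The nonlinear case conjugates and reverses the state equation to obtain the $\mathcal{PT}$-reversed background, nudges with the $\bfm{\Pi}$-transformed adjoint source and the initial kick $\epsilon\,i\,\bfm{\Pi}\mathbb{P}\psi_{\bfm{u}}$, and then pulls back by $\bfm{\Pi}$ and matches term by term via the Jacobian conditions after relabelling $t\mapsto T-t$. You also invoke uniqueness for the real-linear tangent ODE and check that the real $\bfm{\Pi}$ commutes with conjugation, which makes explicit a step the paper states only as ``coincides as $\epsilon\to0$''.
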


\subsection{Linear}\label{ss:schrodinger-linear}
The linear case requires symmetry in the reciprocal sense, not Hermiticity: loss and gain are compatible with the physical adjoint as long as reciprocity is preserved.
For a complex-linear system, the forward dynamics is
\begin{equation}
    i\dot{\bfm{u}}(t)
    +
    \bfm{K}(\bfm{p}_K, t)\bfm{u}(t)
    -
    \bfm{f}(t,\bfm{p}_f)
    =
    \bfm{0},
\end{equation}
where $\bfm{F}_{\overline{\bfm{u}}}$ = 0 by complex linearity and $\bfm{F}_{\bfm{u}} = \bfm{K}(\bfm{p}_K, t)$ independent of the forward state. Then \eqref{eq:main-schrodinger-adjoint} reads
\begin{equation}
    i\dot{\bfm{c}}(t) + \bfm{K}(\bfm{p}_K, T-t)^T \bfm{c}(t) +  \mathbb{P}\theta_{\bfm{u}}
    \bigl[
    \mathbb{P}\bfm{u}(T-t),
    \mathbb{P}\overline{\bfm{u}}(T-t)
    \bigr] = \bfm{0}.
\end{equation}
Hence, if $\bfm{K}^T = \bfm{K}$ for all $t$, one can propagate both fields in the same device by swapping the external drive and initial conditions of the forward problem for those of the adjoint and by replaying the explicit time dependence in reverse. \qed

\paragraph{Free-space diffractive optical systems.}
In Appendix~\ref{app:free-space-optics} we show that replacing time $t$ by the propagation coordinate $z$ recovers free-space diffractive optical systems.

\paragraph{Holographic vs. intensity measurements.} The physical algorithm above requires measuring both $\bfm{u}(t)$ and $\bfm{c}(t)$ holographically (phase and amplitude), so that the gradient overlap integrals~\eqref{eq:gradient-schrodinger-int_force}--\eqref{eq:gradient-schrodinger-IC} can be evaluated. Extracting the gradient from intensity measurements alone would be experimentally simpler. In~\cite{pai2023experimentally,hughes2018training} this was achieved for Helmholtz systems by exploiting time reversal symmetry in addition to reciprocity. The same construction applies to linear Schr\"odinger problems:

For two complex numbers $x$ and $y$, $2\operatorname{Re}(\overline{x}y) = |x+y|^2-|x|^2-|y|^2$, so $\operatorname{Re}(\overline{x}y)$ follows from three intensity measurements. Eqs.~\eqref{eq:gradient-schrodinger-int_force}--\eqref{eq:gradient-schrodinger-IC} are instead of the form $\operatorname{Re}(xy)$, with $x$ not conjugated, which intensities do not give; one needs $\bfm{b}$ physically in the system rather than $\bfm{c}(t) = \overline{\bfm{b}(T-t)}$.
This can be achieved if the system is time reversal symmetric and reciprocal:
(i) evolve the forward state $\bfm{u}$ from $0$ to $T$ and measure the intensity $I_\bfm{u}(t) = |\bfm{u}(t)|^2$  and (ii) physically evolve $\bfm{c}$ from $0$ to $T$ and measure the intensity $I_\bfm{c}(t) = |\bfm{c}(t)|^2$. Then, (iii), apply the TRM of complex conjugating $\bfm{c}(T)\rightarrow \overline{\bfm{c}(T)}$, time reverse the adjoint source and add the forward sources and initial conditions so that the system now evolves the forward state plus the actual adjoint state $\bfm{u}(t)+\overline{\bfm{c}(T-t)}=\bfm{u}(t)+\bfm{b}(t)$; and record $I_{\bfm{u}+\bfm{b}}(t)=|\bfm{u}(t)+\bfm{b}(t)|^2$. The instantaneous overlap can then be constructed as $I_{\bfm{u}+\bfm{b}}(t)-I_\bfm{u}(t)-I_\bfm{c}(T-t)$. The construction requires time reversal symmetry, so that complex conjugating $\bfm{c}(T)$ re-generates the adjoint state evolution, and linearity, since step (iii) uses the superposition principle.

What it delivers is site-wise, $2\operatorname{Re}[\overline{b_i}(t)u_i(t)]$, and time reversal symmetry makes $\bfm{K}$, and with it $\bfm{K}_{\bfm{p}}$, real. Eq.~\eqref{eq:gradient-schrodinger-int_force} therefore follows when $\bfm{K}_{\bfm{p}}$ is diagonal in the measurement basis, as for an on-site potential, but not for a coupling coefficient, which needs the cross-site overlaps $2\operatorname{Re}[\overline{b_i}u_j]$. The source and initial-condition gradients~\eqref{eq:gradient-schrodinger-ext_force}--\eqref{eq:gradient-schrodinger-IC} pair $\bfm{b}$ with a known vector rather than with $\bfm{u}$ and remain holographic.

\paragraph{Literature.} For phase-only spatial light modulators (SLMs), the framework reduces to the in situ optical backpropagation rule theorized by Zhou et al.~\cite{zhou2020situ}. 
An early proposal by Wagner and Psaltis implemented the same layerwise adjoint structure in a holographic free-space optical network: reciprocal volume holograms propagated the transpose of each linear interconnection, while a weak counterpropagating probe through nonlinear Fabry-P\'erot neurons was designed to approximate multiplication by the local activation derivative~\cite{wagner1987multilayer}. This is a direct small-signal implementation of the nonlinear tangent rather than a contrastive nudging construction; because the probe transmission only approximates the required derivative, the resulting physical gradient is approximate through the nonlinear units.
The same reciprocal linear-adjoint construction covers the linear optical layers proposed in Guo et al.~\cite{guo2021backpropagation} and experimentally realized in Spall et al.~\cite{spall2025training}; in both cases, the saturable-absorber response only approximates the nonlinear tangent required by backpropagation, so the construction is exact for the ideal linear layers but approximate through the nonlinearity. In Mididoddi et al.~\cite{mididoddi2025threading} physical realization of the adjoint field was implemented experimentally to optimize the input field in order to suppress output fluctuations in a dynamically varying medium. Finally, we recover the gradient rule in Dal Cin et al.~\cite{cin2025training} in the linear limit (their nonlinear extension is approximate). However, we note that in the linear case, the adjoint source can be applied at full amplitude and no nudging or difference of trajectories is required.

\subsection{Nonlinear}\label{ss:schrodinger-nonlinear}
Nonlinear Schr\"odinger systems require both a time-reversal mirror (TRM)
for the background trajectory and reciprocity of the parity-pulled-back
full Wirtinger linearized system. Let
\begin{equation}
    \Theta_{\mathcal{PT}}
    =
    \bfm{\Pi}\mathcal{K},
\end{equation}
where $\bfm{\Pi}$ is the parity involution introduced in
Theorem~\ref{thm:schrodinger} and $\mathcal{K}$ denotes complex
conjugation. The special case $\bfm{\Pi}=\bfm{I}$ recovers spinless time
reversal implemented by complex conjugation alone.

Applying $\Theta_{\mathcal{PT}}$ to the forward state at $t=T$ generates
the $\mathcal{PT}$-reversed background
\begin{equation}\label{eq:main-schrodinger-pt-background}
    \bfm{w}(t)
    :=
    \Theta_{\mathcal{PT}}\bfm{u}(T-t)
    =
    \bfm{\Pi}\,\overline{\bfm{u}(T-t)}.
\end{equation}
Evaluating Eq.~\eqref{eq:main-schrodinger-state} at $T-t$, complex
conjugating it, and applying $\bfm{\Pi}$ gives
\begin{equation}
\begin{split}
    i\dot{\bfm{w}}(t)
    &+
    \bfm{\Pi}\,
    \overline{
        \bfm{F}\bigl[
            \bfm{u}(T-t),
            \overline{\bfm{u}(T-t)},
            \bfm{p}_F,T-t
        \bigr]
    }\\
    &-
    \bfm{\Pi}\,
    \overline{\bfm{f}(T-t,\bfm{p}_f)}
    =
    \bfm{0},
    \qquad
    \bfm{w}(0)
    =
    \bfm{\Pi}\,\overline{\bfm{u}(T)}.
\end{split}
\end{equation}
Therefore, if the $\mathcal{PT}$-symmetry condition
\eqref{eq:main-schrodinger-pt} holds, the same hardware reproduces the
$\mathcal{PT}$-reversed background according to
\begin{equation}
\begin{split}
    i\dot{\bfm{w}}(t)
    &+
    \bfm{F}\bigl[
        \bfm{w}(t),
        \overline{\bfm{w}(t)},
        \bfm{p}_F,T-t
    \bigr]\\
    &-
    \bfm{\Pi}\,
    \overline{\bfm{f}(T-t,\bfm{p}_f)}
    =
    \bfm{0},
    \qquad
    \bfm{w}(0)
    =
    \bfm{\Pi}\,\overline{\bfm{u}(T)}.
\end{split}
\end{equation}
Thus the reversed background is generated by applying
$\Theta_{\mathcal{PT}}$ to the final state, reversing all explicit time
dependences, and transforming the external drive as
\[
    \bfm{f}(t)
    \mapsto
    \bfm{\Pi}\,\overline{\bfm{f}(T-t)}.
\]

As in the real nonlinear case, the adjoint is revealed by an
infinitesimal nudge. Run a copy $\bfm{w}^{\epsilon}$ on the same hardware
with the adjoint source transformed by $\bfm{\Pi}$ and, when a terminal
loss is present, with initial condition
\begin{equation}
    \bfm{w}^{\epsilon}(0)
    =
    \bfm{w}(0)
    +
    \epsilon\,i\,\bfm{\Pi}\mathbb{P}\psi_{\bfm{u}}
    \bigl[
    \mathbb{P}\bfm{u}(T),
    \mathbb{P}\overline{\bfm{u}}(T)
    \bigr].
\end{equation}
The differential field
\begin{equation}
    \delta\bfm{w}(t)
    :=
    \bfm{w}^{\epsilon}(t)-\bfm{w}(t)
\end{equation}
then obeys, for $|\epsilon|\ll1$,
\begin{equation}
\begin{split}
    i\,\delta\dot{\bfm{w}}(t)
    &+
    \bfm{F}_{\bfm{u}}\bigl[
        \bfm{w}(t),
        \overline{\bfm{w}(t)},
        \bfm{p}_F,T-t
    \bigr]
    \delta\bfm{w}(t)\\
    &+
    \bfm{F}_{\overline{\bfm{u}}}\bigl[
        \bfm{w}(t),
        \overline{\bfm{w}(t)},
        \bfm{p}_F,T-t
    \bigr]
    \overline{\delta\bfm{w}(t)}\\
    &+
    \epsilon\,\bfm{\Pi}\mathbb{P}\theta_{\bfm{u}}
    \bigl[
        \mathbb{P}\bfm{u}(T-t),
        \mathbb{P}\overline{\bfm{u}(T-t)}
    \bigr]
    +
    \mathcal{O}(\epsilon^2)
    =
    \bfm{0}.
\end{split}
\end{equation}

To compare this response with the complex-conjugate adjoint, undo the
parity transformation and define
\[
    \delta\bfm{v}(t)
    :=
    \bfm{\Pi}\,\delta\bfm{w}(t).
\]
Since $\bfm{\Pi}$ is real and
$\bfm{\Pi}^{-1}=\bfm{\Pi}$, multiplying the perturbation equation by
$\bfm{\Pi}$ gives
\begin{equation}\label{eq:main-schrodinger-reversed-perturbation}
\begin{split}
    i\,\delta\dot{\bfm{v}}(t)
    &+
    \bfm{\Pi}\,
    \bfm{F}_{\bfm{u}}\bigl[
        \bfm{w}(t),
        \overline{\bfm{w}(t)},
        \bfm{p}_F,T-t
    \bigr]
    \bfm{\Pi}\,
    \delta\bfm{v}(t)\\
    &+
    \bfm{\Pi}\,
    \bfm{F}_{\overline{\bfm{u}}}\bigl[
        \bfm{w}(t),
        \overline{\bfm{w}(t)},
        \bfm{p}_F,T-t
    \bigr]
    \bfm{\Pi}\,
    \overline{\delta\bfm{v}(t)}\\
    &+
    \epsilon\,\mathbb{P}\theta_{\bfm{u}}
    \bigl[
        \mathbb{P}\bfm{u}(T-t),
        \mathbb{P}\overline{\bfm{u}(T-t)}
    \bigr]
    +
    \mathcal{O}(\epsilon^2)
    =
    \bfm{0}.
\end{split}
\end{equation}
Comparing Eq.~\eqref{eq:main-schrodinger-reversed-perturbation} term by
term with the complex-conjugate adjoint equation
\eqref{eq:main-schrodinger-adjoint}, the parity-pulled-back response
$\delta\bfm{v}/\epsilon$ coincides with $\bfm{c}$ as $\epsilon\to0$, precisely when the independent
tangent-adjoint matching conditions
\eqref{eq:main-schrodinger-Fu-V}--\eqref{eq:main-schrodinger-Fubar-V}
hold.

Thus the two conditions play distinct roles:
$\mathcal{PT}$ symmetry supplies the correct reversed background
trajectory, while reciprocity of the parity-pulled-back tangent operators
ensures that the infinitesimal nudge propagates as the physical adjoint.
The complex-conjugate adjoint is therefore
\begin{equation}
    \bfm{c}(t)
    =
    \lim_{\epsilon\to0}\frac{\delta\bfm{v}(t)}{\epsilon}
    =
    \bfm{\Pi}
    \lim_{\epsilon\to0}
    \frac{
        \bfm{w}^{\epsilon}(t)-\bfm{w}(t)
    }{\epsilon},
\end{equation}
and the raw Wirtinger adjoint follows from
\begin{equation}
    \bfm{b}(t)
    =
    \overline{\bfm{c}(T-t)}.
\end{equation}
For $\bfm{\Pi}=\bfm{I}$, the parity pull-back is trivial and the
construction reduces to conjugation-time reversal with
$\bfm{c}=\lim_{\epsilon\to0}\delta\bfm{w}/\epsilon$.

\paragraph{Example: local Kerr nonlinearity.}
For a permutation parity $\bfm{\Pi}$, consider
\begin{equation}
    \bfm{F}[\bfm{u},\overline{\bfm{u}}]
    =
    \bfm{K}\bfm{u}
    +
    g\,|\bfm{u}|^2\odot\bfm{u},
    \qquad
    g\in\mathbb{R}.
\end{equation}
If
\begin{equation}\label{eq:K-parity-transform}
    \bfm{K}
    =
    \bfm{\Pi}\,\overline{\bfm{K}}\,\bfm{\Pi},
\end{equation}
the nonlinear dynamics are $\mathcal{PT}$-symmetric. The Wirtinger
Jacobians are
\begin{equation}
    \bfm{F}_{\bfm{u}}
    =
    \bfm{K}
    +
    2g\,\operatorname{diag}(|\bfm{u}|^2),
    \qquad
    \bfm{F}_{\overline{\bfm{u}}}
    =
    g\,\operatorname{diag}(\bfm{u}^2).
\end{equation}
The local Kerr contributions satisfy the parity-pulled-back tangent
conditions automatically, but the linear part must additionally satisfy
\begin{equation}
    \bfm{\Pi}\bfm{K}\bfm{\Pi}
    =
    \bfm{K}^T.
\end{equation}
Together with Eq.~\eqref{eq:K-parity-transform} this implies that $\bfm{K}$ needs to be Hermitian. 
For $\bfm{\Pi}=\bfm{I}$, these conditions reduce to a real symmetric $\bfm{K}$ with a real local Kerr coefficient.
Thus, a non-trivial $\bfm{\Pi}$ allows for complex couplings. 

\paragraph{Literature.}
For autonomous (time-independent) Hamiltonian systems with a terminal-value cost, the nudged time-reversed (`echo') dynamics of L\'opez-Pastor and Marquardt~\cite{lopez2023self} transports the output perturbation backward to yield the loss gradient equivalent to the continuous adjoint field. That work goes a step further and constructs a self-adapting system, in which the learnable parameters are themselves physical degrees of freedom updated by the same dynamics, with no external gradient computation. To the best of the authors' knowledge, the extension to parity-time symmetry is new.

\section{Stationary problems}\label{s:stationary}

In the preceding sections the computation was carried by a trajectory. We now turn to stationary problems, in which the result is read from a steady state such as the fixed point that a relaxational dynamics settles into, or the steady state of a harmonically driven system. 
We allow complex amplitudes $\bfm{u}\in\mathbb{C}^N$. The stationary problem considered reads
\begin{equation}\label{eq:stationary-state}
    \bfm{F}\bigl[\bfm{u}^*,\overline{\bfm{u}^*},\bfm{p}_F\bigr]-\bfm{f}(\bfm{p}_f)=\bfm{0},
\end{equation}
where $\bfm{u}^*$ denotes the steady state, and the cost function is defined on it as
\begin{equation}\label{eq:stationary-cost}
    \chi=\psi\bigl[\mathbb{P}\bfm{u}^*,\mathbb{P}\overline{\bfm{u}^*}\bigr].
\end{equation}

The adjoint field $\bfm{a}\in\mathbb{C}^N$ associated with Eq.~\eqref{eq:stationary-state} satisfies the adjoint equation (see Appendix~\ref{app:helmholtz-scattering}), in which the two Wirtinger Jacobians are evaluated at the steady state,
\begin{equation}\label{eq:stationary-adjoint}
    \overline{\bfm{F}_{\bfm{u}}}^{\,T}\bfm{a}
    +
    \bfm{F}_{\overline{\bfm{u}}}^{T}\overline{\bfm{a}}
    +
    \bfm{f}_{\text{adj}}
    =
    \bfm{0},
\end{equation}
where the adjoint source is given by
\begin{equation}\label{eq:stationary-adj-src}
    \bfm{f}_{\text{adj}} = \mathbb{P}\,\psi_{\overline{\bfm{u}}}\bigl[\mathbb{P}\bfm{u}^*,\mathbb{P}\overline{\bfm{u}^*}\bigr].
\end{equation}
Once $\bfm{a}$ is known, the gradients are a single overlap of the two static fields,
\begin{align}
    \frac{d\chi}{d\bfm{p}_F}
    &=2\operatorname{Re}\left\langle\bfm{a},\bfm{F}_{\bfm{p}_F}\bigl[\bfm{u}^*,\overline{\bfm{u}^*},\bfm{p}_F\bigr]\right\rangle_{\mathbb C},
    \label{eq:stationary-gradient-F}\\
    \frac{d\chi}{d\bfm{p}_f}
    &=-2\operatorname{Re}\left\langle\bfm{a},\bfm{f}_{\bfm{p}_f}(\bfm{p}_f)\right\rangle_{\mathbb C}.
    \label{eq:stationary-gradient-f}
\end{align}

The sufficient conditions mirror the trajectory case, the nonlinear construction now nudging the steady state instead of a time-reversed trajectory:
\begin{phystheorem}[Sufficient conditions for physical adjoints in stationary problems]\label{thm:stationary}
The adjoint field, or its complex conjugate, can be obtained on the same hardware in the following cases:
\begin{itemize}
    \item \textbf{Linear: Reciprocity.} Suppose
    \[
        \bfm{F}\bigl[\bfm{u},\overline{\bfm{u}},\bfm{p}_F\bigr]=\bfm{K}(\bfm{p}_\bfm{K})\bfm{u},
    \]
    and reciprocity holds, i.e.,
    \[
        \bfm{K}(\bfm{p}_\bfm{K})^T=\bfm{K}(\bfm{p}_\bfm{K}),
    \]
    then the complex conjugate of the adjoint field, $\bfm{c}=\overline{\bfm{a}}$, is generated by solving the same stationary problem with the forward source $\bfm{f}$ replaced by the adjoint source $-\mathbb{P}\psi_{\bfm{u}}$, in a single finite-amplitude experiment and with no nudging.

\item \textbf{Nonlinear: Reciprocity of the tangent operators at the steady state.} Suppose the steady state $\bfm{u}^*$ is a linearly exponentially stable fixed point of a relaxation realized by the same hardware, and that the full real-linear tangent operator of the stationary equation at $\bfm{u}^*$ is invertible. If the Wirtinger tangent operators there satisfy
\begin{equation}\label{eq:stationary-nonlinear-cond}
    \overline{\bfm{F}_{\bfm{u}}\bigl[\bfm{u}^*,\bfm{p}_F\bigr]}^{\,T}
    =
    \bfm{F}_{\bfm{u}}\bigl[\bfm{u}^*,\bfm{p}_F\bigr],
    \qquad
    \bfm{F}_{\overline{\bfm{u}}}\bigl[\bfm{u}^*,\bfm{p}_F\bigr]^T
    =
    \bfm{F}_{\overline{\bfm{u}}}\bigl[\bfm{u}^*,\bfm{p}_F\bigr],
\end{equation}
then the adjoint field is
$\bfm{a}=\left.\frac{d\bfm{u}^{*,\epsilon}}{d\epsilon}\right|_{\epsilon=0}$,
where $\bfm{u}^{*,\epsilon}$ is the nearby steady state under an adjoint-source nudge of amplitude $\epsilon$.
\end{itemize}
\end{phystheorem}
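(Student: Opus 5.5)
The proof splits along the linear/nonlinear dichotomy of the statement. In both cases the aim is to match the adjoint equation~\eqref{eq:stationary-adjoint} term by term with an equation the same hardware actually solves, just as in the trajectory theorems (Theorems~\ref{thm:physical-adjoint}, \ref{thm:first-order-physical}, \ref{thm:schrodinger}), but with no time variable.

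\emph{Linear case.} For $\bfm{F}=\bfm{K}\bfm{u}$ complex-linearity gives $\bfm{F}_{\bfm{u}}=\bfm{K}$ and $\bfm{F}_{\overline{\bfm{u}}}=\bfm{0}$. Equation~\eqref{eq:stationary-adjoint} then reduces to $\overline{\bfm{K}}^{\,T}\bfm{a}+\mathbb{P}\psi_{\overline{\bfm{u}}}=\bfm{0}$. I will complex conjugate this equation, using that $\mathbb{P}$ is real and that $\overline{\psi_{\overline{\bfm{u}}}}=\psi_{\bfm{u}}$ for a real cost. This gives $\bfm{K}^T\bfm{c}+\mathbb{P}\psi_{\bfm{u}}=\bfm{0}$ with $\bfm{c}=\overline{\bfm{a}}$. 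Reciprocity $\bfm{K}^T=\bfm{K}$ turns this into $\bfm{K}\bfm{c}-(-\mathbb{P}\psi_{\bfm{u}})=\bfm{0}$, which is~\eqref{eq:stationary-state} with source $-\mathbb{P}\psi_{\bfm{u}}$. Because the equation is linear, no small-amplitude limit is needed; the rescaling remark of Sec.~\ref{ss:second-order-linear} covers amplitude concerns. If $\bfm{K}$ is invertible, uniqueness of the steady state identifies the measured field with $\bfm{c}$.

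\emph{Nonlinear case.} I will consider the nudged stationary problem
\[
G(\bfm{u},\epsilon):=\bfm{F}[\bfm{u},\overline{\bfm{u}},\bfm{p}_F]-\bfm{f}+\epsilon\,\bfm{f}_{\text{adj}}=\bfm{0},
\]
with $\bfm{f}_{\text{adj}}$ frozen at the free steady state. Nudging instead with $\mathbb{P}\psi_{\overline{\bfm{u}}}$ evaluated at the nudged state changes the result only at $\mathcal{O}(\epsilon^2)$, so the derivative at $\epsilon=0$ is unaffected. I regard $G$ as a smooth map on $\mathbb{R}^{2N}$. Its real-linear differential in $\bfm{u}$ at $(\bfm{u}^*,0)$ is $\bfm{\delta}\mapsto\bfm{F}_{\bfm{u}}\bfm{\delta}+\bfm{F}_{\overline{\bfm{u}}}\overline{\bfm{\delta}}$, which is invertible by hypothesis. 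The implicit function theorem then gives a $C^1$ branch $\bfm{u}^{*,\epsilon}$ with $\bfm{u}^{*,0}=\bfm{u}^*$. Differentiating $G(\bfm{u}^{*,\epsilon},\epsilon)=\bfm{0}$ at $\epsilon=0$ gives, for $\bfm{a}':=d\bfm{u}^{*,\epsilon}/d\epsilon|_{0}$,
\[
\bfm{F}_{\bfm{u}}\bfm{a}'+\bfm{F}_{\overline{\bfm{u}}}\overline{\bfm{a}'}+\bfm{f}_{\text{adj}}=\bfm{0}.
\]
Substituting the conditions~\eqref{eq:stationary-nonlinear-cond}, namely $\bfm{F}_{\bfm{u}}=\overline{\bfm{F}_{\bfm{u}}}^{\,T}$ and $\bfm{F}_{\overline{\bfm{u}}}=\bfm{F}_{\overline{\bfm{u}}}^T$, turns this into exactly~\eqref{eq:stationary-adjoint}. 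The adjoint operator $\bfm{a}\mapsto\overline{\bfm{F}_{\bfm{u}}}^{\,T}\bfm{a}+\bfm{F}_{\overline{\bfm{u}}}^T\overline{\bfm{a}}$ is the real transpose of the tangent operator with respect to $\operatorname{Re}\langle\cdot,\cdot\rangle_{\mathbb C}$, so it is also invertible. Uniqueness of its solution then yields $\bfm{a}'=\bfm{a}$.

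\emph{Main obstacle.} The step I expect to need the most care is justifying that the hardware actually realizes this branch. The implicit function theorem only produces some solution of $G=\bfm{0}$. I will argue that linear exponential stability of $\bfm{u}^*$ for the relaxation persists under the $\mathcal{O}(\epsilon)$ perturbation: the Jacobian's spectrum moves continuously with $\epsilon$, so it stays in the open stable half-plane for small $\epsilon$. The nudged fixed point $\bfm{u}^{*,\epsilon}$ is therefore attracting and lies in the basin of $\bfm{u}^*$. Consequently, relaxing from the free equilibrium under the nudge converges to $\bfm{u}^{*,\epsilon}$ and not to another solution.

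A secondary check is the real-transpose identity for Wirtinger operators, which I will verify by writing $2\operatorname{Re}\langle\bfm{x},\bfm{F}_{\bfm{u}}\bfm{\delta}+\bfm{F}_{\overline{\bfm{u}}}\overline{\bfm{\delta}}\rangle_{\mathbb C}$ and moving the operators onto $\bfm{x}$.
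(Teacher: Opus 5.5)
Your proposal is correct and follows the paper's proof: in the linear case you conjugate the Hermitian-adjoint system to $\bfm{K}^T\bfm{c}=-\mathbb{P}\psi_{\bfm{u}}$ and use $\bfm{K}^T=\bfm{K}$, and in the nonlinear case you linearize the nudged steady-state equation, match it to Eq.~\eqref{eq:stationary-adjoint} via condition~\eqref{eq:stationary-nonlinear-cond}, and conclude by uniqueness. Your implicit-function-theorem branch, the $\mathcal{O}(\epsilon^2)$ remark on freezing the nudge, and the persistence-of-stability argument make rigorous what the paper handles with a one-line linearization and its remark that the free and nudged phases must settle on the same branch.
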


\subsection{Linear}\label{ss:stationary-linear}

For a linear stationary problem
\begin{equation}\label{eq:helmholtz-state}
    \bfm{K}(\bfm{p}_\bfm{K})\,\bfm{u}^* \;=\; \bfm{f}(\bfm{p}_\bfm{f}),
\end{equation}
with $\bfm{K}$ invertible, the internal force $\bfm{F}[\bfm{u},\overline{\bfm{u}},\bfm{p}_F]=\bfm{K}(\bfm{p}_\bfm{K})\bfm{u}$ is complex-linear, so its Wirtinger Jacobians are constant, $\bfm{F}_{\bfm{u}}=\bfm{K}(\bfm{p}_\bfm{K})$ and $\bfm{F}_{\overline{\bfm{u}}}=\bfm{0}$.
An example is a harmonically driven system at frequency $\omega$, for which $\bfm{K}=-\omega^2\bfm{M}+i\omega\bfm{D}+\bfm{K}_{\text{stat}}$.
The adjoint problem then becomes the Hermitian-adjoint system
\begin{equation}\label{eq:main-helmholtz-adj-state}
    \overline{\bfm{K}(\bfm{p}_\bfm{K})^{T}}\,\bfm{a}
    \;=\;
    -\,\mathbb{P}\,\psi_{\overline{\bfm{u}}}\bigl[\mathbb{P}\bfm{u}^*,\mathbb{P}\overline{\bfm{u}^*}\bigr],
\end{equation}
whose conjugate reads
\begin{equation}\label{eq:main-helmholtz-adj-state-cc}
    \bfm{K}(\bfm{p}_\bfm{K})^{T}\,\bfm{c}
    \;=\;
    -\,\mathbb{P}\,\psi_{\bfm{u}}\bigl[\mathbb{P}\bfm{u}^*,\mathbb{P}\overline{\bfm{u}^*}\bigr], \qquad \bfm{c} = \overline{\bfm{a}}.
\end{equation}
This is the same system if $\bfm{K}(\bfm{p}_\bfm{K})^{T}=\bfm{K}(\bfm{p}_\bfm{K})$. Hence, the adjoint field can be obtained by swapping the external drive for the adjoint source. \qed

\paragraph{Comparison with EP.}
This is the linear, stationary (and complex-valued) counterpart of Equilibrium Propagation (Sec.~\ref{ss:stationary-nonlinear}). Because the system is linear, no nudging is required, the adjoint source is applied at finite amplitude in a single experiment, and damping or gain are permissible: reciprocity, not Hermiticity, is what is asked of the hardware.

\paragraph{Holographic vs.\ intensity measurements.}
For complex amplitudes, holographic measurements are required. Two distinct cases allow the gradients to be reconstructed from intensity measurements.

(i) If $\bfm{K}$ is real-valued (and hence reciprocity implies Hermiticity), injecting $-\mathbb{P}\,\psi_{\overline{\bfm{u}}}$ in place of $-\mathbb{P}\,\psi_{\bfm{u}}$ propagates $\bfm{a}$ itself rather than $\bfm{c}=\overline{\bfm{a}}$. Three intensity distributions, $I_{\bfm{u}^*}$, $I_{\bfm{a}}$ and $I_{\bfm{u}^*+\bfm{a}}$, the last with the forward drive $\bfm{f}$ and the adjoint source applied simultaneously, then deliver the site-wise overlaps $2\operatorname{Re}[\overline{a_i}\,u^*_i]=I_{\bfm{u}^*+\bfm{a},i}-I_{\bfm{u}^*,i}-I_{\bfm{a},i}$. A real $\bfm{K}$ has a real $\bfm{K}_{\bfm{p}}$, so these evaluate~\eqref{eq:stationary-gradient-F} and~\eqref{eq:stationary-gradient-f} whenever $\bfm{K}_{\bfm{p}}$ is diagonal in the measurement basis, as for a local permittivity; a non-local $\bfm{K}_{\bfm{p}}$ also needs the cross-site overlaps $2\operatorname{Re}[\overline{a_i}\,u^*_j]$, which intensity distributions alone do not provide.

(ii) If $\bfm{K}$ is complex-valued and its associated transfer matrix is unitary, the analogous procedure of Sec.~\ref{s:schrodinger}, paragraph ``Holographic vs.\ intensity measurements,'' applies. It was developed in Hughes et al.~\cite{hughes2018training} and realized in Pai et al.~\cite{pai2023experimentally} in silicon photonic meshes of Mach--Zehnder interferometers and programmable phase shifters. The conjugated adjoint field $\bfm{c}$ is propagated from the outputs to the inputs, measured holographically, and re-encoded as complex-conjugated input sources, so that a further forward pass carries $\bfm{a}$ rather than $\bfm{c}$; superimposing it on the forward field gives $I_{\bfm{u}^*+\bfm{a}}$. Here $\bfm{K}_{\bfm{p}}$ may be complex on the diagonal, as for a phase shifter, which also requires $\operatorname{Im}[\overline{a_i}\,u^*_i]$. Re-encoding with an extra factor $i$ carries $i\bfm{a}$, for which $\operatorname{Re}[\overline{ia_i}\,u^*_i]=\operatorname{Im}[\overline{a_i}\,u^*_i]$ and $I_{i\bfm{a}}=I_{\bfm{a}}$, so a single further distribution $I_{\bfm{u}^*+i\bfm{a}}$ suffices.

\paragraph{Literature.}
Physical backpropagation in Helmholtz systems was theorized for feed-forward photonic neural networks by Hughes et al.~\cite{hughes2018training} and experimentally demonstrated by Pai et al.~\cite{pai2023experimentally} and more recently by Ashtiani et al.~\cite{ashtiani2026integrated}. In the optical--electronic--optical (OEO) architectures considered there, the optical output of each layer is detected, transformed by the electronic nonlinearity, and re-encoded for the next layer. During backpropagation, the field returned by each optical layer is multiplied electronically by the derivative of the intervening nonlinearity and re-encoded as the adjoint input to the preceding layer, allowing the single-layer protocol to be chained across layers (see Appendix~\ref{app:helmholtz-layered-photonic}). For general wave-scattering systems, the in situ physical adjoint computing protocol of Guillamon et al.~\cite{guillamon2025situ} is equivalent to the holographic formulation presented here; a mechanical equivalent was developed by Li and Mao~\cite{li2025topological,li2024training}. Closely related identities were derived by Wanjura and Marquardt~\cite{wanjura2024fully} for neuromorphic systems based on linear wave scattering. As shown in Appendix~\ref{app:helmholtz-green}, their scattering-matrix gradient formulas follow from the present adjoint identity.

\subsection{Nonlinear: Equilibrium Propagation}\label{ss:stationary-nonlinear}

Let $\bfm{u}^{*\epsilon}$ be the steady state reached with the adjoint source added to the drive,
\begin{equation}\label{eq:stationary-nudged}
    \bfm{F}\bigl[\bfm{u}^{*\epsilon},\overline{\bfm{u}^{*\epsilon}},\bfm{p}_F\bigr]
    -\bfm{f}(\bfm{p}_f)
    +\epsilon\,\mathbb{P}\psi_{\overline{\bfm{u}}}
    \bigl[\mathbb{P}\bfm{u}^{*\epsilon},\mathbb{P}\overline{\bfm{u}^{*\epsilon}}\bigr]
    =\bfm{0},
    \qquad |\epsilon|\ll1 .
\end{equation}
Subtracting the free steady state and linearizing in $\delta\bfm{u}=\bfm{u}^{*\epsilon}-\bfm{u}^*$ gives
\begin{equation}\label{eq:stationary-tangent}
    \bfm{F}_{\bfm{u}}\,\delta\bfm{u}
    +\bfm{F}_{\overline{\bfm{u}}}\,\overline{\delta\bfm{u}}
    +\epsilon\,\mathbb{P}\psi_{\overline{\bfm{u}}}
    =\mathcal{O}(\epsilon^2).
\end{equation}
Divide by $\epsilon$ and let $\epsilon\to0$. The rescaled response then obeys the adjoint equation~\eqref{eq:stationary-adjoint} precisely when the tangent operators satisfy condition~\eqref{eq:stationary-nonlinear-cond}, and by uniqueness
\begin{equation}\label{eq:stationary-ep}
    \bfm{a}=\lim_{\epsilon\to0}\frac{\bfm{u}^{*\epsilon}-\bfm{u}^*}{\epsilon}.
\end{equation}
This proves the nonlinear part of Theorem~\ref{thm:stationary}.  \qed

\paragraph{Physical adjoint algorithm.} The device is run twice and the two steady states are subtracted. (i) \emph{Free phase}: let the system settle under the forward drive alone, record $\bfm{u}^*$, and evaluate the loss derivative from the measured output. (ii) \emph{Nudged phase}: let it settle again with the adjoint source added to the drive at small amplitude $\epsilon$, and record $\bfm{u}^{*\epsilon}$ to extract the adjoint field via Eq.~\eqref{eq:stationary-ep}. The free and nudged phases must settle onto the same branch: $\epsilon$ must be small enough that the nudge neither crosses a basin boundary nor drives the system through a bifurcation.

\paragraph{Complex-valued example.} Condition~\eqref{eq:stationary-nonlinear-cond} holds identically whenever the force is the complex gradient of a real energy,
\begin{equation}\label{eq:stationary-complex-gradient-flow}
    \bfm{F}=E_{\overline{\bfm{u}}},
    \qquad E\in\mathbb{R},
\end{equation}
the complex counterpart of $\bfm{F}=E_{\bfm{u}}$: reality of $E$ makes $\bfm{F}_{\bfm{u}}=E_{\overline{\bfm{u}}\bfm{u}}$ Hermitian, and commuting partial derivatives make $\bfm{F}_{\overline{\bfm{u}}}=E_{\overline{\bfm{u}}\,\overline{\bfm{u}}}$ symmetric. A concrete case is a lattice with a Kerr nonlinearity,
\begin{equation}\label{eq:stationary-kerr-example}
    \bfm{F}=\bfm{K}\bfm{u}+g\,|\bfm{u}|^2\odot\bfm{u},
    \qquad
    E=\langle\bfm{u},\bfm{K}\bfm{u}\rangle_{\mathbb C}+\tfrac{g}{2}\sum_i|u_i|^4,
\end{equation}
with $\bfm{K}$ Hermitian and $g$ real, for which $\bfm{F}_{\bfm{u}}=\bfm{K}+2g\operatorname{diag}(|\bfm{u}|^2)$ and $\bfm{F}_{\overline{\bfm{u}}}=g\operatorname{diag}(\bfm{u}^2)$. 

\paragraph{Comparison with the trajectory case.} Section~\ref{s:schrodinger} asks instead for \emph{reciprocity} of the tangent operators, because its time-reversal mirror $\Theta_{\mathcal{PT}}=\bfm{\Pi}\mathcal{K}$ is antiunitary and already carries the conjugation. Here there is no background to reverse, so the condition is weaker: the same Kerr medium needs $\bfm{K}$ Hermitian \emph{and} $\bfm{\Pi}\bfm{K}\bfm{\Pi}=\bfm{K}^T$ there ($\bfm{K}$ real symmetric for $\bfm{\Pi}=\bfm{I}$), but Hermiticity alone here.

\paragraph{The real-valued case.} For real fields $\bfm{F}_{\overline{\bfm{u}}}$ is absent and Hermiticity is symmetry, so condition~\eqref{eq:stationary-nonlinear-cond} reduces to $\bfm{F}_{\bfm{u}}^T=\bfm{F}_{\bfm{u}}$, that is, $\bfm{F}=E_{\bfm{u}}$ for a potential $E$ defined locally (Sec.~\ref{s:physical-adjoints}, Remark). Adjoint and gradient are then real,
\begin{equation}\label{eq:ep-real}
    \bfm{F}_{\bfm{u}}^{T}\bfm{a}+\mathbb{P}\psi_{\bfm{u}}\bigl[\mathbb{P}\bfm{u}^*\bigr]=\bfm{0},
    \qquad
    \frac{d\chi}{d\bfm{p}}
    =
    \bigl\langle\bfm{a},\,\bfm{F}_{\bfm{p}_F}\bigl[\bfm{u}^*,\bfm{p}_F\bigr]-\bfm{f}_{\bfm{p}_f}(\bfm{p}_f)\bigr\rangle,
\end{equation}
with $\psi_{\bfm{u}}$ the ordinary real gradient; the factor $2\operatorname{Re}$ disappears together with the Wirtinger convention. The construction is then Equilibrium Propagation~\cite{scellier2017equilibrium}; Appendix~\ref{a:static-adjoint} derives this static adjoint and relates it to the trajectory adjoint of Sec.~\ref{s:first-order}.

\paragraph{Onsager reciprocity.} If the force is an Onsager-weighted gradient flow, $\bfm{F}=\bfm{V}E_{\bfm{u}}$ with a constant symmetric mobility (Sec.~\ref{s:Onsager}), the construction holds with the $\bfm{V}$-weighted nudge $\epsilon\,\bfm{V}\mathbb{P}\psi_{\bfm{u}}$ and the adjoint recovered as $\bfm{a}=\bfm{V}^{-1}\lim_{\epsilon\to0}(\bfm{u}^{*\epsilon}-\bfm{u}^*)/\epsilon$ (Appendix~\ref{a:onsager-second-order}). Positivity of $\bfm{V}$ enters only here: with the $\bfm{V}$-weighted nudge both the free and the nudged dynamics remain gradient flows in the $\bfm{V}^{-1}$ metric, descending $E$ tilted by the drive and, in the nudged phase, by $\epsilon\psi$. A local minimum is then a stable operating point, as Theorem~\ref{thm:stationary} assumes.

\paragraph{Literature.}
EP was first introduced by Scellier and Bengio~\cite{scellier2017equilibrium}, theoretically applied to nonlinear resistive networks by Kendall et al.~\cite{kendall2020training} and experimentally implemented on an Ising machine by Laydevant et al.~\cite{laydevant2024training}. It was later extended by Laborieux and Zenke~\cite{laborieux2022holomorphic} to holomorphic networks, where exact gradients can be recovered from finite-amplitude complex oscillations; there it is the nudge amplitude that is complexified and the dynamics must be holomorphic, $\bfm{F}_{\overline{\bfm{u}}}=\bfm{0}$.
A broad literature has clarified the relation between EP and recurrent backpropagation~\cite{scellier2019equivalence,ernoult2019updates}, reduced finite-nudge bias via symmetric nudging and scaled EP to deep convolutional networks~\cite{laborieux2021scaling}, introduced frequency propagation as a related frequency-multiplexed learning rule~\cite{anisetti2024frequency}, studied EP in strongly multistable systems~\cite{wang2024training} and developed approximate extensions beyond reciprocal energy-based dynamics~\cite{scellier2018generalization,laborieux2024improving}; in particular, Scurria et al.~\cite{scurria2026equilibrium} generalize to non-reciprocal systems by modifying the learning dynamics. Lin et al.~\cite{lin2026train} propose a generalized Equilibrium Propagation that organizes two-phase rules by the perturbative order of the nudge at the free equilibrium, recovering EP at first order and Coupled Learning at second, and apply it to linear resistor networks. What separates the exact from the approximate case is condition~\eqref{eq:stationary-nonlinear-cond}: both physical schemes proposed for complex fields, by Dal Cin et al.~\cite{cin2025training} and by Sajnok and Matuszewski~\cite{sajnok2025near}, miss the Hermiticity of $\bfm{F}_{\bfm{u}}$ and are accurate only to first order, in the nonlinearity strength and in the dissipation respectively.

\section{Spatial symmetries and twisted reciprocity}\label{s:symmetry-adjoints}

In all linear constructions of the preceding sections, the adjoint field propagated under the transposed forward operator, with the loss derivatives injected as sources and initial conditions at the output DOFs through the projector $\mathbb{P}$. Reciprocity, Euclidean or Onsager (Sec.~\ref{s:Onsager}), allowed the same hardware to realize this transposed propagation directly. In this section we show that reciprocity is only the simplest instance of a more general condition: it suffices that the forward operator and its transpose are related by a constant, known, invertible transformation $\bfm{S}$, in practice a symmetry operation of the hardware. This single condition recovers fully forward mode (FFM) training~\cite{xue2024fully} for reciprocal systems and extends exact physical backpropagation to a class of non-reciprocal systems.

For concreteness we work with linear first-order dynamics, $\dot{\bfm{u}}(t)+\bfm{K}(\bfm{p}_K,t)\bfm{u}(t)-\bfm{f}(t)=\bfm{0}$, whose adjoint field obeys (recall Sec.~\ref{s:first-order})
\begin{equation}\label{eq:sym-adjoint}
    \dot{\bfm{a}}(t)+\bfm{K}(\bfm{p}_K,T-t)^T\bfm{a}(t)+\mathbb{P}\theta_{\bfm{u}}\bigl[\mathbb{P}\bfm{u}(T-t)\bigr]=\bfm{0},
    \qquad
    \bfm{a}(0)=-\,\mathbb{P}\psi_{\bfm{u}}\bigl[\mathbb{P}\bfm{u}(T)\bigr].
\end{equation}
Everything carries over verbatim to the second-order (Sec.~\ref{s:physical-adjoints}), Schr\"odinger (Sec.~\ref{s:schrodinger}; the construction then acts on the conjugated adjoint field $\bfm{c}$) and stationary (Sec.~\ref{s:stationary}) cases.

\begin{phystheorem}[Sufficient conditions for physical adjoints in linear systems: intertwined reciprocity]\label{thm:sym-adjoint}
Suppose there exists a constant invertible matrix $\bfm{S}$, independent of the trainable parameters and known to the experimenter, such that
\begin{equation}\label{eq:sym-twisted}
    \bfm{S}\,\bfm{K}(\bfm{p}_K,t)^T\,\bfm{S}^{-1}
    =
    \bfm{K}(\bfm{p}_K,t)
\end{equation}
holds for all times $t$ and for every parameter value $\bfm{p}_K$ visited during training; for second-order systems we require in addition $\bfm{S}\bfm{M}^T\bfm{S}^{-1}=\bfm{M}$ and $\bfm{S}\bfm{D}^T\bfm{S}^{-1}=\bfm{D}$. Reciprocity of $\bfm{K}$ is not required. Then the transformed adjoint field $\bfm{d}(t)=\bfm{S}\bfm{a}(t)$ is propagated by the same hardware in a single finite-amplitude experiment, in which the adjoint source and initial conditions are pre-multiplied by $\bfm{S}$, and any explicit time dependence is replayed in reverse. The adjoint field is recovered digitally as $\bfm{a}(t)=\bfm{S}^{-1}\bfm{d}(t)$.
\end{phystheorem}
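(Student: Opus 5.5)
The plan is a direct substitution: multiply the adjoint equation~\eqref{eq:sym-adjoint} on the left by the constant matrix $\bfm{S}$ and rewrite it in terms of $\bfm{d}(t)=\bfm{S}\bfm{a}(t)$. Since $\bfm{S}$ does not depend on time, $\bfm{S}\dot{\bfm{a}}=\dot{\bfm{d}}$. Inserting $\bfm{S}^{-1}\bfm{S}$ in front of $\bfm{a}$ gives $\bfm{S}\bfm{K}(\bfm{p}_K,T-t)^T\bfm{a}=\bigl(\bfm{S}\bfm{K}(\bfm{p}_K,T-t)^T\bfm{S}^{-1}\bigr)\bfm{d}$. By the intertwining condition~\eqref{eq:sym-twisted}, evaluated at time $T-t$ and at the current parameter value, this equals $\bfm{K}(\bfm{p}_K,T-t)\bfm{d}$. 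The transformed field therefore satisfies
\[
\dot{\bfm{d}}(t)+\bfm{K}(\bfm{p}_K,T-t)\bfm{d}(t)+\bfm{S}\mathbb{P}\theta_{\bfm{u}}\bigl[\mathbb{P}\bfm{u}(T-t)\bigr]=\bfm{0},\qquad \bfm{d}(0)=-\bfm{S}\mathbb{P}\psi_{\bfm{u}}\bigl[\mathbb{P}\bfm{u}(T)\bigr].
\]
This has exactly the form of the forward dynamics $\dot{\bfm{u}}+\bfm{K}\bfm{u}-\bfm{f}=\bfm{0}$ on the same hardware. The only changes are that the drive is $-\bfm{S}\mathbb{P}\theta_{\bfm{u}}$, the initial state is the $\bfm{S}$-premultiplied loss derivative, and the schedule is replayed as $t\mapsto T-t$.

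Uniqueness of solutions of this linear ODE (continuous coefficients on $[0,T]$) means the field produced by the device is $\bfm{S}\bfm{a}$. Because $\bfm{S}$ is known and invertible, $\bfm{a}=\bfm{S}^{-1}\bfm{d}$ is then recovered digitally and inserted into the overlap integrals of Sec.~\ref{s:first-order}. Linearity lets the relocated sources act at finite amplitude, just as in the proofs of Theorems~\ref{thm:first-order-physical} and~\ref{thm:physical-adjoint}; they can be rescaled by a constant if needed, as in the Remark of Sec.~\ref{ss:second-order-linear}.

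For the second-order case I apply the same left-multiplication to~\eqref{eq:adjoint-method-second-order-adjoint} with $\bfm{F}_{\bfm{u}}^T=\bfm{K}^T$. The extra hypotheses $\bfm{S}\bfm{M}^T\bfm{S}^{-1}=\bfm{M}$ and $\bfm{S}\bfm{D}^T\bfm{S}^{-1}=\bfm{D}$ turn the mass and damping terms into $\bfm{M}\ddot{\bfm{d}}+\bfm{D}\dot{\bfm{d}}$. For the initial conditions I multiply~\eqref{eq:adjoint-method-second-order-adjoint-ic1} and~\eqref{eq:adjoint-method-second-order-adjoint-ic2} by $\bfm{S}$ and use $\bfm{S}\bfm{M}^T=\bfm{M}\bfm{S}$ and $\bfm{S}\bfm{D}^T=\bfm{D}\bfm{S}$. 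This yields $\bfm{M}\bfm{d}(0)=-\bfm{S}\mathbb{P}\psi_{\dot{\bfm{u}}}$ and $\bfm{M}\dot{\bfm{d}}(0)=-\bfm{D}\bfm{d}(0)-\bfm{S}\mathbb{P}\psi_{\bfm{u}}$, which are physical initial conditions built from the premultiplied loss derivatives.

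The Schr\"odinger case works the same way applied to~\eqref{eq:main-schrodinger-adjoint} with $\bfm{F}_{\overline{\bfm{u}}}=\bfm{0}$. Here I need $\bfm{S}$ to multiply $\bfm{c}$ complex-linearly, which is immediate since there is no conjugate term. The stationary case is the algebraic identity $\bfm{S}\bfm{K}^T\bfm{c}=\bfm{K}\bfm{S}\bfm{c}$ applied to~\eqref{eq:main-helmholtz-adj-state-cc}.

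The main obstacle is the bookkeeping that keeps the statement honest rather than any calculation. The condition~\eqref{eq:sym-twisted} must hold at every reversed time $T-t$ and at every parameter value visited, and $\bfm{S}$ must not depend on $\bfm{p}_K$; otherwise the intertwining fails mid-trajectory or $\bfm{a}$ cannot be recovered. It also has to be checked that the premultiplied sources $\bfm{S}\mathbb{P}\theta_{\bfm{u}}$ can actually be injected. This is a statement about actuation at the DOFs in the range of $\bfm{S}\mathbb{P}$, which in the FFM case are the input DOFs, $\bfm{S}\mathbb{P}\bfm{S}^{-1}=\mathbb{P}^{\text{in}}$. I would state this as a hardware assumption rather than try to prove it.
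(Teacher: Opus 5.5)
Your proposal is correct and follows the paper's argument. You substitute $\bfm{d}=\bfm{S}\bfm{a}$, left-multiply the adjoint equation by $\bfm{S}$, and use the intertwining condition at time $T-t$ to arrive at the forward hardware with $\bfm{S}$-premultiplied sources and initial conditions. Your explicit second-order, Schr\"odinger and stationary computations spell out what the paper asserts carries over verbatim.
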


To see this, substitute $\bfm{d}(t)=\bfm{S}\bfm{a}(t)$ into Eq.~\eqref{eq:sym-adjoint} and multiply from the left by $\bfm{S}$. Using condition~\eqref{eq:sym-twisted},
\begin{equation}\label{eq:sym-relocated}
    \dot{\bfm{d}}(t)+\bfm{K}(T-t)\bfm{d}(t)+\bfm{S}\mathbb{P}\,\theta_{\bfm{u}}\bigl[\mathbb{P}\bfm{u}(T-t)\bigr]=\bfm{0},
    \qquad
    \bfm{d}(0)=-\,\bfm{S}\mathbb{P}\,\psi_{\bfm{u}}\bigl[\mathbb{P}\bfm{u}(T)\bigr].
\end{equation}
This is the forward hardware, driven by the transformed adjoint source $\bfm{S}\mathbb{P}\theta_{\bfm{u}}$, a known linear recombination of the loss derivatives. \qed

\paragraph{Which $\bfm{S}$ qualify, and at what cost?} For a \emph{single fixed} $\bfm{K}$ an invertible $\bfm{S}$ always exists, since every square matrix is similar to its transpose. The requirement is that one known $\bfm{S}$ serve the entire trainable family $\bfm{K}(\bfm{p}_K,t)$; a parameter-dependent $\bfm{S}(\bfm{p}_K)$ in closed form also qualifies, at the price of recomputing it and re-adjusting the source and measurement weights after every update. The remaining cost is that of applying the transformed sources, and it depends on the kind of matrix $\bfm{S}$ is. A signed permutation covers the typical symmetry operations of the hardware (a mirror, a rotation by $\pi$, a point inversion, an exchange of two subsystems, a sign flip $\operatorname{diag}(\pm1)$ on a subset of DOFs). Then $\mathbb{P}^{\bfm{S}}=\bfm{S}\mathbb{P}\bfm{S}^{-1}$ again projects onto a set of DOFs, and the adjoint source and initial conditions are injected unchanged at relocated ports (most such operations satisfy $\bfm{S}^2=\bfm{I}$, so $\bfm{a}=\bfm{S}\bfm{d}$). An orthogonal $\bfm{S}$ preserves amplitudes, so no dynamic range is lost, but the sources become coherent combinations across DOFs. A merely invertible $\bfm{S}$ puts weights on the sources and on the recovered adjoint, and its condition number sets the cost in dynamic range.

\paragraph{Who applies the transformation and relation to Onsager reciprocity?} For a symmetric positive $\bfm{S}=\bfm{V}$, condition~\eqref{eq:sym-twisted} is precisely the $\bfm{V}$-reciprocity condition~\eqref{eq:V-reciprocal-real} of Sec.~\ref{s:Onsager}, whose linear statement is thereby a corollary of the theorem. The symmetry and Onsager instances, however, differ in where the transformation resides. A symmetry operation $\bfm{S}$ appears nowhere in the equations of motion: the experimenter implements it by relocating the sources and undoes it in the gradient overlaps. An Onsager reciprocity, by contrast, is part of the dynamics itself. This has two consequences. First, condition~\eqref{eq:sym-twisted} holds automatically for the gradient-flow structure $\bfm{K}=\bfm{V}\bfm{H}$ with symmetric $\bfm{H}$, since $\bfm{V}\bfm{K}^T\bfm{V}^{-1}=\bfm{V}\bfm{H}\bfm{V}\bfm{V}^{-1}=\bfm{K}$. 
Second, the $\bfm{V}$-weighting is largely applied by the physics. On the source side, external drives typically couple through the mobility: in the resistor--capacitor network of Sec.~\ref{s:Onsager}, the forward dynamics is $\dot{\bfm{u}}+\bfm{C}^{-1}\bfm{G}\bfm{u}=\bfm{C}^{-1}\bfm{j}$, so injecting the raw error currents $\bfm{j}_{\text{adj}}=-\mathbb{P}\theta_{\bfm{u}}$ realizes the required source $\bfm{V}\mathbb{P}\theta_{\bfm{u}}$ with $\bfm{V}=\bfm{C}^{-1}$; likewise, initializing by depositing the raw error charges $-\mathbb{P}\psi_{\bfm{u}}$ realizes $\bfm{d}(0)=-\bfm{V}\mathbb{P}\psi_{\bfm{u}}$. On the recovery side, if the trainable parameters sit in the symmetric part $\bfm{H}$, the transformation cancels identically in the overlaps,
\begin{equation}\label{eq:sym-onsager-cancel}
    \left\langle \bfm{a},\bfm{K}_{\bfm{p}}\bfm{u}\right\rangle
    =
    \bfm{a}^T\bfm{V}\bfm{H}_{\bfm{p}}\bfm{u}
    =
    (\bfm{V}\bfm{a})^T\bfm{H}_{\bfm{p}}\bfm{u}
    =
    \bfm{d}^T\bfm{H}_{\bfm{p}}\bfm{u},
\end{equation}
using only $\bfm{V}=\bfm{V}^T$. For the resistor network this gives, for the conductance $g_e$ of the edge connecting nodes $m$ and $n$,
\begin{equation}
    \frac{d\chi}{dg_e}
    =
    \int_0^T
    \bigl(d_m(T-t)-d_n(T-t)\bigr)
    \bigl(u_m(t)-u_n(t)\bigr)\,dt.
\end{equation}
Hence, the measured voltages of the forward and adjoint runs pair directly, with no capacitance weights anywhere. The two cases also differ in what must be known. In the symmetry case the experimenter must know $\bfm{S}$ explicitly and applies it twice: once physically, by injecting the adjoint source and initial conditions at the relocated ports, and once digitally, by relabeling indices in the overlaps to extract $\bfm{a}=\bfm{S}^{-1}\bfm{d}$; both steps are pure routing. In the Onsager case the numerical values of $\bfm{V}$ need never be known: it suffices to know, structurally, that the dynamics is a gradient flow with a constant, parameter-independent mobility and that the trainable parameters sit in $\bfm{H}$. Sources and initial conditions are specified in the natural current and charge variables, where the hardware applies $\bfm{V}$, and $\bfm{V}$ cancels not only in the overlaps~\eqref{eq:sym-onsager-cancel} but also in gradients with respect to drive parameters, $d\chi/d\bfm{p}_j=-\int_0^T\bfm{d}(T-t)^T\bfm{j}_{\bfm{p}_j}\,dt$ for $\bfm{f}=\bfm{V}\bfm{j}$, and, if the initial state is set by deposited charges $\bfm{q}_0=\bfm{C}\bfm{u}_0$, in the initial-condition gradient $d\chi/d\bfm{q}_0=-\bfm{d}(T)$. 

\paragraph{Reciprocal systems and fully forward mode training.} For reciprocal hardware, $\bfm{K}^T=\bfm{K}$, the identity $\bfm{S}=\bfm{I}$ satisfies condition~\eqref{eq:sym-twisted}, and Theorem~\ref{thm:sym-adjoint} reduces to the linear cases of Theorems~\ref{thm:physical-adjoint}--\ref{thm:stationary}, with the adjoint sources injected at the output DOFs. Suppose the hardware additionally possesses a spatial symmetry, $\bfm{S}\bfm{K}(t)\bfm{S}^{-1}=\bfm{K}(t)$, that maps the output onto the input DOFs, $\bfm{S}\mathbb{P}\bfm{S}^{-1}=\mathbb{P}^{\text{in}}$. Then $\bfm{S}$ satisfies condition~\eqref{eq:sym-twisted} as well, and the adjoint experiment may instead be driven at the inputs: the adjoint source and initial conditions enter the device where the data enters and propagates forward. This is fully forward mode training~\cite{xue2024fully}. 
Reciprocity does the existence work and the symmetry performs the relocation of the injection ports to the experimentally convenient side of the device. The price is that parameter updates must preserve the symmetry, so symmetry-related parameters are tied and trained jointly. The original derivation of Ref.~\cite{xue2024fully} is phrased in terms of the Green's functions of static, Helmholtz-type propagation (cf.~Sec.~\ref{ss:stationary-linear}), with Lorentz reciprocity entering as the symmetry of the Green's function. The formulation above extends it from the static setting to full time-dependent trajectory learning.

\subsection{Non-reciprocal systems: twisted reciprocity} 
If there exists an orthogonal transformation, $\bfm{S}$, such that condition~\eqref{eq:sym-twisted} holds for a non-reciprocal $\bfm{K}$ we call the system \emph{twisted reciprocal}. Decomposing $\bfm{K}=\bfm{K}_s+\bfm{K}_a$ into its symmetric and antisymmetric parts (conjugation by an orthogonal $\bfm{S}$ preserves both), condition~\eqref{eq:sym-twisted} holds if and only if
\begin{equation}\label{eq:sym-structure}
    \bfm{S}\bfm{K}_s\bfm{S}^{-1}=\bfm{K}_s,
    \qquad
    \bfm{S}\bfm{K}_a\bfm{S}^{-1}=-\bfm{K}_a,
\end{equation}
a reciprocal backbone that is even under the symmetry, carrying a non-reciprocal part that is odd. Physically, the non-reciprocity must therefore originate from a pseudoscalar or time-odd quantity mounted on a symmetric structure: a uniform hopping bias~\cite{hatano1996localization,helbig2020generalized,weidemann2020topological}, a rotation sense~\cite{nash2015topological,fleury2014sound,scheibner2020odd}, or an in-plane magnetization, with the symmetry supplying the field reversal required by the Onsager--Casimir relation $\bfm{K}(\bfm{B})^T=\bfm{K}(-\bfm{B})$~\cite{casimir1945onsager}. The symmetry certifies that the violation of reciprocity is exactly compensated between symmetry-transformed pairs of DOFs, while the operator itself stays non-reciprocal, $\bfm{K}^T\neq\bfm{K}$. Condition~\eqref{eq:sym-twisted} is the transpose-type symmetry of the non-Hermitian symmetry classification~\cite{kawabata2019symmetry}, here repurposed as a trainability condition.

\paragraph{Example: a Hatano--Nelson chain.}
Consider $N$ sites on an open chain with asymmetric nearest-neighbor hoppings and a real on-site potential~\cite{hatano1996localization},
\begin{equation}\label{eq:hn-chain}
    K_{n+1,n}=Je^{+h},
    \qquad
    K_{n,n+1}=Je^{-h},
    \qquad
    K_{nn}=\varepsilon_n,
    \qquad
    J,h,\varepsilon_n\in\mathbb{R}.
\end{equation}
\begin{figure}[t!]
    \centering
    \includegraphics[width=1\textwidth]{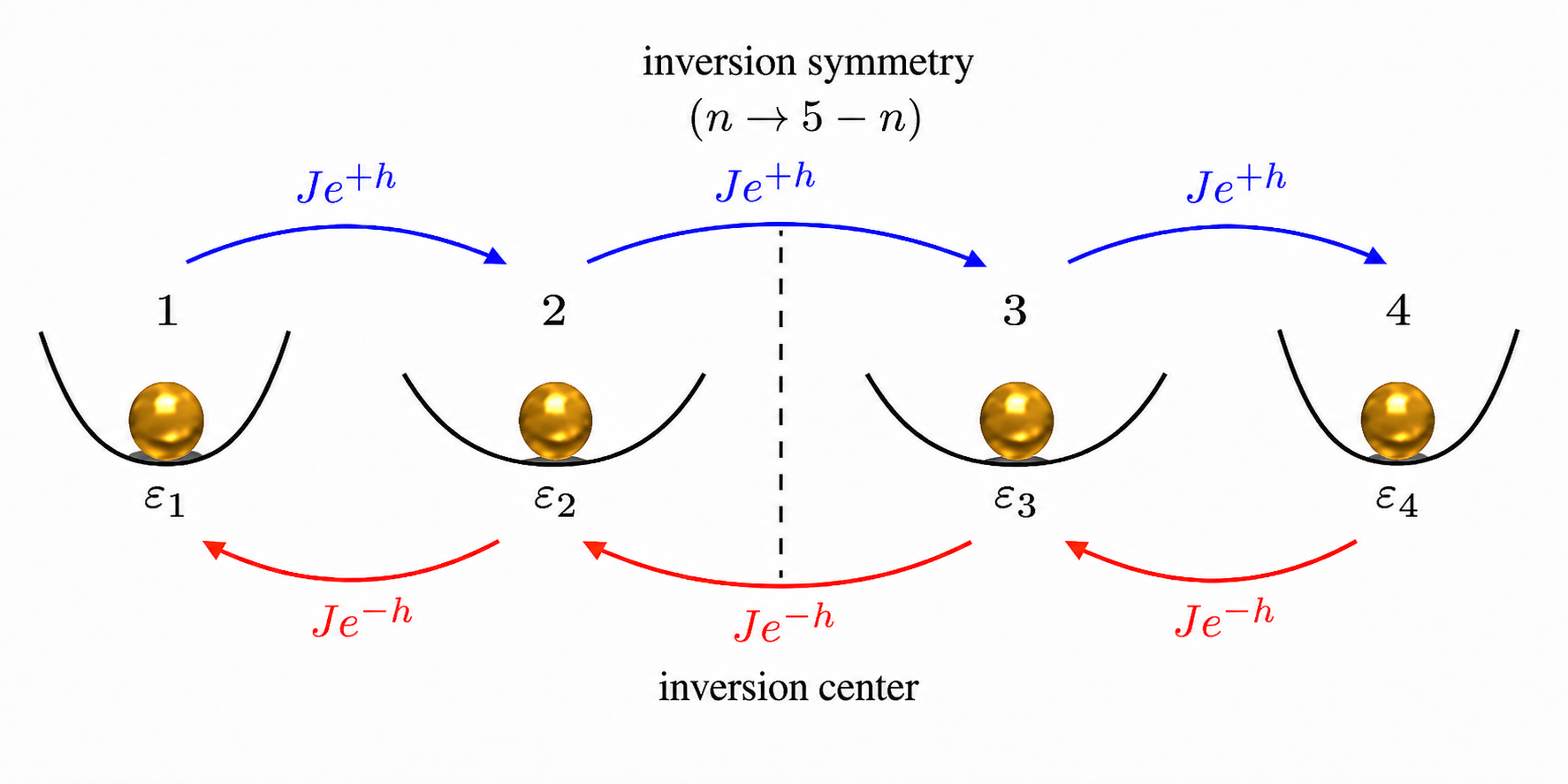}
    \caption{The Hatano--Nelson chain of Eq.~\eqref{eq:hn-chain} for $N=4$: sites with on-site potentials $\varepsilon_n$, coupled by asymmetric nearest-neighbor hoppings, amplified to the right ($Je^{+h}$, blue) and attenuated to the left ($Je^{-h}$, red), so that $\bfm{K}^T\neq\bfm{K}$ and the chain transmits with one-way gain. Transposition reverses the arrow of every hop; the spatial inversion $n\to N+1-n$ about the marked inversion center reverses it again. If the on-site potential is mirror-symmetric, $\varepsilon_n=\varepsilon_{N+1-n}$, the chain is therefore twisted reciprocal, $\bfm{S}\bfm{K}^T\bfm{S}^{-1}=\bfm{K}$, and by Theorem~\ref{thm:sym-adjoint} the adjoint field is obtained in the same hardware, with the same bias $h$, by injecting the spatially flipped adjoint source and initial conditions.}
    \label{fig:Hatano_Nelson_chain}
\end{figure}

For $N=4$, the operator and the spatial inversion of the chain, $S_{mn}=\delta_{m,N+1-n}$, read
\begin{equation}
    \bfm{K}
    =
    \begin{bmatrix}
        \varepsilon_1 & Je^{-h} & 0 & 0\\
        Je^{+h} & \varepsilon_2 & Je^{-h} & 0\\
        0 & Je^{+h} & \varepsilon_3 & Je^{-h}\\
        0 & 0 & Je^{+h} & \varepsilon_4
    \end{bmatrix},
    \qquad
    \bfm{S}
    =
    \begin{bmatrix}
        0 & 0 & 0 & 1\\
        0 & 0 & 1 & 0\\
        0 & 1 & 0 & 0\\
        1 & 0 & 0 & 0
    \end{bmatrix}.
\end{equation}
The dynamics may be taken Schr\"odinger-type, $i\dot{\bfm{u}}+\bfm{K}\bfm{u}-\bfm{f}=\bfm{0}$, or first-order dissipative, $\dot{\bfm{u}}+\bfm{K}\bfm{u}-\bfm{f}=\bfm{0}$. For $h\neq0$ rightward hops are amplified and leftward hops attenuated. Hence, $\bfm{K}^T\neq\bfm{K}$, and the chain transmits with exponential one-way gain. This class is the linearized description of the standard platforms of the non-Hermitian skin effect~\cite{kawabata2019symmetry}, including feedback-biased active metamaterials~\cite{brandenbourger2019non,ghatak2020observation}, non-reciprocal topolectrical circuits~\cite{helbig2020generalized}, and non-reciprocal photonic and acoustic lattices~\cite{weidemann2020topological,fleury2014sound}. The adjoint field~\eqref{eq:sym-adjoint} must propagate under $\bfm{K}^T$, the attenuating direction, which the hardware does not provide.

The spatial inversion resolves this: transposition reverses the arrow of every hop and the inversion reverses it again, so condition~\eqref{eq:sym-twisted} holds, provided the on-site potential is mirror-symmetric, $\varepsilon_n=\varepsilon_{N+1-n}$. By Theorem~\ref{thm:sym-adjoint} the adjoint is then obtained from a single finite-amplitude run of the same chain, with the same bias $h$. If the loss is measured at the right edge, the spatially flipped adjoint source and initial conditions are injected at the left edge, and the gradient overlaps are evaluated with $\bfm{a}(t)=\bfm{S}\bfm{d}(t)$. Heterogeneous couplings are admissible provided their profiles are mirror-even, $J_n=J_{N-n}$, $h_n=h_{N-n}$, $\varepsilon_n=\varepsilon_{N+1-n}$. Upon optimizing the chain, the symmetry must be preserved, halving the number of design parameters.

\section{Conclusions}\label{s:conclusions}

We have unified a broad range of formally exact physical gradient methods~\cite{scellier2017equilibrium,lopez2023self,pourcel2025learning,xue2024fully,hermans2015trainable,hughes2018training,pai2023experimentally,zhou2020situ,mididoddi2025threading,wanjura2024fully,guillamon2025situ,li2024training,li2025topological} using the adjoint method from PDE-constrained optimization. We have clarified that linear and nonlinear systems behave fundamentally differently and are governed by different sufficient conditions for obtaining the adjoint field on the same hardware. For linear systems reciprocity is sufficient and the adjoint follows from a single finite-amplitude experiment, whereas nonlinear trajectory systems additionally require a time-reversal mirror, reciprocity of the linearized dynamics, and infinitesimal nudging. For linear systems we have further shown that reciprocity is only the simplest instance of a more general condition: it suffices that the forward operator and its transpose are related by a constant, known, invertible transformation, in practice a symmetry operation of the hardware. This intertwined reciprocity yields exact physical gradients in a class of non-Hermitian, non-reciprocal systems, exemplified by the Hatano--Nelson chain~\cite{hatano1996localization}. We have also introduced several generalizations, covering non-autonomous, Onsager-reciprocal, and $\mathcal{PT}$-symmetric dynamics.

Because on-device gradients are far easier to obtain in linear than in nonlinear hardware, encoding the inputs in the parameters of a linear system, a scheme known as structural nonlinearity~\cite{xia2024nonlinear,yildirim2024nonlinear,richardson2025nonlinear}, is a particularly attractive route to physical computing: the device computes nonlinearly in its inputs, yet its adjoint field, and with it the gradient, remains accessible at finite amplitude and with damping present~\cite{wanjura2024fully}. The non-autonomous generalization makes the temporal modulation of the internal forces itself a trainable resource and may thereby enable novel ways of computing with physical systems.

As discussed in Sec.~\ref{s:local-learning-rules}, when the parameters enter the forward operators linearly, the gradient overlaps are assembled entirely from measured trajectories and never reference the parameter values, so slow drift, aging, and calibration offsets are absorbed automatically. What must still be known is the structure of the model: which measured signals to pair, and where the trainable parameters sit in the equations of motion. How accurately that structure must be known, and how the gradients degrade when it is wrong, remains open.

\paragraph{Acknowledgments} We thank Dr. Sam Dillavou, Dr. Fatih Dinc and Andy Davis for helpful discussions and feedback on the manuscript. C.B. was supported by the Swiss National Science Foundation (SNSF) through a Postdoc.Mobility fellowship (P500PT 217673/1).
YG and HT were sponsored by AFOSR MURI Grant No. FA955022-1-0203 and by the Army Research Office under Grant No. W911NF-25-1-0261. The views and conclusions contained in this document are those of the authors and should not be interpreted as representing the official policies, either expressed or implied, of the Army Research Office or the U.S. Government. The U.S. Government is authorized to reproduce and distribute reprints for Government purposes notwithstanding any copyright notation herein.
This work was performed in part at Aspen Center for Physics, which is supported by National Science Foundation grant PHY-2210452.

\bibliographystyle{unsrt}
\bibliography{mybibfile}
\appendix
\phantomsection
\addcontentsline{toc}{section}{Appendix}
\stopcontents[main]
\startcontents[appendix]
\section*{Contents of the appendices}
\begingroup\hypersetup{linkcolor=sectioncolor}
\printcontents[appendix]{}{0}{\setcounter{tocdepth}{2}}
\endgroup
\section{Derivation of the adjoint state equation and the gradient formula}\label{apA}

In this appendix we derive the adjoint state equation and the gradient formula through the Lagrangian method of PDE-constrained optimization \cite{hinze2009optimization, Giles2000}, in the Lagrange-multiplier form that is standard for time-dependent inverse problems \cite{Plessix2006}.
The derivation is organized around a single object, the Lagrangian $\mathcal{L}[\bfm{u},\bfm{b},\boldsymbol{\lambda},\boldsymbol{\mu},\bfm{p}]$, viewed as a functional of \emph{independent} arguments: the state trajectory $\bfm{u}$, the multipliers $\bfm{b}$, $\boldsymbol{\lambda}$, $\boldsymbol{\mu}$, and the parameters $\bfm{p}$.
Every ingredient of the adjoint method is one of its partial derivatives:
\begin{itemize}
    \item stationarity with respect to the multipliers, $\mathcal{L}_{\bfm{b}} = 0$, $\mathcal{L}_{\boldsymbol{\lambda}} = 0$, $\mathcal{L}_{\boldsymbol{\mu}} = 0$, recovers the \emph{forward problem} (state equation and initial conditions);
    \item stationarity with respect to the state trajectory, $\mathcal{L}_{\bfm{u}} = 0$, defines the \emph{adjoint problem};
    \item at a point where all four conditions hold, the remaining partial derivative $\mathcal{L}_{\bfm{p}}$ equals the total derivative $d\chi/d\bfm{p}$ of the reduced cost, and evaluating it yields the \emph{gradient} as forward--adjoint overlap integrals.
\end{itemize}

\subsection{State equation}

Let $V$ be a real Hilbert space with inner product $\langle\cdot,\cdot\rangle$.
Here $V = \mathbb{R}^N$ (a discrete system with $N$ degrees of freedom),
\begin{equation}
    \langle \bfm{x},\, \bfm{y}\rangle = \sum_{n=1}^N x_n\, y_n = \bfm{x}^T\bfm{y}.
\end{equation}
Let $\bfm{u}$ be a path
\begin{equation}
    \bfm{u} \in C^2([0,T],\, V),
\end{equation}
so that each $\bfm{u}(t) \in V$.
The state equation is
\begin{equation}\label{apA:eq:state}
    \bfm{M}(\bfm{p}_M)\,\ddot{\bfm{u}}(t) + \bfm{D}(\bfm{p}_D)\,\dot{\bfm{u}}(t) + \bfm{F}[\bfm{u}(t),\bfm{p}_F,t] - \bfm{f}(t,\bfm{p}_f) = 0, \qquad t \in [0,T],
\end{equation}
with parametrized initial conditions $\bfm{u}(0) = \bfm{u}_0(\bfm{p}_{u_0})$, $\dot{\bfm{u}}(0) = \bfm{v}_0(\bfm{p}_{v_0})$.
Here $\bfm{M}(\bfm{p}_M), \bfm{D}(\bfm{p}_D): V \to V$ are linear operators, $\bfm{F}[\cdot,\bfm{p}_F,t]: V \to V$ is a possibly nonlinear operator, and $\bfm{f}:[0,T] \times \mathbb{R}^{M_f}\to V$.
We assume that $\bfm{M}(\bfm{p}_M)$ is invertible for all visited $\bfm{p}_M$.
Each term carries its own parameters $\bfm{p}_x \in \mathbb{R}^{M_x}$ for $x\in\{M,D,F,f,u_0,v_0\}$, collected into a single vector $\bfm{p} = [\bfm{p}_M,\bfm{p}_D,\bfm{p}_F,\bfm{p}_f,\bfm{p}_{u_0},\bfm{p}_{v_0}] \in \mathbb{R}^M$ with $M=\sum_x M_x$.

We use round brackets for explicit parameter dependence, e.g.\ $\bfm{M}(\bfm{p}_M)$, and square brackets for functional dependence, e.g.\ $\bfm{F}[\bfm{u}]$.

\subsection{Cost function}

Given a path $\bfm{u} \in C^2([0,T],\, V)$, we define the cost $\chi: C^2([0,T],\, V) \to \mathbb{R}$ to be
\begin{equation}\label{apA:eq:cost}
    \chi[\bfm{u}] := \int_0^T\theta[\mathbb{P}\bfm{u}(t)]dt + \psi[\mathbb{P}\bfm{u}(T), \mathbb{P}\dot{\bfm{u}}(T)],
\end{equation}
where $\mathbb{P}$ is an orthogonal projector of the state onto output degrees of freedom, i.e.\ $\mathbb{P}=\mathbb{P}^T=\mathbb{P}^2$, and $\theta:V\to\mathbb{R}$ and $\psi:V \times V \to \mathbb{R}$ are possibly nonlinear functionals.

\subsection{Differentiability assumptions}

The dynamics may be non-autonomous, so $\bfm{F}$ carries an explicit time argument alongside its state and parameter arguments.
We require $\bfm{F}$ to be $C^2$ in the latter two and continuous in $t$, and $\bfm{f}$ to be $C^2$ in $\bfm{p}$. The derivatives below are taken at fixed $t$, which we suppress in this subsection.
Concretely, for each fixed $\bfm{p}$ the map $\bfm{u} \mapsto \bfm{F}[\bfm{u},\bfm{p}]$ has a Fr\'echet derivative $\bfm{F}_{\bfm{u}}: V \to V$, a bounded linear operator satisfying
\begin{equation}\label{apA:eq:Fu}
    \bfm{F}[\bfm{u}+\varepsilon\bfm{h},\,\bfm{p}] = \bfm{F}[\bfm{u},\bfm{p}] + \varepsilon\,\bfm{F}_{\bfm{u}}[\bfm{u},\bfm{p}]\,\bfm{h} + \mathcal{O}(\varepsilon^2) \qquad \forall\,\bfm{h} \in V,
\end{equation}
where $\bfm{F}_{\bfm{u}}[\bfm{u},\bfm{p}] \in \mathbb{R}^{N\times N}$ is the Jacobian with entries $\left(\frac{\partial F_i}{\partial u_j}\right)$.
Its adjoint $\bfm{F}_{\bfm{u}}^T: V \to V$ is defined by
\begin{equation}\label{apA:eq:Fu-adjoint}
    \langle \bfm{x},\, \bfm{F}_{\bfm{u}}\bfm{h}\rangle = \langle \bfm{F}_{\bfm{u}}^T\bfm{x},\, \bfm{h}\rangle \qquad \forall\,\bfm{x},\bfm{h} \in V.
\end{equation}
This identity is purely algebraic in $V$, so no boundary terms arise.

Analogously, for each fixed $\bfm{u}$ the map $\bfm{p} \mapsto \bfm{F}[\bfm{u},\bfm{p}]$ has a derivative $\bfm{F}_{\bfm{p}}(\bfm{u},\bfm{p}): \mathbb{R}^M \to V$ (the $N \times M$ matrix $\left(\frac{\partial F_i}{\partial p_j}\right)$), and for each fixed $t$ the map $\bfm{p} \mapsto \bfm{f}(t,\bfm{p})$ has a derivative $\bfm{f}_{\bfm{p}}(t,\bfm{p}): \mathbb{R}^M \to V$.
All these derivatives exist and depend continuously on their arguments.
By the Picard--Lindel\"of theorem, the state equation~\eqref{apA:eq:state} then has a unique solution $\bfm{u}(\cdot;\bfm{p})$ for each $\bfm{p}$, and this solution depends smoothly on $\bfm{p}$.

\subsection{Constrained optimization problem}

The constrained optimization problem is the following.
\begin{problem}[Constrained optimization of a second-order ODE]\label{apA:prob:opt}
Find
\begin{equation}\label{apA:eq:opt}
    \min_{\bfm{p}\in\mathbb{R}^M} \chi[\bfm{u}] \qquad \text{subject to} \qquad
    \begin{cases}
    \bfm{M}(\bfm{p}_M)\,\ddot{\bfm{u}}(t) + \bfm{D}(\bfm{p}_D)\,\dot{\bfm{u}}(t) + \bfm{F}[\bfm{u}(t),\bfm{p}_F,t] = \bfm{f}(t,\bfm{p}_f), & t \in [0,T], \\[4pt]
    \bfm{u}(0) = \bfm{u}_0(\bfm{p}_{u_0}), \\[4pt]
    \dot{\bfm{u}}(0) = \bfm{v}_0(\bfm{p}_{v_0}).
    \end{cases}
\end{equation}
\end{problem}
The adjoint method computes the gradient of the cost function, $d\chi/d\bfm{p}$, at the cost of one additional ODE solve, independent of the number of parameters $M$.
Once the gradient is obtained, the optimization problem can be solved, e.g., via gradient descent:
\begin{equation}\label{apA:eq:update}
    \bfm{p} \leftarrow \bfm{p} - \alpha\,\frac{d\chi}{d\bfm{p}},
\end{equation}
where $\alpha > 0$ is the learning rate.

\subsection{The Lagrangian}

We enforce the state equation with a multiplier path $\bfm{b} \in C^2([0,T], V)$ and the two initial conditions with multiplier vectors $\boldsymbol{\lambda}, \boldsymbol{\mu} \in V$, and define
\begin{multline}\label{apA:eq:lagrangian}
    \mathcal{L}[\bfm{u}, \bfm{b}, \boldsymbol{\lambda}, \boldsymbol{\mu}, \bfm{p}]
    := \int_0^T \langle \bfm{b}(t),\; \bfm{M}(\bfm{p}_M)\ddot{\bfm{u}}(t) + \bfm{D}(\bfm{p}_D)\dot{\bfm{u}}(t) + \bfm{F}[\bfm{u}(t),\bfm{p}_F,t] - \bfm{f}(t,\bfm{p}_f) \rangle\, dt \\
    + \langle \boldsymbol{\lambda},\; \bfm{u}(0) - \bfm{u}_0(\bfm{p}_{u_0}) \rangle
    + \langle \boldsymbol{\mu},\; \dot{\bfm{u}}(0) - \bfm{v}_0(\bfm{p}_{v_0}) \rangle
    + \chi[\bfm{u}].
\end{multline}
This is a functional $\mathcal{L}: C^2([0,T],V) \times C^2([0,T],V) \times V \times V \times \mathbb{R}^M \to \mathbb{R}$.

The five arguments of $\mathcal{L}$ are \emph{independent} variables: $\mathcal{L}$ is defined on the full, unconstrained product space.
In particular, $\bfm{u}$ is an arbitrary path and is \emph{not} assumed to solve the state equation, while $\bfm{p}$ enters only through the explicit dependencies $\bfm{M}(\bfm{p}_M)$, $\bfm{D}(\bfm{p}_D)$, $\bfm{F}[\cdot,\bfm{p}_F,\cdot]$, $\bfm{f}(\cdot,\bfm{p}_f)$, $\bfm{u}_0(\bfm{p}_{u_0})$, $\bfm{v}_0(\bfm{p}_{v_0})$.
We may therefore differentiate $\mathcal{L}$ with respect to each argument separately, holding the other four fixed.
These \emph{partial} (Fr\'echet) derivatives are denoted $\mathcal{L}_{\bfm{b}}$, $\mathcal{L}_{\boldsymbol{\lambda}}$, $\mathcal{L}_{\boldsymbol{\mu}}$, $\mathcal{L}_{\bfm{u}}$, and $\mathcal{L}_{\bfm{p}}$.
The remainder of the derivation consists of computing each of them in turn.

\subsection{Stationarity with respect to the multipliers: the forward problem}\label{apA:ss:forward}

The Lagrangian is affine in $\bfm{b}$, $\boldsymbol{\lambda}$, and $\boldsymbol{\mu}$, so these three derivatives can be read off directly. Identifying them with elements of the respective spaces through the inner product,
\begin{equation}\label{apA:eq:La}
    \mathcal{L}_{\bfm{b}}(t) = \bfm{M}\ddot{\bfm{u}}(t) + \bfm{D}\dot{\bfm{u}}(t) + \bfm{F}[\bfm{u}(t),\bfm{p}_F,t] - \bfm{f}(t,\bfm{p}_f),
\end{equation}
and
\begin{equation}\label{apA:eq:Llm}
    \mathcal{L}_{\boldsymbol{\lambda}} = \bfm{u}(0) - \bfm{u}_0(\bfm{p}_{u_0}), \qquad
    \mathcal{L}_{\boldsymbol{\mu}} = \dot{\bfm{u}}(0) - \bfm{v}_0(\bfm{p}_{v_0}),
\end{equation}
so $\mathcal{L}_{\bfm{b}} = 0$ holds if and only if $\bfm{u}$ satisfies the state equation~\eqref{apA:eq:state} on $[0,T]$, and $\mathcal{L}_{\boldsymbol{\lambda}} = 0$, $\mathcal{L}_{\boldsymbol{\mu}} = 0$ if and only if $\bfm{u}$ satisfies the initial conditions.

Stationarity with respect to a multiplier thus recovers the constraint that the multiplier enforces, which is the generic mechanism of Lagrangian relaxation.
The three conditions together state that $\bfm{u} = \bfm{u}(\cdot;\bfm{p})$ is the forward solution.

\subsection{Stationarity turns the partial derivative into the gradient}\label{apA:ss:key-identity}

Fix $\bfm{p}$ and evaluate $\mathcal{L}$ at the forward solution $\bfm{u} = \bfm{u}(\cdot;\bfm{p})$.
The constraint terms in~\eqref{apA:eq:lagrangian} vanish, so
\begin{equation}\label{apA:eq:L-equals-chi}
    \chi(\bfm{p}) = \mathcal{L}[\bfm{u}(\cdot;\bfm{p}),\, \bfm{b},\, \boldsymbol{\lambda},\, \boldsymbol{\mu},\, \bfm{p}] \qquad \forall\, \bfm{b},\, \boldsymbol{\lambda},\, \boldsymbol{\mu}.
\end{equation}
Since the multipliers are arbitrary in~\eqref{apA:eq:L-equals-chi}, we may let them, too, depend on $\bfm{p}$: pick any assignments $\bfm{b}(\cdot;\bfm{p})$, $\boldsymbol{\lambda}(\bfm{p})$, $\boldsymbol{\mu}(\bfm{p})$ that depend differentiably on $\bfm{p}$ (the adjoint state constructed below does).
Differentiating~\eqref{apA:eq:L-equals-chi} with respect to $\bfm{p}$ by the chain rule, one term per argument of $\mathcal{L}$, gives
\begin{equation}\label{apA:eq:chain-rule}
    \frac{d\chi}{d\bfm{p}}
    = \mathcal{L}_{\bfm{u}}\frac{d\bfm{u}}{d\bfm{p}}
    + \mathcal{L}_{\bfm{b}}\frac{d\bfm{b}}{d\bfm{p}}
    + \mathcal{L}_{\boldsymbol{\lambda}}\frac{d\boldsymbol{\lambda}}{d\bfm{p}}
    + \mathcal{L}_{\boldsymbol{\mu}}\frac{d\boldsymbol{\mu}}{d\bfm{p}}
    + \mathcal{L}_{\bfm{p}},
\end{equation}
with every partial derivative evaluated at the point $(\bfm{u}(\cdot;\bfm{p}),\, \bfm{b}(\cdot;\bfm{p}),\, \boldsymbol{\lambda}(\bfm{p}),\, \boldsymbol{\mu}(\bfm{p}),\, \bfm{p})$.
On the left stands the \emph{total} derivative of the reduced cost. On the right stand \emph{partial} derivatives of $\mathcal{L}$ multiplying the sensitivities of its arguments.

Now inspect~\eqref{apA:eq:chain-rule} term by term.
The three multiplier terms vanish because $\bfm{u}(\cdot;\bfm{p})$ is the forward solution: by Sec.~\ref{apA:ss:forward}, $\mathcal{L}_{\bfm{b}} = 0$, $\mathcal{L}_{\boldsymbol{\lambda}} = 0$, $\mathcal{L}_{\boldsymbol{\mu}} = 0$ at this point, whatever the sensitivities $d\bfm{b}/d\bfm{p}$, $d\boldsymbol{\lambda}/d\bfm{p}$, $d\boldsymbol{\mu}/d\bfm{p}$ may be.
The remaining obstruction is the first term.
The sensitivity $d\bfm{u}/d\bfm{p}$ is the expensive object: computing it amounts to differentiating through the differential equation solver, one linearized solve per parameter, $M$ in total, which is in general prohibitively expensive.
But it enters~\eqref{apA:eq:chain-rule} only multiplied by $\mathcal{L}_{\bfm{u}}$, and the multipliers $\bfm{b}$, $\boldsymbol{\lambda}$, $\boldsymbol{\mu}$ are still free.
If we choose them such that
\begin{equation}\label{apA:eq:Lu-zero}
    \mathcal{L}_{\bfm{u}} = 0 \qquad \text{at } \bfm{u} = \bfm{u}(\cdot;\bfm{p}),
\end{equation}
then $d\bfm{u}/d\bfm{p}$ is never needed, and
\begin{equation}\label{apA:eq:key-identity}
    \frac{d\chi}{d\bfm{p}} = \mathcal{L}_{\bfm{p}}:
\end{equation}
the total derivative of the cost equals the partial derivative of the Lagrangian, which involves only the explicit and cheap parameter dependence of $\bfm{M}$, $\bfm{D}$, $\bfm{F}$, $\bfm{f}$, $\bfm{u}_0$, $\bfm{v}_0$.
The condition~\eqref{apA:eq:Lu-zero} is a linear equation for $(\bfm{b}, \boldsymbol{\lambda}, \boldsymbol{\mu})$. We now show that it has exactly one solution, the \emph{adjoint state}.

\subsection{Stationarity with respect to the state: the adjoint problem}\label{apA:ss:adjoint}

Let $\bfm{h} \in C^2([0,T], V)$ be an arbitrary test path with no constraints at either endpoint, and compute
\begin{equation}
    \mathcal{L}_{\bfm{u}}\bfm{h} := \lim_{\varepsilon \to 0}\frac{\mathcal{L}[\bfm{u}+\varepsilon\bfm{h},\,\bfm{b},\,\boldsymbol{\lambda},\,\boldsymbol{\mu},\,\bfm{p}] - \mathcal{L}[\bfm{u},\,\bfm{b},\,\boldsymbol{\lambda},\,\boldsymbol{\mu},\,\bfm{p}]}{\varepsilon}.
\end{equation}
The Lagrangian~\eqref{apA:eq:lagrangian} has three groups of terms.
We vary each in turn.

\paragraph{(i) The constraint integral.}
Replacing $\bfm{u} \to \bfm{u} + \varepsilon\bfm{h}$ and using linearity of $\bfm{M}$, $\bfm{D}$ and the differentiability~\eqref{apA:eq:Fu} of $\bfm{F}$ (the drive $\bfm{f}$ has no $\bfm{u}$-dependence and does not contribute), the variation of the first term of~\eqref{apA:eq:lagrangian} is
\begin{equation}
    \int_0^T \langle \bfm{b},\; \bfm{M}\ddot{\bfm{h}} + \bfm{D}\dot{\bfm{h}} + \bfm{F}_{\bfm{u}}\bfm{h}\rangle\,dt.
\end{equation}
To read off conditions on $\bfm{b}$, all terms must appear as $\langle\,\cdot\,,\bfm{h}\rangle$: we move the operators across the inner product using $\langle \bfm{x}, \bfm{M}\bfm{y}\rangle = \langle \bfm{M}^T\bfm{x}, \bfm{y}\rangle$ (and~\eqref{apA:eq:Fu-adjoint} for $\bfm{F}_{\bfm{u}}$), and move the time derivatives off $\bfm{h}$ and onto $\bfm{b}$ via integration by parts, twice for the mass term and once for the damping term:
\begin{align}
    \int_0^T \langle \bfm{M}^T\bfm{b},\,\ddot{\bfm{h}}\rangle\,dt
    &= \int_0^T \langle \bfm{M}^T\ddot{\bfm{b}},\,\bfm{h}\rangle\,dt
    + \Big[\langle \bfm{M}^T\bfm{b},\,\dot{\bfm{h}}\rangle - \langle \bfm{M}^T\dot{\bfm{b}},\,\bfm{h}\rangle\Big]_0^T, \\
    \int_0^T \langle \bfm{D}^T\bfm{b},\,\dot{\bfm{h}}\rangle\,dt
    &= -\int_0^T \langle \bfm{D}^T\dot{\bfm{b}},\,\bfm{h}\rangle\,dt
    + \Big[\langle \bfm{D}^T\bfm{b},\,\bfm{h}\rangle\Big]_0^T.
\end{align}
Combining, the variation of the constraint integral is
\begin{equation}\label{apA:eq:constraint-variation}
    \int_0^T \langle \bfm{M}^T\ddot{\bfm{b}} - \bfm{D}^T\dot{\bfm{b}} + \bfm{F}_{\bfm{u}}^T\bfm{b},\;\bfm{h}\rangle\,dt
    + \Big[\langle \bfm{D}^T\bfm{b} - \bfm{M}^T\dot{\bfm{b}},\;\bfm{h}\rangle + \langle \bfm{M}^T\bfm{b},\;\dot{\bfm{h}}\rangle\Big]_0^T.
\end{equation}

\paragraph{(ii) The initial-condition multiplier terms.}
Under $\bfm{u} \to \bfm{u} + \varepsilon\bfm{h}$, the variation is simply
\begin{equation}\label{apA:eq:ic-variation}
    \langle\boldsymbol{\lambda},\,\bfm{h}(0)\rangle + \langle\boldsymbol{\mu},\,\dot{\bfm{h}}(0)\rangle.
\end{equation}

\paragraph{(iii) The cost.}
Using the differentiability of $\theta$ and $\psi$, and $\mathbb{P}^T = \mathbb{P}$:
\begin{equation}\label{apA:eq:cost-variation}
    \int_0^T \langle\mathbb{P}\theta_{\bfm{u}},\,\bfm{h}\rangle\,dt
    + \langle\mathbb{P}\psi_{\bfm{u}},\,\bfm{h}(T)\rangle
    + \langle\mathbb{P}\psi_{\dot{\bfm{u}}},\,\dot{\bfm{h}}(T)\rangle.
\end{equation}

\paragraph{The full variation.}
Adding Eqs.~\eqref{apA:eq:constraint-variation}, \eqref{apA:eq:ic-variation}, \eqref{apA:eq:cost-variation} and grouping by where $\bfm{h}$ and $\dot{\bfm{h}}$ are evaluated:
\begin{multline}\label{apA:eq:full-variation}
    \mathcal{L}_{\bfm{u}}\bfm{h} = \int_0^T \langle \bfm{M}^T\ddot{\bfm{b}} - \bfm{D}^T\dot{\bfm{b}} + \bfm{F}_{\bfm{u}}^T\bfm{b} + \mathbb{P}\theta_{\bfm{u}},\;\bfm{h}\rangle\,dt \\
    + \langle \bfm{D}^T\bfm{b}(T) - \bfm{M}^T\dot{\bfm{b}}(T) + \mathbb{P}\psi_{\bfm{u}},\;\bfm{h}(T)\rangle
    + \langle \bfm{M}^T\bfm{b}(T) + \mathbb{P}\psi_{\dot{\bfm{u}}},\;\dot{\bfm{h}}(T)\rangle \\
    + \langle \bfm{M}^T\dot{\bfm{b}}(0) - \bfm{D}^T\bfm{b}(0) + \boldsymbol{\lambda},\;\bfm{h}(0)\rangle
    + \langle -\bfm{M}^T\bfm{b}(0) + \boldsymbol{\mu},\;\dot{\bfm{h}}(0)\rangle.
\end{multline}

\paragraph{Reading off the conditions.}
Since $\bfm{h}$ is arbitrary and unconstrained at both endpoints, the values $\bfm{h}(t)$, $\bfm{h}(0)$, $\dot{\bfm{h}}(0)$, $\bfm{h}(T)$, $\dot{\bfm{h}}(T)$ can be chosen independently, so $\mathcal{L}_{\bfm{u}} = 0$ requires each inner product in~\eqref{apA:eq:full-variation} to vanish separately.
The bulk term and the two terms at $t = T$ involve only $\bfm{b}$. Together they form a terminal-value problem that determines $\bfm{b}$ uniquely and defines the adjoint state.
\needspace{14\baselineskip}
\begin{definition}[Adjoint state for the second-order problem]\label{apA:def:adjoint}
The \emph{adjoint state} $\bfm{b}(t)\in V$ associated with Problem~\ref{apA:prob:opt} is the solution of the adjoint ODE
\begin{equation}\label{apA:eq:adjoint-ode}
    \bfm{M}^T\ddot{\bfm{b}}(t) - \bfm{D}^T\dot{\bfm{b}}(t) + \bfm{F}_{\bfm{u}}^T[\bfm{u}(t),\bfm{p}_F,t]\,\bfm{b}(t) + \mathbb{P}\theta_{\bfm{u}}[\mathbb{P}\bfm{u}(t)] = 0, \qquad t \in [0,T],
\end{equation}
a second-order ODE requiring two conditions, with the terminal conditions
\begin{align}
    \bfm{M}^T\bfm{b}(T) &= -\mathbb{P}\psi_{\dot{\bfm{u}}}[\mathbb{P}\bfm{u}(T),\mathbb{P}\dot{\bfm{u}}(T)], \label{apA:eq:tc1} \\
    \bfm{M}^T\dot{\bfm{b}}(T) &= \bfm{D}^T\bfm{b}(T) + \mathbb{P}\psi_{\bfm{u}}[\mathbb{P}\bfm{u}(T), \mathbb{P}\dot{\bfm{u}}(T)]. \label{apA:eq:tc2}
\end{align}
\end{definition}
We solve the adjoint ODE~\eqref{apA:eq:adjoint-ode} backward from $T$ to $0$ using these terminal conditions.
This produces $\bfm{b}(0)$ and $\dot{\bfm{b}}(0)$, which in turn determine the multipliers via the two terms at $t = 0$ in~\eqref{apA:eq:full-variation}:
\begin{align}
    \boldsymbol{\lambda} &= -\bfm{M}^T\dot{\bfm{b}}(0) + \bfm{D}^T\bfm{b}(0), \label{apA:eq:lambda} \\
    \boldsymbol{\mu} &= \bfm{M}^T\bfm{b}(0). \label{apA:eq:mu}
\end{align}
The adjoint ODE is linear in $\bfm{b}$ (the nonlinear forces enter only through their Jacobian, evaluated along the forward trajectory), its coefficients depend continuously on $\bfm{p}$, and~\eqref{apA:eq:lambda}--\eqref{apA:eq:mu} are explicit. Hence $(\bfm{b}, \boldsymbol{\lambda}, \boldsymbol{\mu})$ is the unique solution of $\mathcal{L}_{\bfm{u}} = 0$ and depends smoothly on $\bfm{p}$, as required in Sec.~\ref{apA:ss:key-identity}.

\subsection{The gradient}\label{apA:ss:gradient}

All partial derivatives of $\mathcal{L}$ are now in place, and the gradient follows by evaluating the last one, $\mathcal{L}_{\bfm{p}}$.
\begin{theorem}[Adjoint gradient for the second-order problem]\label{apA:thm:gradient}
Let $\bfm{u} = \bfm{u}(\cdot;\bfm{p})$ solve the forward problem of Problem~\ref{apA:prob:opt} and let $\bfm{b}$ be the adjoint state of Definition~\ref{apA:def:adjoint}. Then the gradient of the reduced cost with respect to $\bfm{p}$ is
\begin{equation}\label{apA:eq:gradient}
  \begin{aligned}
    \frac{d\chi}{d\bfm{p}}
    &= \int_0^T \!\bigl\langle \bfm{b}(t),\;
       \bfm{M}_{\bfm{p}_M}\ddot{\bfm{u}}(t)
       + \bfm{D}_{\bfm{p}_D}\dot{\bfm{u}}(t)
       + \bfm{F}_{\bfm{p}_F}[\bfm{u}(t),\bfm{p}_F,t]
       - \bfm{f}_{\bfm{p}_f}(t,\bfm{p})
       \bigr\rangle dt \\
    &\quad + \bigl\langle \bfm{M}^T\dot{\bfm{b}}(0) - \bfm{D}^T\bfm{b}(0),\,
       \frac{d\bfm{u}_0}{d\bfm{p}_{u_0}}\bigr\rangle
     - \bigl\langle \bfm{M}^T\bfm{b}(0),\,
       \frac{d\bfm{v}_0}{d\bfm{p}_{v_0}}\bigr\rangle.
  \end{aligned}
\end{equation}
\end{theorem}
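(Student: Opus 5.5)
The plan is to invoke the key identity~\eqref{apA:eq:key-identity}. With the multipliers chosen as the adjoint state of Definition~\ref{apA:def:adjoint} together with~\eqref{apA:eq:lambda}--\eqref{apA:eq:mu}, we have $\mathcal{L}_{\bfm{u}}=0$. The forward solution gives $\mathcal{L}_{\bfm{b}}=\mathcal{L}_{\boldsymbol{\lambda}}=\mathcal{L}_{\boldsymbol{\mu}}=0$. Hence $d\chi/d\bfm{p}=\mathcal{L}_{\bfm{p}}$, and it only remains to compute this partial derivative. I first check that all ingredients exist, using the differentiability assumptions and the smooth dependence of $(\bfm{b},\boldsymbol{\lambda},\boldsymbol{\mu})$ on $\bfm{p}$ already established in Sec.~\ref{apA:ss:adjoint}. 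I then differentiate the Lagrangian~\eqref{apA:eq:lagrangian} with respect to $\bfm{p}$, holding $\bfm{u},\bfm{b},\boldsymbol{\lambda},\boldsymbol{\mu}$ fixed.

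The computation splits by parameter block. The cost $\chi[\bfm{u}]$ has no explicit $\bfm{p}$-dependence, so it does not contribute. In the constraint integral, $\bfm{p}$ appears only through $\bfm{M}(\bfm{p}_M)$, $\bfm{D}(\bfm{p}_D)$, $\bfm{F}[\cdot,\bfm{p}_F,t]$ and $\bfm{f}(t,\bfm{p}_f)$. Differentiating under the integral sign, which is justified by continuity of the derivatives on the compact interval $[0,T]$, gives
\[
\int_0^T\langle\bfm{b},\bfm{M}_{\bfm{p}_M}\ddot{\bfm{u}}+\bfm{D}_{\bfm{p}_D}\dot{\bfm{u}}+\bfm{F}_{\bfm{p}_F}-\bfm{f}_{\bfm{p}_f}\rangle\,dt.
\]
The initial-condition terms contribute $-\langle\boldsymbol{\lambda},d\bfm{u}_0/d\bfm{p}_{u_0}\rangle-\langle\boldsymbol{\mu},d\bfm{v}_0/d\bfm{p}_{v_0}\rangle$. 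Substituting $\boldsymbol{\lambda}=-\bfm{M}^T\dot{\bfm{b}}(0)+\bfm{D}^T\bfm{b}(0)$ and $\boldsymbol{\mu}=\bfm{M}^T\bfm{b}(0)$ then produces exactly the endpoint terms in~\eqref{apA:eq:gradient}.

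The only delicate step is the chain-rule identity~\eqref{apA:eq:chain-rule}. It is legitimate because $\mathcal{L}$ is Fréchet differentiable on the product space and $\bfm{p}\mapsto\bfm{u}(\cdot;\bfm{p})$ is differentiable into $C^2([0,T],V)$. I would take the latter from smooth parameter dependence of ODE solutions, using the Picard--Lindelöf argument cited above together with invertibility of $\bfm{M}$. Since this was already arranged in Sec.~\ref{apA:ss:key-identity}, the remaining work is bookkeeping of signs.

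As a sanity check, I would set $\bfm{a}(t)=\bfm{b}(T-t)$. This should recover~\eqref{eq:adjoint-method-gradient-pM}--\eqref{eq:adjoint-method-gradient-v0}, using $\dot{\bfm{b}}(0)=-\dot{\bfm{a}}(T)$ so that $\bfm{M}^T\dot{\bfm{b}}(0)-\bfm{D}^T\bfm{b}(0)=-(\bfm{M}^T\dot{\bfm{a}}(T)+\bfm{D}^T\bfm{a}(T))$.
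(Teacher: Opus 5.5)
Your proposal is correct and follows the paper's proof. You obtain $d\chi/d\bfm{p}=\mathcal{L}_{\bfm{p}}$ from the stationarity conditions, differentiate the Lagrangian~\eqref{apA:eq:lagrangian} explicitly in $\bfm{p}$ at fixed $(\bfm{u},\bfm{b},\boldsymbol{\lambda},\boldsymbol{\mu})$, and substitute \eqref{apA:eq:lambda}--\eqref{apA:eq:mu}; your signs and the forward-time check against \eqref{eq:adjoint-method-gradient-u0} are right.
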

\noindent\textit{Proof.}
With $(\bfm{b}, \boldsymbol{\lambda}, \boldsymbol{\mu})$ chosen as in Definition~\ref{apA:def:adjoint} and Eqs.~\eqref{apA:eq:lambda}--\eqref{apA:eq:mu}, we have $\mathcal{L}_{\bfm{u}} = 0$, and the key identity~\eqref{apA:eq:key-identity} gives $d\chi/d\bfm{p} = \mathcal{L}_{\bfm{p}}$.
The explicit $\bfm{p}$-dependence of $\mathcal{L}$ enters through $\bfm{M}(\bfm{p}_M)$, $\bfm{D}(\bfm{p}_D)$, $\bfm{F}[\bfm{u},\bfm{p}_F,t]$, $\bfm{f}(t,\bfm{p}_f)$ in the constraint integral and through $\bfm{u}_0(\bfm{p}_{u_0})$, $\bfm{v}_0(\bfm{p}_{v_0})$ in the initial-condition terms. The cost $\chi$ has none.
Differentiating each at fixed $\bfm{u}$, $\bfm{b}$, $\boldsymbol{\lambda}$, $\boldsymbol{\mu}$:
\begin{equation}\label{apA:eq:Lp}
    \mathcal{L}_{\bfm{p}} = \int_0^T \langle \bfm{b}(t),\; \bfm{M}_{\bfm{p}_M}\ddot{\bfm{u}}(t) + \bfm{D}_{\bfm{p}_D}\dot{\bfm{u}}(t) + \bfm{F}_{\bfm{p}_F}[\bfm{u}(t),\bfm{p}_F,t] - \bfm{f}_{\bfm{p}_f}(t,\bfm{p})\rangle\,dt
    - \langle\boldsymbol{\lambda},\,\frac{d\bfm{u}_0}{d\bfm{p}_{u_0}}\rangle
    - \langle\boldsymbol{\mu},\,\frac{d\bfm{v}_0}{d\bfm{p}_{v_0}}\rangle.
\end{equation}
Substituting the expressions~\eqref{apA:eq:lambda}--\eqref{apA:eq:mu} for $\boldsymbol{\lambda}$ and $\boldsymbol{\mu}$ gives~\eqref{apA:eq:gradient}. \qed

Every quantity on the right-hand side of~\eqref{apA:eq:gradient} is known: $\bfm{u}(t)$ from the forward solve, $\bfm{b}(t)$ from the backward solve, and the partial derivatives $\bfm{M}_{\bfm{p}_M}$, $\bfm{D}_{\bfm{p}_D}$, $\bfm{F}_{\bfm{p}_F}$, $\bfm{f}_{\bfm{p}_f}$, $d\bfm{u}_0/d\bfm{p}_{u_0}$, $d\bfm{v}_0/d\bfm{p}_{v_0}$ from the model.
The cost of evaluating~\eqref{apA:eq:gradient} is one forward solve plus one backward solve, independent of the number of parameters $M$.

If $\bfm{M}$ and $\bfm{D}$ do not depend on $\bfm{p}$ and the initial conditions are $\bfm{p}$-independent, Eq.~\eqref{apA:eq:gradient} reduces to
\begin{equation}\label{apA:eq:gradient-simple}
    \frac{d\chi}{d\bfm{p}} = \int_0^T \langle \bfm{b}(t),\; \bfm{F}_{\bfm{p}}(\bfm{u}(t),\bfm{p}) - \bfm{f}_{\bfm{p}}(t,\bfm{p})\rangle\,dt.
\end{equation}

\paragraph{Interpreting $\langle\bfm{b},\bfm{F}_{\bfm{p}}\rangle$.}
Recall that $\bfm{F}_{\bfm{p}}:\mathbb{R}^M \to V$ is a linear \emph{operator}, not an element of $V$, so the symbol $\langle\bfm{b},\bfm{F}_{\bfm{p}}\rangle$ is a shorthand for an $M$-component vector, to be read componentwise: for each parameter $p_j$, the partial $\partial \bfm{F}/\partial p_j \in V$ is a genuine vector, and the $j$-th component of the gradient is the scalar inner product
\begin{equation}
    \frac{d\chi}{dp_j} = \int_0^T \left\langle \bfm{b}(t),\; \frac{\partial \bfm{F}}{\partial p_j}[\bfm{u}(t),\bfm{p}] - \frac{\partial \bfm{f}}{\partial p_j}(t,\bfm{p})\right\rangle dt.
\end{equation}
Equivalently, using the adjoint operator $\bfm{F}_{\bfm{p}}^T: V \to \mathbb{R}^M$ defined by $\langle\bfm{b},\bfm{F}_{\bfm{p}}\bfm{q}\rangle_V = \langle \bfm{F}_{\bfm{p}}^T\bfm{b},\bfm{q}\rangle_{\mathbb{R}^M}$ for all $\bfm{q}\in\mathbb{R}^M$, the compact form is
\begin{equation}
    \frac{d\chi}{d\bfm{p}} = \int_0^T \big(\bfm{F}_{\bfm{p}}^T\bfm{b}(t) - \bfm{f}_{\bfm{p}}^T\bfm{b}(t)\big)\,dt \;\in\; \mathbb{R}^M.
\end{equation}
Hence $\bfm{F}_{\bfm{p}}^T\bfm{b}$ is the $M$-vector obtained by multiplying the transpose of the $N\times M$ Jacobian by the $N$-vector $\bfm{b}$. The same interpretation applies to the $\bfm{F}_{\bfm{p}_F}$, $\bfm{M}_{\bfm{p}_M}$, $\bfm{D}_{\bfm{p}_D}$, and $\bfm{f}_{\bfm{p}_f}$ terms in the full gradient~\eqref{apA:eq:gradient}.

\paragraph{Direct optimization of initial conditions.}
Suppose a subvector of $\bfm{p}$ encodes the initial state itself, i.e.\ $\bfm{u}_0(\bfm{p}_{u_0}) = \bfm{p}_{u_0}$ with $\bfm{p}_{u_0} \in V$. Then
$d\bfm{u}_0/d\bfm{p}_{u_0} = I_V$ and $d\bfm{v}_0/d\bfm{p}_{u_0} = 0$, so the $\bfm{u}_0$-block of Eq.~\eqref{apA:eq:gradient} identifies the gradient directly:
\begin{equation}\label{apA:eq:grad-u0}
    \frac{d\chi}{d\bfm{u}_0} = \bfm{M}^T\dot{\bfm{b}}(0) - \bfm{D}^T\bfm{b}(0).
\end{equation}
Analogously, if a subvector of $\bfm{p}$ equals $\bfm{v}_0$, then $d\bfm{v}_0/d\bfm{p}_{v_0} = I_V$, $d\bfm{u}_0/d\bfm{p}_{v_0} = 0$, and
\begin{equation}\label{apA:eq:grad-v0}
    \frac{d\chi}{d\bfm{v}_0} = -\bfm{M}^T\bfm{b}(0).
\end{equation}

\subsection{Integrating the adjoints forward in time}\label{apA:ss:forward-time}

The adjoint problem of Definition~\ref{apA:def:adjoint} is posed \emph{backward} in time: the data \eqref{apA:eq:tc1}--\eqref{apA:eq:tc2} are imposed at the final time and the ODE \eqref{apA:eq:adjoint-ode} is integrated from $T$ down to $0$.
Physical hardware, by contrast, runs forward.
The change of variable $t\mapsto T-t$ removes the mismatch.

Define the forward-time adjoint field
\begin{equation}\label{apA:fwd:a-def}
    \bfm{a}(t) := \bfm{b}(T-t),
    \qquad
    \dot{\bfm{a}}(t) = -\dot{\bfm{b}}(T-t),
    \qquad
    \ddot{\bfm{a}}(t) = \ddot{\bfm{b}}(T-t),
\end{equation}
so that odd derivatives flip sign under the reversal and even ones do not.
Evaluating the adjoint ODE~\eqref{apA:eq:adjoint-ode} at time $T-t$ and substituting~\eqref{apA:fwd:a-def} gives the forward-time adjoint equation
\begin{equation}\label{apA:fwd:adj}
    \bfm{M}^T\ddot{\bfm{a}}(t) + \bfm{D}^T\dot{\bfm{a}}(t) + \bfm{F}_{\bfm{u}}^T[\bfm{u}(T-t),\bfm{p}_F,T-t]\,\bfm{a}(t) + \mathbb{P}\theta_{\bfm{u}}[\mathbb{P}\bfm{u}(T-t)] = 0, \qquad t\in[0,T].
\end{equation}
Among the operators only the damping changes, $-\bfm{D}^T \to +\bfm{D}^T$, with mass and stiffness untouched. This sign flip is the entire algebraic effect of time reversal on the dissipative term, and it is the origin of the obstruction to a physical realization in dissipative systems studied in Appendix~\ref{a:second-order-nonlinear-physical}. The linearization and the source are now read along the reversed trajectory $\bfm{u}(T-t)$.

The terminal data sit at the reversed time $t=0$. With $\bfm{b}(T) = \bfm{a}(0)$ and $\dot{\bfm{b}}(T) = -\dot{\bfm{a}}(0)$, the terminal conditions~\eqref{apA:eq:tc1}--\eqref{apA:eq:tc2} become genuine initial conditions,
\begin{align}
    \bfm{M}^T\bfm{a}(0) &= -\mathbb{P}\psi_{\dot{\bfm{u}}}[\mathbb{P}\bfm{u}(T),\mathbb{P}\dot{\bfm{u}}(T)], \label{apA:fwd:ic1}\\
    \bfm{M}^T\dot{\bfm{a}}(0) &= -\bfm{D}^T\bfm{a}(0) - \mathbb{P}\psi_{\bfm{u}}[\mathbb{P}\bfm{u}(T),\mathbb{P}\dot{\bfm{u}}(T)]. \label{apA:fwd:ic2}
\end{align}
Likewise $\bfm{b}(0) = \bfm{a}(T)$ and $\dot{\bfm{b}}(0) = -\dot{\bfm{a}}(T)$, so substituting $\bfm{b}(t) = \bfm{a}(T-t)$ throughout the gradient~\eqref{apA:eq:gradient} gives
\begin{multline}\label{apA:fwd:grad}
    \frac{d\chi}{d\bfm{p}} = \int_0^T \Bigl\langle \bfm{a}(T-t),\; \bfm{M}_{\bfm{p}_M}\ddot{\bfm{u}}(t) + \bfm{D}_{\bfm{p}_D}\dot{\bfm{u}}(t) + \bfm{F}_{\bfm{p}_F}[\bfm{u}(t),\bfm{p}_F,t] - \bfm{f}_{\bfm{p}_f}(t,\bfm{p})\Bigr\rangle\,dt \\
    - \Bigl\langle \bfm{M}^T\dot{\bfm{a}}(T) + \bfm{D}^T\bfm{a}(T),\; \frac{d\bfm{u}_0}{d\bfm{p}_{u_0}}\Bigr\rangle - \Bigl\langle \bfm{M}^T\bfm{a}(T),\; \frac{d\bfm{v}_0}{d\bfm{p}_{v_0}}\Bigr\rangle.
\end{multline}
Passing to forward time therefore does three things: it flips the sign of the damping operator, it reverses the trajectory argument $\bfm{u}(t)\mapsto\bfm{u}(T-t)$ feeding the linearization and the source, and it turns the terminal data at $t=T$ into initial data at $t=0$.

Eqs.~\eqref{apA:fwd:adj}--\eqref{apA:fwd:grad} are the adjoint problem and gradient formula quoted in the main text.

\section{Physical adjoint for nonlinear second-order systems: time reversal and the damping obstruction}\label{a:second-order-nonlinear-physical}

This appendix derives the nudged time-reversed dynamics quoted in Sec.~\ref{s:physical-adjoints}, in particular the gain-flipped perturbation equation~\eqref{eq:main-second-order-reversed-perturbation-gain}, and proves the nonlinear part of Theorem~\ref{thm:physical-adjoint}. Throughout we use the forward-time adjoint equation~\eqref{eq:adjoint-method-second-order-adjoint}, which propagates with a damping term $+\bfm{D}^T\dot{\bfm{a}}$, an internal-force linearization $\bfm{F}_{\bfm{u}}^T$ evaluated along the time-reversed trajectory $\bfm{u}(T-t)$, the source~\eqref{eq:adjoint-source}, and initial conditions~\eqref{eq:adjoint-method-second-order-adjoint-ic1}--\eqref{eq:adjoint-method-second-order-adjoint-ic2} set by the terminal-loss derivatives $\mathbb{P}\psi_{\bfm{u}}$ and $\mathbb{P}\psi_{\dot{\bfm{u}}}$.

\subsection{Time reversal requires removing or flipping the damping}\label{a:nonlinear-trm}

Unlike the linear case, in the nonlinear adjoint~\eqref{eq:adjoint-method-second-order-adjoint} the field $\bfm{a}$ propagates under $\bfm{F}_{\bfm{u}}^T$ evaluated along the \emph{reversed} trajectory $\bfm{u}(T-t)$. To access this linearization physically one must first reproduce that reversed trajectory on the device. Define
\begin{equation}\label{a:eq:w-def}
    \bfm{w}(t) := \bfm{u}(T-t),
    \qquad
    \dot{\bfm{w}}(t) = -\dot{\bfm{u}}(T-t),
    \qquad
    \ddot{\bfm{w}}(t) = \ddot{\bfm{u}}(T-t).
\end{equation}
Evaluating the forward state equation~\eqref{eq:adjoint-method-second-order-state} at $T-t$ and using~\eqref{a:eq:w-def} gives the reversed state equation~\eqref{eq:main-second-order-reversed-state},
\begin{equation}\label{a:eq:reversed-state}
    \bfm{M}\ddot{\bfm{w}}(t) - \bfm{D}\dot{\bfm{w}}(t) + \bfm{F}\bigl[\bfm{w}(t),\bfm{p}_F,T-t\bigr] - \bfm{f}(T-t,\bfm{p}_f) = \bfm{0},
\end{equation}
with initial conditions
\begin{equation}\label{a:eq:reversed-ic}
    \bfm{w}(0) = \bfm{u}(T),
    \qquad
    \dot{\bfm{w}}(0) = -\dot{\bfm{u}}(T).
\end{equation}
The only difference from the forward equation~\eqref{eq:adjoint-method-second-order-state} is the sign of the damping term: time reversal sends $\bfm{D}\dot{\bfm{u}}\mapsto-\bfm{D}\dot{\bfm{w}}$, because the first derivative is odd under $t\mapsto T-t$ while the second derivative is even. Reproducing $\bfm{w}$ on the \emph{same} hardware therefore requires either $\bfm{D}=\bfm{0}$ or an experimental replacement $\bfm{D}\mapsto-\bfm{D}$ that turns damping into gain. We grant the latter possibility in the next subsection and show that it still fails to deliver the adjoint.

\subsection{The nudged response on the gain-flipped device}\label{a:nonlinear-nudge}

Suppose the gain flip $\bfm{D}\mapsto-\bfm{D}$ can be realized, so that~\eqref{a:eq:reversed-state} genuinely runs on the device and produces the reversed background $\bfm{w}$. On this same gain-flipped device we superimpose an infinitesimal nudge: a source in the direction of the adjoint source~\eqref{eq:adjoint-source} and a perturbation of the initial conditions in the directions of the adjoint terminal data~\eqref{eq:adjoint-method-second-order-adjoint-ic1}--\eqref{eq:adjoint-method-second-order-adjoint-ic2}. The nudged field $\bfm{w}^{\epsilon}$ obeys
\begin{equation}\label{a:eq:nudged-state}
    \bfm{M}\ddot{\bfm{w}}^{\epsilon}(t) - \bfm{D}\dot{\bfm{w}}^{\epsilon}(t) + \bfm{F}\bigl[\bfm{w}^{\epsilon}(t),\bfm{p}_F,T-t\bigr] - \bfm{f}(T-t,\bfm{p}_f) + \epsilon\,\mathbb{P}\theta_{\bfm{u}}\bigl[\mathbb{P}\bfm{u}(T-t)\bigr] = \bfm{0},
\end{equation}
with initial conditions
\begin{equation}\label{a:eq:nudged-ic}
    \bfm{w}^{\epsilon}(0) = \bfm{u}(T) + \epsilon\,\bfm{a}(0),
    \qquad
    \dot{\bfm{w}}^{\epsilon}(0) = -\dot{\bfm{u}}(T) + \epsilon\,\dot{\bfm{a}}(0),
    \qquad |\epsilon|\ll1,
\end{equation}
where $\bfm{a}(0)$ and $\dot{\bfm{a}}(0)$ are the adjoint initial data~\eqref{eq:adjoint-method-second-order-adjoint-ic1}--\eqref{eq:adjoint-method-second-order-adjoint-ic2} solved for the leading derivative,
\begin{align}
    \bfm{a}(0) &= -\bfm{M}^{-1}\mathbb{P}\psi_{\dot{\bfm{u}}}, \label{a:eq:nudged-ic-a0}\\
    \dot{\bfm{a}}(0) &= -\bfm{M}^{-1}\bigl[\bfm{D}^T\bfm{a}(0) + \mathbb{P}\psi_{\bfm{u}}\bigr]
    = \bfm{M}^{-1}\bigl[\bfm{D}^T\bfm{M}^{-1}\mathbb{P}\psi_{\dot{\bfm{u}}} - \mathbb{P}\psi_{\bfm{u}}\bigr], \label{a:eq:nudged-ic-adot0}
\end{align}
where we set $\bfm{M}^T=\bfm{M}$.
The velocity nudge carries the damping contribution $\bfm{D}^T\bfm{a}(0)$ demanded by~\eqref{eq:adjoint-method-second-order-adjoint-ic2}, which drops out when $\bfm{D}=\bfm{0}$.
Write the differential response
\begin{equation}\label{a:eq:delta-w}
    \delta\bfm{w}(t) := \bfm{w}^{\epsilon}(t) - \bfm{w}(t) = \mathcal{O}(\epsilon).
\end{equation}
Expanding the internal force to first order,
\begin{equation}\label{a:eq:F-expansion}
    \bfm{F}\bigl[\bfm{w}^{\epsilon},\bfm{p}_F,T-t\bigr]
    =
    \bfm{F}\bigl[\bfm{w},\bfm{p}_F,T-t\bigr]
    +
    \bfm{F}_{\bfm{u}}\bigl[\bfm{w}(t),\bfm{p}_F,T-t\bigr]\,\delta\bfm{w}(t)
    +
    \mathcal{O}\bigl(\|\delta\bfm{w}\|^2\bigr),
\end{equation}
and subtracting the reversed state equation~\eqref{a:eq:reversed-state} from~\eqref{a:eq:nudged-state} gives the gain-flipped perturbation equation~\eqref{eq:main-second-order-reversed-perturbation-gain},
\begin{equation}\label{a:eq:perturbation-gain}
    \bfm{M}\delta\ddot{\bfm{w}}(t) - \bfm{D}\delta\dot{\bfm{w}}(t) + \bfm{F}_{\bfm{u}}\bigl[\bfm{w}(t),\bfm{p}_F,T-t\bigr]\delta\bfm{w}(t) + \epsilon\,\mathbb{P}\theta_{\bfm{u}}\bigl[\mathbb{P}\bfm{u}(T-t)\bigr] + \mathcal{O}(\epsilon^2) = \bfm{0},
\end{equation}
with initial conditions obtained by subtracting~\eqref{a:eq:reversed-ic} from~\eqref{a:eq:nudged-ic},
\begin{equation}\label{a:eq:perturbation-ic}
    \delta\bfm{w}(0) = \epsilon\,\bfm{a}(0),
    \qquad
    \delta\dot{\bfm{w}}(0) = \epsilon\,\dot{\bfm{a}}(0).
\end{equation}
Because $\bfm{w}(t)=\bfm{u}(T-t)$, the linearization $\bfm{F}_{\bfm{u}}[\bfm{w}(t),\bfm{p}_F,T-t]$ in~\eqref{a:eq:perturbation-gain} is precisely the time-reversed Jacobian $\bfm{F}_{\bfm{u}}[\bfm{u}(T-t),\bfm{p}_F,T-t]$ appearing in the adjoint equation~\eqref{eq:adjoint-method-second-order-adjoint}.

\subsection{Comparison with the adjoint equation: the damping obstruction}\label{a:nonlinear-comparison}

Introduce the rescaled response $\widetilde{\bfm{a}}(t):=\delta\bfm{w}(t)/\epsilon$, which is $\mathcal{O}(1)$. Dividing~\eqref{a:eq:perturbation-gain}--\eqref{a:eq:perturbation-ic} by $\epsilon$ and letting $\epsilon\to0$,
\begin{equation}\label{a:eq:rescaled-perturbation}
    \bfm{M}\ddot{\widetilde{\bfm{a}}}(t) - \bfm{D}\dot{\widetilde{\bfm{a}}}(t) + \bfm{F}_{\bfm{u}}\bigl[\bfm{u}(T-t),\bfm{p}_F,T-t\bigr]\widetilde{\bfm{a}}(t) + \mathbb{P}\theta_{\bfm{u}}\bigl[\mathbb{P}\bfm{u}(T-t)\bigr] = \bfm{0},
\end{equation}
with $\widetilde{\bfm{a}}(0) = \bfm{a}(0)$ and $\dot{\widetilde{\bfm{a}}}(0) = \dot{\bfm{a}}(0)$, given by~\eqref{a:eq:nudged-ic-a0}--\eqref{a:eq:nudged-ic-adot0}. We compare this term by term with the forward-time adjoint equation~\eqref{eq:adjoint-method-second-order-adjoint},
\begin{equation}\label{a:eq:adjoint-recall}
    \bfm{M}^T\ddot{\bfm{a}}(t) + \bfm{D}^T\dot{\bfm{a}}(t) + \bfm{F}_{\bfm{u}}^T\bigl[\bfm{u}(T-t),\bfm{p}_F,T-t\bigr]\bfm{a}(t) + \mathbb{P}\theta_{\bfm{u}}\bigl[\mathbb{P}\bfm{u}(T-t)\bigr] = \bfm{0},
\end{equation}
whose initial conditions~\eqref{eq:adjoint-method-second-order-adjoint-ic1}--\eqref{eq:adjoint-method-second-order-adjoint-ic2} the nudge~\eqref{a:eq:nudged-ic} reproduces by construction. What remains is the propagator, and matching~\eqref{a:eq:rescaled-perturbation} to~\eqref{a:eq:adjoint-recall} requires three things:
\begin{itemize}
    \item \textbf{Mass.} $\bfm{M}^T=\bfm{M}$, so the inertial operators agree.
    \item \textbf{Stiffness.} $\bfm{F}_{\bfm{u}}^T=\bfm{F}_{\bfm{u}}$ along the reversed trajectory, so the two internal-force linearizations agree, i.e.\ the tangent dynamics is reciprocal.
    \item \textbf{Damping.} The rescaled response carries $-\bfm{D}\dot{\widetilde{\bfm{a}}}$, whereas the adjoint carries $+\bfm{D}^T\dot{\bfm{a}}$. For reciprocal damping $\bfm{D}^T=\bfm{D}$ the two terms differ by $2\bfm{D}\dot{\bfm{a}}$, and they coincide \emph{only} when $\bfm{D}=\bfm{0}$.
\end{itemize}
Even after granting the gain flip needed to reproduce the reversed background, the damping enters the nudged perturbation with the sign $-\bfm{D}$, the opposite of the $+\bfm{D}^T$ demanded by the adjoint. A discrepancy in the propagator cannot be repaired by any choice of source or initial condition, so the same hardware cannot simultaneously carry the reversed background and propagate the correct adjoint perturbation around it unless the damping vanishes.

\subsection{The undamped reciprocal case}\label{a:nonlinear-undamped}

Set $\bfm{D}=\bfm{0}$ and assume $\bfm{M}^T=\bfm{M}$ together with $\bfm{F}_{\bfm{u}}^T=\bfm{F}_{\bfm{u}}$ along the trajectory. Then the reversed background~\eqref{a:eq:reversed-state} runs on the same hardware, and the rescaled perturbation equation~\eqref{a:eq:rescaled-perturbation} coincides with the adjoint equation~\eqref{a:eq:adjoint-recall}, whose initial conditions the nudge already carries. Here~\eqref{a:eq:nudged-ic-a0}--\eqref{a:eq:nudged-ic-adot0} reduce to $\bfm{a}(0)=-\bfm{M}^{-1}\mathbb{P}\psi_{\dot{\bfm{u}}}$ and $\dot{\bfm{a}}(0)=-\bfm{M}^{-1}\mathbb{P}\psi_{\bfm{u}}$, the nudge quoted in the main text. By uniqueness of solutions, $\widetilde{\bfm{a}}=\bfm{a}$, i.e.
\begin{equation}\label{a:eq:adjoint-from-difference}
    \bfm{a}(t) = \lim_{\epsilon\to0}\frac{\bfm{w}^{\epsilon}(t)-\bfm{w}(t)}{\epsilon},
    \qquad \bfm{w}(t)=\bfm{u}(T-t).
\end{equation}
The adjoint field is therefore the infinitesimal differential response of the nudged time-reversed trajectory, and the gradient follows from the overlap integrals~\eqref{eq:adjoint-method-gradient-pM}--\eqref{eq:adjoint-method-gradient-v0}. This proves the nonlinear part of Theorem~\ref{thm:physical-adjoint}. \qed

\section{Mass-proportional damping}\label{a:mass-proportional-damping}

The obstruction established in Appendix~\ref{a:second-order-nonlinear-physical} forbids damping in the nonlinear case because the nudged response on the gain-flipped device propagates with $-\bfm{D}\dot{\widetilde{\bfm{a}}}$, while the adjoint requires $+\bfm{D}^T\dot{\bfm{a}}$. One structurally special case reconciles the two: \emph{mass-proportional} damping. The construction still flips damping to gain, so the adjoint propagates on a device other than the one that ran the forward pass, which places it outside the \emph{same-device} constructions of this article. The mechanism is instructive nonetheless.

Mass-proportional damping means
\begin{equation}\label{a:eq:mass-proportional}
    \bfm{D} = \gamma\,\bfm{M},
    \qquad
    \gamma\in\mathbb{R}\ \text{a known scalar}.
\end{equation}
Because the damping operator is now a scalar multiple of the inertia, a single scalar exponential weighting in time interchanges the damped and anti-damped equations while leaving $\bfm{M}$ and the stiffness untouched. We assume, as in Appendix~\ref{a:second-order-nonlinear-physical}, reciprocity along the reversed trajectory, $\bfm{M}^T=\bfm{M}$ and $\bfm{F}_{\bfm{u}}^T=\bfm{F}_{\bfm{u}}$. Note that~\eqref{a:eq:mass-proportional} makes $\bfm{D}$ automatically reciprocal, $\bfm{D}^T=\gamma\bfm{M}^T=\bfm{D}$.

\subsection{Exponential weighting interchanges damping and gain}\label{a:mp-weighting}

With $\bfm{D}=\gamma\bfm{M}$ the forward-time adjoint equation~\eqref{eq:adjoint-method-second-order-adjoint} reads
\begin{equation}\label{a:eq:mp-adjoint}
    \bfm{M}\ddot{\bfm{a}}(t) + \gamma\bfm{M}\dot{\bfm{a}}(t) + \bfm{F}_{\bfm{u}}\bigl[\bfm{u}(T-t),\bfm{p}_F,T-t\bigr]\bfm{a}(t) + \mathbb{P}\theta_{\bfm{u}}\bigl[\mathbb{P}\bfm{u}(T-t)\bigr] = \bfm{0},
\end{equation}
i.e.\ a \emph{damped} linear oscillator for $\bfm{a}$. The gain-flipped device of Appendix~\ref{a:second-order-nonlinear-physical}, by contrast, can only realize the \emph{anti-damped} perturbation dynamics
\begin{equation}\label{a:eq:mp-gain}
    \bfm{M}\ddot{\widetilde{\bfm{a}}}(t) - \gamma\bfm{M}\dot{\widetilde{\bfm{a}}}(t) + \bfm{F}_{\bfm{u}}\bigl[\bfm{u}(T-t),\bfm{p}_F,T-t\bigr]\widetilde{\bfm{a}}(t) + \boldsymbol{\sigma}(t) = \bfm{0},
\end{equation}
where $\boldsymbol{\sigma}$ is the source we are free to apply. Substitute the exponential ansatz
\begin{equation}\label{a:eq:mp-ansatz}
    \bfm{a}(t) = e^{-\gamma t}\,\widetilde{\bfm{a}}(t)
\end{equation}
into~\eqref{a:eq:mp-adjoint}, using $\dot{\bfm{a}} = e^{-\gamma t}(\dot{\widetilde{\bfm{a}}}-\gamma\widetilde{\bfm{a}})$ and $\ddot{\bfm{a}} = e^{-\gamma t}(\ddot{\widetilde{\bfm{a}}}-2\gamma\dot{\widetilde{\bfm{a}}}+\gamma^2\widetilde{\bfm{a}})$. The $\gamma^2\bfm{M}$ contributions cancel and the coefficient of $\dot{\widetilde{\bfm{a}}}$ flips from $+\gamma$ to $-\gamma$, leaving
\begin{equation}\label{a:eq:mp-substituted}
    e^{-\gamma t}\Bigl[\bfm{M}\ddot{\widetilde{\bfm{a}}} - \gamma\bfm{M}\dot{\widetilde{\bfm{a}}} + \bfm{F}_{\bfm{u}}\widetilde{\bfm{a}}\Bigr] + \mathbb{P}\theta_{\bfm{u}}\bigl[\mathbb{P}\bfm{u}(T-t)\bigr] = \bfm{0}.
\end{equation}
Multiplying by $e^{\gamma t}$ shows that $\widetilde{\bfm{a}}$ obeys the anti-damped equation~\eqref{a:eq:mp-gain}, provided the applied source is the exponentially weighted adjoint source
\begin{equation}\label{a:eq:mp-source}
    \boldsymbol{\sigma}(t) = e^{\gamma t}\,\mathbb{P}\theta_{\bfm{u}}\bigl[\mathbb{P}\bfm{u}(T-t)\bigr].
\end{equation}
Crucially, the mass and stiffness operators in~\eqref{a:eq:mp-gain} are identical to those of the device: only the sign of the damping has flipped, which the gain flip supplies, and the source carries the known scalar weight $e^{\gamma t}$. For a terminal-loss-only objective ($\theta\equiv0$) the source vanishes and no weighting of the drive is needed at all.

\subsection{Initial conditions, recovery, and the physical algorithm}\label{a:mp-algorithm}

The ansatz~\eqref{a:eq:mp-ansatz} also maps the adjoint initial conditions~\eqref{eq:adjoint-method-second-order-adjoint-ic1}--\eqref{eq:adjoint-method-second-order-adjoint-ic2} (with $\bfm{D}=\gamma\bfm{M}$) onto undamped-type conditions for $\widetilde{\bfm{a}}$. From $\bfm{a}(0)=\widetilde{\bfm{a}}(0)$ and $\dot{\bfm{a}}(0)=\dot{\widetilde{\bfm{a}}}(0)-\gamma\widetilde{\bfm{a}}(0)$, the terms proportional to $\gamma$ cancel and
\begin{equation}\label{a:eq:mp-ic}
    \bfm{M}\widetilde{\bfm{a}}(0) = -\mathbb{P}\psi_{\dot{\bfm{u}}},
    \qquad
    \bfm{M}\dot{\widetilde{\bfm{a}}}(0) = -\mathbb{P}\psi_{\bfm{u}},
\end{equation}
i.e.\ the same nudge directions as in the undamped case of Appendix~\ref{a:nonlinear-undamped}. The physically measured field $\widetilde{\bfm{a}}$ is therefore obtained as in the undamped nonlinear construction, with the single modification that the adjoint source is applied with the weight $e^{\gamma t}$. The true adjoint is recovered by undoing the weighting,
\begin{equation}\label{a:eq:mp-recover}
    \bfm{a}(t) = e^{-\gamma t}\,\widetilde{\bfm{a}}(t)
    = e^{-\gamma t}\lim_{\epsilon\to0}\frac{\bfm{w}^{\epsilon}(t)-\bfm{w}(t)}{\epsilon},
    \qquad \bfm{w}(t)=\bfm{u}(T-t),
\end{equation}
before substitution into the gradient overlaps~\eqref{eq:adjoint-method-gradient-pM}--\eqref{eq:adjoint-method-gradient-v0}. Equivalently, the weight can be carried directly into the overlaps: since $\bfm{a}(T-t)=e^{-\gamma(T-t)}\widetilde{\bfm{a}}(T-t)$, each integrand acquires the known scalar factor $e^{-\gamma(T-t)}$, e.g.
\begin{equation}\label{a:eq:mp-weighted-gradient}
    \frac{d\chi}{d\bfm{p}_F}
    =
    \int_0^T e^{-\gamma(T-t)}\,
    \bigl\langle
        \widetilde{\bfm{a}}(T-t),\,
        \bfm{F}_{\bfm{p}_F}\bigl[\bfm{u}(t),\bfm{p}_F,t\bigr]
    \bigr\rangle\,dt,
\end{equation}
which realizes the exponential time weighting referenced in Sec.~\ref{s:physical-adjoints}. Concretely:
\begin{enumerate}
    \item Run the forward dynamics and record $\bfm{u}(t)$, then form $\bfm{w}(t)=\bfm{u}(T-t)$.
    \item Flip the damping to gain, $\bfm{D}=\gamma\bfm{M}\mapsto-\gamma\bfm{M}$, so the device reproduces the reversed background $\bfm{w}$.
    \item On the same gain-flipped device, apply the TRM and nudge together: initialize with $\bfm{w}^{\epsilon}(0)=\bfm{u}(T)-\epsilon\bfm{M}^{-1}\mathbb{P}\psi_{\dot{\bfm{u}}}$ and $\dot{\bfm{w}}^{\epsilon}(0)=-\dot{\bfm{u}}(T)-\epsilon\bfm{M}^{-1}\mathbb{P}\psi_{\bfm{u}}$, drive with $\bfm{f}(T-t)-\epsilon\,e^{\gamma t}\mathbb{P}\theta_{\bfm{u}}[\mathbb{P}\bfm{u}(T-t)]$, and reverse the explicit time dependence of $\bfm{F}$.
    \item Measure $\widetilde{\bfm{a}}(t)=\lim_{\epsilon\to0}[\bfm{w}^{\epsilon}(t)-\bfm{w}(t)]/\epsilon$ and form the adjoint $\bfm{a}(t)=e^{-\gamma t}\widetilde{\bfm{a}}(t)$.
    \item Evaluate the gradient from~\eqref{eq:adjoint-method-gradient-pM}--\eqref{eq:adjoint-method-gradient-v0}, equivalently~\eqref{a:eq:mp-weighted-gradient}.
\end{enumerate}

Two remarks. First, the construction hinges on $\gamma$ being a known \emph{scalar}: only then is $e^{-\gamma t}$ a scalar weight that commutes with $\bfm{M}$ and $\bfm{F}_{\bfm{u}}$ and leaves the stiffness untouched in~\eqref{a:eq:mp-substituted}. A damping $\bfm{D}$ that is not a scalar multiple of $\bfm{M}$ would require a matrix exponential $e^{-\bfm{M}^{-1}\bfm{D}\,t}$ that does not, in general, commute with $\bfm{F}_{\bfm{u}}$, and the obstruction of Appendix~\ref{a:second-order-nonlinear-physical} returns. Second, as in the undamped case, this still requires the gain flip to reproduce the reversed background. Mass-proportional damping only makes the subsequent adjoint extraction exact, through the known exponential reweighting~\cite{pourcel2025lagrangian}.

\section{General Onsager reciprocity for second-order systems}\label{a:onsager-second-order}

In the main text (Sec.~\ref{s:Onsager}) we set the Onsager mobility to $\bfm{V}=\bfm{I}$, so that reciprocity reduces to Euclidean symmetry. Here we prove the general claim made there: when the forward operators are $\bfm{V}$-reciprocal in the sense of Eq.~\eqref{eq:V-reciprocal-real}, the physical hardware propagates the \emph{$\bfm{V}$-weighted} adjoint field $\bfm{d}=\bfm{V}\bfm{a}$, and the Euclidean adjoint $\bfm{a}=\bfm{V}^{-1}\bfm{d}$ that enters the gradient formulas is recovered by applying $\bfm{V}^{-1}$. We work with the second-order real-valued system of Sec.~\ref{s:problem-statement}. As shown at the end, the argument is a similarity transformation by $\bfm{V}$ and carries over verbatim to all the other systems.

\subsection{Assumptions}\label{a:onsager-assumptions}

Let $\bfm{V}=\bfm{V}^T\succ0$ be a constant, invertible Onsager mobility. We assume that all three forward operators are $\bfm{V}$-reciprocal,
\begin{equation}\label{a:eq:onsager-operators}
    \bfm{V}\bfm{M}^T\bfm{V}^{-1}=\bfm{M},
    \qquad
    \bfm{V}\bfm{D}^T\bfm{V}^{-1}=\bfm{D},
    \qquad
    \bfm{V}\bfm{F}_{\bfm{u}}^T\bfm{V}^{-1}=\bfm{F}_{\bfm{u}},
\end{equation}
the last evaluated along the (time-reversed) forward trajectory. In the common symmetric case $\bfm{M}^T=\bfm{M}$ and $\bfm{D}^T=\bfm{D}$, the first two conditions are the commutation relations $\bfm{M}\bfm{V}=\bfm{V}\bfm{M}$ and $\bfm{D}\bfm{V}=\bfm{V}\bfm{D}$.

The condition on $\bfm{F}_{\bfm{u}}$ is the linearized form of an \emph{Onsager-symmetric gradient flow}: if the internal force derives from a scalar energy through the mobility,
\begin{equation}\label{a:eq:onsager-force}
    \bfm{F}(\bfm{u},\bfm{p}_F)=\bfm{V}\,E_{\bfm{u}}(\bfm{u},\bfm{p}_F),
\end{equation}
then $\bfm{F}_{\bfm{u}}=\bfm{V}E_{\bfm{u}\bfm{u}}$ with the Hessian $E_{\bfm{u}\bfm{u}}$ symmetric, so $\bfm{V}\bfm{F}_{\bfm{u}}^T\bfm{V}^{-1}=\bfm{V}E_{\bfm{u}\bfm{u}}^T\bfm{V}^T\bfm{V}^{-1}=\bfm{V}E_{\bfm{u}\bfm{u}}=\bfm{F}_{\bfm{u}}$, as required. For a quadratic energy $E=\tfrac12\langle\bfm{u},\bfm{H}\bfm{u}\rangle$ with $\bfm{H}^T=\bfm{H}$ this is the linear Onsager-reciprocal case $\bfm{F}=\bfm{V}\bfm{H}\bfm{u}$. For nonquadratic $E$ the dynamics is nonlinear, but the mobility is still constant and reciprocal.

\subsection{The $\bfm{V}$-weighted adjoint propagates under the forward operators}\label{a:onsager-substitution}

Recall the forward-time adjoint equation~\eqref{eq:adjoint-method-second-order-adjoint},
\begin{equation}\label{a:eq:onsager-adjoint-recall}
    \bfm{M}^T\ddot{\bfm{a}}(t) + \bfm{D}^T\dot{\bfm{a}}(t) + \bfm{F}_{\bfm{u}}^T\bigl[\bfm{u}(T-t),\bfm{p}_F,T-t\bigr]\bfm{a}(t) + \mathbb{P}\theta_{\bfm{u}}\bigl[\mathbb{P}\bfm{u}(T-t)\bigr] = \bfm{0}.
\end{equation}
Define the transformed field
\begin{equation}\label{a:eq:onsager-d-def}
    \bfm{d}(t):=\bfm{V}\bfm{a}(t),
    \qquad\text{equivalently}\qquad
    \bfm{a}(t)=\bfm{V}^{-1}\bfm{d}(t).
\end{equation}
Substituting $\bfm{a}=\bfm{V}^{-1}\bfm{d}$ into~\eqref{a:eq:onsager-adjoint-recall} and multiplying on the left by $\bfm{V}$,
\begin{equation}\label{a:eq:onsager-conjugated}
    \bigl(\bfm{V}\bfm{M}^T\bfm{V}^{-1}\bigr)\ddot{\bfm{d}}
    + \bigl(\bfm{V}\bfm{D}^T\bfm{V}^{-1}\bigr)\dot{\bfm{d}}
    + \bigl(\bfm{V}\bfm{F}_{\bfm{u}}^T\bfm{V}^{-1}\bigr)\bfm{d}
    + \bfm{V}\mathbb{P}\theta_{\bfm{u}}\bigl[\mathbb{P}\bfm{u}(T-t)\bigr] = \bfm{0}.
\end{equation}
By the $\bfm{V}$-reciprocity assumptions~\eqref{a:eq:onsager-operators}, the three conjugated operators collapse to the \emph{forward} operators $\bfm{M}$, $\bfm{D}$, $\bfm{F}_{\bfm{u}}$, leaving
\begin{equation}\label{a:eq:onsager-d-equation}
    \bfm{M}\ddot{\bfm{d}}(t) + \bfm{D}\dot{\bfm{d}}(t) + \bfm{F}_{\bfm{u}}\bigl[\bfm{u}(T-t),\bfm{p}_F,T-t\bigr]\bfm{d}(t) + \bfm{V}\mathbb{P}\theta_{\bfm{u}}\bigl[\mathbb{P}\bfm{u}(T-t)\bigr] = \bfm{0}.
\end{equation}
This is governed by the same linearized operator that the physical device realizes around the (time-reversed) forward trajectory, as in the Euclidean case $\bfm{V}=\bfm{I}$, but with the adjoint source now carrying the weight $\bfm{V}$.

The adjoint initial conditions~\eqref{eq:adjoint-method-second-order-adjoint-ic1}--\eqref{eq:adjoint-method-second-order-adjoint-ic2} transform the same way. Substituting $\bfm{a}=\bfm{V}^{-1}\bfm{d}$ and multiplying by $\bfm{V}$,
\begin{align}
    \bfm{M}\bfm{d}(0) &= -\,\bfm{V}\mathbb{P}\psi_{\dot{\bfm{u}}}\bigl[\mathbb{P}\bfm{u}(T),\mathbb{P}\dot{\bfm{u}}(T)\bigr], \label{a:eq:onsager-d-ic1}\\
    \bfm{M}\dot{\bfm{d}}(0) &= -\,\bfm{D}\bfm{d}(0) - \bfm{V}\mathbb{P}\psi_{\bfm{u}}\bigl[\mathbb{P}\bfm{u}(T),\mathbb{P}\dot{\bfm{u}}(T)\bigr], \label{a:eq:onsager-d-ic2}
\end{align}
again using~\eqref{a:eq:onsager-operators}. Thus $\bfm{d}$ solves the forward-operator problem~\eqref{a:eq:onsager-d-equation}--\eqref{a:eq:onsager-d-ic2}, with every adjoint source and initial-condition kick simply pre-multiplied by $\bfm{V}$.

\subsection{Physical realization and gradient recovery}\label{a:onsager-realization}

Eq.~\eqref{a:eq:onsager-d-equation} has the structure analyzed for $\bfm{V}=\bfm{I}$: it propagates under the forward $\bfm{M}$, $\bfm{D}$, $\bfm{F}_{\bfm{u}}$. The physical constructions therefore apply unchanged to $\bfm{d}$:
\begin{itemize}
    \item \textbf{Linear $\bfm{V}$-reciprocal} ($\bfm{F}=\bfm{V}\bfm{H}\bfm{u}$, with $\bfm{V}$-reciprocal $\bfm{M}$ and $\bfm{D}$): $\bfm{d}$ is obtained by a direct finite-amplitude run of the same hardware, driven by the $\bfm{V}$-weighted adjoint source and initialized by~\eqref{a:eq:onsager-d-ic1}--\eqref{a:eq:onsager-d-ic2}. As in the Euclidean case, $\bfm{V}$-reciprocal damping is permitted.
    \item \textbf{Nonlinear gradient flow} ($\bfm{F}=\bfm{V}E_{\bfm{u}}$, $\bfm{D}=\bfm{0}$): $\bfm{d}$ is obtained as the infinitesimal response of the nudged time-reversed trajectory (Appendix~\ref{a:second-order-nonlinear-physical}), now with the source and initial-condition nudges pre-multiplied by $\bfm{V}$. If instead $\bfm{D}=\gamma\bfm{M}$, the exponential reweighting of Appendix~\ref{a:mass-proportional-damping} applies on top.
\end{itemize}
In both cases the measured physical field is $\bfm{d}=\bfm{V}\bfm{a}$. The Euclidean adjoint required by the gradient overlaps~\eqref{eq:adjoint-method-gradient-pM}--\eqref{eq:adjoint-method-gradient-v0} is recovered by a single application of $\bfm{V}^{-1}$,
\begin{equation}\label{a:eq:onsager-recover}
    \bfm{a}(t)=\bfm{V}^{-1}\bfm{d}(t),
\end{equation}
after which, for instance,
\begin{equation}\label{a:eq:onsager-weighted-gradient}
    \frac{d\chi}{d\bfm{p}_F}
    = \int_0^T \bigl\langle \bfm{V}^{-1}\bfm{d}(T-t),\,\bfm{F}_{\bfm{p}_F}\bigl[\bfm{u}(t),\bfm{p}_F,t\bigr]\bigr\rangle\,dt.
\end{equation}
This proves the claim of Sec.~\ref{s:Onsager} and the second-order, $\bfm{V}$-reciprocal generalization of Theorem~\ref{thm:physical-adjoint}.

\subsection{Generalization to all other systems}\label{a:onsager-generalization}

The proof used a single structural fact: conjugation by $\bfm{V}$ turns each transposed (Euclidean-adjoint) operator into the corresponding forward operator, $\bfm{V}(\,\cdot\,)^T\bfm{V}^{-1}=(\,\cdot\,)$. Nothing else about the second-order form entered. The same $\bfm{d}=\bfm{V}\bfm{a}$ substitution therefore carries over to every other system treated in this article:
\begin{itemize}
    \item \textbf{First-order real systems} are the special case $\bfm{M}=\bfm{0}$, $\bfm{D}=\bfm{I}$ of~\eqref{a:eq:onsager-d-equation}: the same calculation gives $\dot{\bfm{d}}+\bfm{F}_{\bfm{u}}[\bfm{u}(T-t)]\bfm{d}+\bfm{V}\mathbb{P}\theta_{\bfm{u}}=\bfm{0}$ with $\bfm{d}(0)=-\bfm{V}\mathbb{P}\psi_{\bfm{u}}$.
    \item \textbf{Complex Schr\"odinger- and Helmholtz-type systems} admit the analogous similarity transformation in their Hermitian inner product.
\end{itemize}
In every case the hardware propagates the $\bfm{V}$-weighted field and the Euclidean adjoint is recovered by applying $\bfm{V}^{-1}$ before the gradient overlap. \qed

\section{Adjoint state equation and gradient for first-order systems}\label{a:first-order-adjoint}

We derive the first-order adjoint equation~\eqref{eq:main-first-order-adjoint} and the associated gradient formula quoted in Sec.~\ref{s:first-order}.
The derivation follows Appendix~\ref{apA} step for step: the Lagrangian is again a functional of independent arguments, and each of its partial derivatives supplies one ingredient of the method. Stationarity in the multipliers returns the forward problem, stationarity in the state defines the adjoint problem, and the remaining explicit parameter dependence is the gradient.
Two things are simpler here. The first-order problem carries a single initial condition, hence a single initial-condition multiplier, and its constraint integral needs only one integration by parts.
We reuse the differentiability assumptions of Appendix~\ref{apA} throughout.

\subsection{Constrained optimization problem}\label{a:fo-ss-problem}

The forward dynamics is the first-order system~\eqref{eq:main-first-order-state},
\begin{equation}\label{a:eq:fo-state}
    \dot{\bfm{u}}(t) + \bfm{F}\bigl[\bfm{u}(t),\bfm{p}_F,t\bigr] - \bfm{f}(t,\bfm{p}_f) = \bfm{0},
    \qquad t\in[0,T],
    \qquad \bfm{u}(0)=\bfm{u}_0(\bfm{p}_{u_0}),
\end{equation}
with $\bfm{u}\in C^1([0,T],V)$.
The cost retains a trajectory and a terminal part, the latter now depending on $\bfm{u}(T)$ alone, since the full state of a first-order system is the position itself,
\begin{equation}\label{a:eq:fo-cost}
    \chi[\bfm{u}] = \int_0^T \theta\bigl[\mathbb{P}\bfm{u}(t)\bigr]\,dt + \psi\bigl[\mathbb{P}\bfm{u}(T)\bigr].
\end{equation}
The cost evaluated on the forward solution defines the reduced cost $\chi(\bfm{p}) := \chi[\bfm{u}(\cdot;\bfm{p})]$, and the optimization problem is the following.
\begin{problem}[Constrained optimization of a first-order ODE]\label{a:prob:fo-opt}
Find
\begin{equation}\label{a:eq:fo-opt}
    \min_{\bfm{p}\in\mathbb{R}^M}\;\chi[\bfm{u}]
    \qquad\text{subject to}\qquad
    \begin{cases}
    \dot{\bfm{u}}(t) + \bfm{F}[\bfm{u}(t),\bfm{p}_F,t] - \bfm{f}(t,\bfm{p}_f) = \bfm{0}, & t\in[0,T],\\[2pt]
    \bfm{u}(0)=\bfm{u}_0(\bfm{p}_{u_0}).
    \end{cases}
\end{equation}
\end{problem}

\subsection{The Lagrangian}\label{a:fo-ss-lagrangian}

We enforce the state equation with a multiplier path $\bfm{b}\in C^1([0,T],V)$ and the initial condition with a multiplier vector $\boldsymbol{\lambda}\in V$, and define
\begin{equation}\label{a:eq:fo-lagrangian}
    \mathcal{L}[\bfm{u},\bfm{b},\boldsymbol{\lambda},\bfm{p}]
    := \int_0^T \bigl\langle \bfm{b}(t),\, \dot{\bfm{u}}(t) + \bfm{F}[\bfm{u}(t),\bfm{p}_F,t] - \bfm{f}(t,\bfm{p}_f) \bigr\rangle\,dt
    + \bigl\langle \boldsymbol{\lambda},\, \bfm{u}(0)-\bfm{u}_0(\bfm{p}_{u_0}) \bigr\rangle
    + \chi[\bfm{u}].
\end{equation}
As in Appendix~\ref{apA}, the four arguments of $\mathcal{L}$ are \emph{independent}: $\bfm{u}$ is an arbitrary path, \emph{not} assumed to solve the state equation, and $\bfm{p}$ enters only through the explicit dependencies $\bfm{F}[\cdot,\bfm{p}_F,\cdot]$, $\bfm{f}(\cdot,\bfm{p}_f)$, $\bfm{u}_0(\bfm{p}_{u_0})$.
We may therefore differentiate with respect to each argument separately, and the derivation consists of computing the four partial derivatives $\mathcal{L}_{\bfm{b}}$, $\mathcal{L}_{\boldsymbol{\lambda}}$, $\mathcal{L}_{\bfm{u}}$, $\mathcal{L}_{\bfm{p}}$ in turn.

\subsection{Stationarity with respect to the multipliers: the forward problem}\label{a:fo-ss-forward}

The Lagrangian is affine in $\bfm{b}$ and $\boldsymbol{\lambda}$, so both derivatives can be read off directly. Identifying them with elements of the respective spaces through the inner product,
\begin{equation}\label{a:eq:fo-La}
    \mathcal{L}_{\bfm{b}}(t) = \dot{\bfm{u}}(t) + \bfm{F}[\bfm{u}(t),\bfm{p}_F,t] - \bfm{f}(t,\bfm{p}_f),
    \qquad
    \mathcal{L}_{\boldsymbol{\lambda}} = \bfm{u}(0)-\bfm{u}_0(\bfm{p}_{u_0}),
\end{equation}
so $\mathcal{L}_{\bfm{b}} = 0$ holds if and only if $\bfm{u}$ satisfies the state equation~\eqref{a:eq:fo-state} on $[0,T]$, and $\mathcal{L}_{\boldsymbol{\lambda}} = 0$ if and only if $\bfm{u}$ satisfies the initial condition.
Stationarity with respect to a multiplier again recovers the constraint that the multiplier enforces. Together the two conditions state that $\bfm{u} = \bfm{u}(\cdot;\bfm{p})$ is the forward solution.

\subsection{Stationarity turns the partial derivative into the gradient}\label{a:fo-ss-key-identity}

Evaluated at the forward solution the constraint terms in~\eqref{a:eq:fo-lagrangian} vanish, so $\chi(\bfm{p}) = \mathcal{L}[\bfm{u}(\cdot;\bfm{p}),\bfm{b},\boldsymbol{\lambda},\bfm{p}]$ for every $\bfm{b}$ and $\boldsymbol{\lambda}$.
Letting the multipliers depend differentiably on $\bfm{p}$ as well and differentiating, one term per argument of $\mathcal{L}$, gives
\begin{equation}\label{a:eq:fo-chain-rule}
    \frac{d\chi}{d\bfm{p}}
    = \mathcal{L}_{\bfm{u}}\frac{d\bfm{u}}{d\bfm{p}}
    + \mathcal{L}_{\bfm{b}}\frac{d\bfm{b}}{d\bfm{p}}
    + \mathcal{L}_{\boldsymbol{\lambda}}\frac{d\boldsymbol{\lambda}}{d\bfm{p}}
    + \mathcal{L}_{\bfm{p}}.
\end{equation}
The two multiplier terms vanish by Sec.~\ref{a:fo-ss-forward}, whatever the sensitivities $d\bfm{b}/d\bfm{p}$ and $d\boldsymbol{\lambda}/d\bfm{p}$ may be.
The expensive sensitivity $d\bfm{u}/d\bfm{p}$ enters only multiplied by $\mathcal{L}_{\bfm{u}}$, and $\bfm{b}$, $\boldsymbol{\lambda}$ are still free. Choosing them such that $\mathcal{L}_{\bfm{u}} = 0$ therefore leaves the identity
\begin{equation}\label{a:eq:fo-key-identity}
    \frac{d\chi}{d\bfm{p}} = \mathcal{L}_{\bfm{p}},
\end{equation}
in which only the cheap explicit parameter dependence of $\bfm{F}$, $\bfm{f}$ and $\bfm{u}_0$ appears.

\subsection{Stationarity with respect to the state: the adjoint problem}\label{a:fo-ss-adjoint}

Let $\bfm{h}\in C^1([0,T],V)$ be an arbitrary test path with no constraints at either endpoint. Varying $\bfm{u}\to\bfm{u}+\varepsilon\bfm{h}$, and noting that the drive $\bfm{f}$ has no $\bfm{u}$-dependence and does not contribute,
\begin{equation}\label{a:eq:fo-variation}
    \mathcal{L}_{\bfm{u}}\bfm{h}
    = \int_0^T \bigl\langle \bfm{b},\, \dot{\bfm{h}} + \bfm{F}_{\bfm{u}}\bfm{h}\bigr\rangle\,dt
    + \bigl\langle \boldsymbol{\lambda},\,\bfm{h}(0)\bigr\rangle
    + \int_0^T \bigl\langle \mathbb{P}\theta_{\bfm{u}},\,\bfm{h}\bigr\rangle\,dt
    + \bigl\langle \mathbb{P}\psi_{\bfm{u}},\,\bfm{h}(T)\bigr\rangle.
\end{equation}
Where the second-order case needed two integrations by parts, a single one here moves the derivative off $\bfm{h}$,
\begin{equation}\label{a:eq:fo-ibp}
    \int_0^T \langle \bfm{b},\dot{\bfm{h}}\rangle\,dt = -\int_0^T \langle \dot{\bfm{b}},\bfm{h}\rangle\,dt + \langle \bfm{b}(T),\bfm{h}(T)\rangle - \langle \bfm{b}(0),\bfm{h}(0)\rangle,
\end{equation}
and moving $\bfm{F}_{\bfm{u}}$ across the inner product gives $\langle \bfm{b},\bfm{F}_{\bfm{u}}\bfm{h}\rangle = \langle \bfm{F}_{\bfm{u}}^T\bfm{b},\bfm{h}\rangle$. Grouping by where $\bfm{h}$ is evaluated,
\begin{equation}\label{a:eq:fo-full-variation}
    \mathcal{L}_{\bfm{u}}\bfm{h} = \int_0^T \bigl\langle -\dot{\bfm{b}} + \bfm{F}_{\bfm{u}}^T\bfm{b} + \mathbb{P}\theta_{\bfm{u}},\,\bfm{h}\bigr\rangle\,dt 
    + \bigl\langle \bfm{b}(T) + \mathbb{P}\psi_{\bfm{u}},\,\bfm{h}(T)\bigr\rangle
    + \bigl\langle -\bfm{b}(0) + \boldsymbol{\lambda},\,\bfm{h}(0)\bigr\rangle.
\end{equation}
Since $\bfm{h}$ is arbitrary and unconstrained at both endpoints, $\mathcal{L}_{\bfm{u}} = 0$ requires each inner product to vanish separately.
The bulk term and the term at $t=T$ involve only $\bfm{b}$. Together they form a terminal-value problem that determines $\bfm{b}$ uniquely and defines the adjoint state.
\begin{definition}[Adjoint state for the first-order problem]\label{a:def:fo-adjoint}
The \emph{adjoint state} $\bfm{b}(t)\in V$ associated with Problem~\ref{a:prob:fo-opt} is the solution of the adjoint ODE
\begin{equation}\label{a:eq:fo-adjoint-backward}
    -\dot{\bfm{b}}(t) + \bfm{F}_{\bfm{u}}^T\bigl[\bfm{u}(t),\bfm{p}_F,t\bigr]\bfm{b}(t) + \mathbb{P}\theta_{\bfm{u}}\bigl[\mathbb{P}\bfm{u}(t)\bigr] = \bfm{0}, \qquad t\in[0,T],
\end{equation}
a first-order ODE requiring one condition, with the terminal condition
\begin{equation}\label{a:eq:fo-tc}
    \bfm{b}(T) = -\mathbb{P}\psi_{\bfm{u}}\bigl[\mathbb{P}\bfm{u}(T)\bigr].
\end{equation}
\end{definition}
We solve the adjoint ODE~\eqref{a:eq:fo-adjoint-backward} backward from $T$ to $0$ using this terminal condition.
This produces $\bfm{b}(0)$, which in turn determines the multiplier via the term at $t=0$ in~\eqref{a:eq:fo-full-variation},
\begin{equation}\label{a:eq:fo-lambda}
    \boldsymbol{\lambda} = \bfm{b}(0).
\end{equation}
The adjoint ODE is linear in $\bfm{b}$, its coefficients depend continuously on $\bfm{p}$, and~\eqref{a:eq:fo-lambda} is explicit. Hence $(\bfm{b},\boldsymbol{\lambda})$ is the unique solution of $\mathcal{L}_{\bfm{u}} = 0$ and depends smoothly on $\bfm{p}$, as required in Sec.~\ref{a:fo-ss-key-identity}.

\subsection{The gradient}\label{a:fo-ss-gradient}

Both remaining partial derivatives are now in place, and evaluating $\mathcal{L}_{\bfm{p}}$ gives the gradient.
\begin{theorem}[Adjoint gradient for the first-order problem]\label{a:thm:fo-gradient}
Let $\bfm{u} = \bfm{u}(\cdot;\bfm{p})$ solve the forward problem of Problem~\ref{a:prob:fo-opt} and let $\bfm{b}$ be the adjoint state of Definition~\ref{a:def:fo-adjoint}. Then the gradient of the reduced cost with respect to $\bfm{p}$ is
\begin{equation}\label{a:eq:fo-gradient-backward}
    \frac{d\chi}{d\bfm{p}} = \int_0^T \bigl\langle \bfm{b}(t),\, \bfm{F}_{\bfm{p}_F}[\bfm{u}(t),\bfm{p}_F,t] - \bfm{f}_{\bfm{p}_f}(t,\bfm{p}_f)\bigr\rangle\,dt - \bigl\langle \bfm{b}(0),\, \tfrac{d\bfm{u}_0}{d\bfm{p}_{u_0}}\bigr\rangle.
\end{equation}
\end{theorem}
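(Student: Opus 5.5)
The proof follows the template of Theorem~\ref{apA:thm:gradient}. The adjoint state $(\bfm{b},\boldsymbol{\lambda})$ of Definition~\ref{a:def:fo-adjoint} together with Eq.~\eqref{a:eq:fo-lambda} makes every inner product in~\eqref{a:eq:fo-full-variation} vanish, so $\mathcal{L}_{\bfm{u}}=0$ at the forward solution. Since $\mathcal{L}_{\bfm{b}}=0$ and $\mathcal{L}_{\boldsymbol{\lambda}}=0$ hold there by Sec.~\ref{a:fo-ss-forward}, the chain-rule identity~\eqref{a:eq:fo-chain-rule} collapses to the key identity~\eqref{a:eq:fo-key-identity}, $d\chi/d\bfm{p}=\mathcal{L}_{\bfm{p}}$. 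It therefore suffices to compute the explicit partial derivative of the Lagrangian~\eqref{a:eq:fo-lagrangian} with respect to $\bfm{p}$, holding $\bfm{u}$, $\bfm{b}$ and $\boldsymbol{\lambda}$ fixed.

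I would differentiate term by term. The term $\dot{\bfm{u}}$ in the constraint integral carries no parameter dependence. The internal force contributes $\langle\bfm{b},\bfm{F}_{\bfm{p}_F}\rangle$ and the drive contributes $-\langle\bfm{b},\bfm{f}_{\bfm{p}_f}\rangle$, both integrated over $[0,T]$. The cost $\chi[\bfm{u}]$ has no explicit $\bfm{p}$-dependence once $\bfm{u}$ is held fixed. The initial-condition term gives $-\langle\boldsymbol{\lambda},d\bfm{u}_0/d\bfm{p}_{u_0}\rangle$. Substituting $\boldsymbol{\lambda}=\bfm{b}(0)$ from~\eqref{a:eq:fo-lambda} then yields~\eqref{a:eq:fo-gradient-backward}. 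Each $\langle\bfm{b},\bfm{F}_{\bfm{p}}\rangle$ is read componentwise, as explained in the interpretation paragraph of Appendix~\ref{apA:ss:gradient}.

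The only step that needs care is the justification for differentiating under the integral sign and for treating the multipliers as smooth functions of $\bfm{p}$. Both follow from the standing assumptions. $\bfm{F}$ is $C^2$ in $(\bfm{u},\bfm{p})$ and continuous in $t$, and $\bfm{f}$ is $C^2$ in $\bfm{p}$. By Picard--Lindel\"of the forward solution depends smoothly on $\bfm{p}$. The adjoint ODE~\eqref{a:eq:fo-adjoint-backward} is linear in $\bfm{b}$ with coefficients continuous in $(t,\bfm{p})$, so $\bfm{b}$ and $\boldsymbol{\lambda}$ are also smooth in $\bfm{p}$. I do not expect a genuine obstacle, since the result is a direct specialization of Theorem~\ref{apA:thm:gradient}. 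One can check consistency by setting $\bfm{M}=\bfm{0}$ and $\bfm{D}=\bfm{I}$ in~\eqref{apA:eq:gradient}: the initial-position endpoint term $\langle\bfm{M}^T\dot{\bfm{b}}(0)-\bfm{D}^T\bfm{b}(0),\,d\bfm{u}_0/d\bfm{p}_{u_0}\rangle$ reduces to $-\langle\bfm{b}(0),\,d\bfm{u}_0/d\bfm{p}_{u_0}\rangle$, and the velocity term disappears.
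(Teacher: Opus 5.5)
Your proposal is correct and matches the paper's proof essentially step for step. Both use $\mathcal{L}_{\bfm{u}}=\mathcal{L}_{\bfm{b}}=\mathcal{L}_{\boldsymbol{\lambda}}=0$ to reach $d\chi/d\bfm{p}=\mathcal{L}_{\bfm{p}}$, differentiate only the explicit dependence of $\bfm{F}$, $\bfm{f}$ and $\bfm{u}_0$ at fixed $(\bfm{u},\bfm{b},\boldsymbol{\lambda})$, and substitute $\boldsymbol{\lambda}=\bfm{b}(0)$. One caveat: the reduction $\bfm{M}=\bfm{0}$, $\bfm{D}=\bfm{I}$ of Theorem~\ref{apA:thm:gradient} is only formal, since that theorem assumes invertible $\bfm{M}$, so it serves as a sanity check rather than making the result a ``direct specialization.''
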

\noindent\textit{Proof.}
With $(\bfm{b},\boldsymbol{\lambda})$ chosen as in Definition~\ref{a:def:fo-adjoint} and Eq.~\eqref{a:eq:fo-lambda}, we have $\mathcal{L}_{\bfm{u}} = 0$, and the key identity~\eqref{a:eq:fo-key-identity} gives $d\chi/d\bfm{p} = \mathcal{L}_{\bfm{p}}$.
The explicit $\bfm{p}$-dependence of $\mathcal{L}$ enters through $\bfm{F}[\bfm{u},\bfm{p}_F,t]$ and $\bfm{f}(t,\bfm{p}_f)$ in the constraint integral and through $\bfm{u}_0(\bfm{p}_{u_0})$ in the initial-condition term. The cost $\chi$ has none.
Differentiating each at fixed $\bfm{u}$, $\bfm{b}$, $\boldsymbol{\lambda}$ and substituting $\boldsymbol{\lambda} = \bfm{b}(0)$ gives~\eqref{a:eq:fo-gradient-backward}. \qed

As in the second-order case, every quantity on the right-hand side is known: $\bfm{u}(t)$ from the forward solve, $\bfm{b}(t)$ from the backward solve, and the parameter sensitivities from the model. The cost is one forward solve plus one backward solve, independent of the number of parameters $M$.

\paragraph{Direct optimization of the initial condition.}
Suppose a subvector of $\bfm{p}$ encodes the initial state itself, i.e.\ $\bfm{u}_0(\bfm{p}_{u_0}) = \bfm{p}_{u_0}$ with $\bfm{p}_{u_0}\in V$. Then $d\bfm{u}_0/d\bfm{p}_{u_0} = I_V$, and the $\bfm{u}_0$-block of Eq.~\eqref{a:eq:fo-gradient-backward} identifies the gradient directly:
\begin{equation}\label{a:eq:fo-grad-u0}
    \frac{d\chi}{d\bfm{u}_0} = -\bfm{b}(0),
\end{equation}
that is, minus the initial-condition multiplier, which is the $\bfm{M}=\bfm{0}$, $\bfm{D}=\bfm{I}$ case of Eq.~\eqref{apA:eq:grad-u0}.

\subsection{Integrating the adjoint forward in time}\label{a:fo-ss-forward-time}

The adjoint problem of Definition~\ref{a:def:fo-adjoint} is posed backward in time, while physical hardware runs forward. As in Sec.~\ref{apA:ss:forward-time}, the change of variable $t\mapsto T-t$ removes the mismatch.
Define the forward-time adjoint field
\begin{equation}\label{a:eq:fo-a-def}
    \bfm{a}(t) := \bfm{b}(T-t), \qquad \dot{\bfm{a}}(t) = -\dot{\bfm{b}}(T-t).
\end{equation}
Evaluating the adjoint ODE~\eqref{a:eq:fo-adjoint-backward} at time $T-t$ and substituting~\eqref{a:eq:fo-a-def} gives the forward-time adjoint equation
\begin{equation}\label{a:eq:fo-adjoint-forward}
    \dot{\bfm{a}}(t) + \bfm{F}_{\bfm{u}}^T\bigl[\bfm{u}(T-t),\bfm{p}_F,T-t\bigr]\bfm{a}(t) + \mathbb{P}\theta_{\bfm{u}}\bigl[\mathbb{P}\bfm{u}(T-t)\bigr] = \bfm{0}, \qquad t\in[0,T].
\end{equation}
Under the reversal the derivative term changes sign, $-\dot{\bfm{b}}\to+\dot{\bfm{a}}$. This is the $\bfm{M}=\bfm{0}$, $\bfm{D}=\bfm{I}$ instance of the flip $-\bfm{D}^T\to+\bfm{D}^T$ of Sec.~\ref{apA:ss:forward-time}, and it gives Eq.~\eqref{a:eq:fo-adjoint-forward} the same sign pattern as the state equation~\eqref{a:eq:fo-state}. Apart from this, the reversal only means that the linearization and the source are read along the reversed trajectory $\bfm{u}(T-t)$, with the explicit time dependence of $\bfm{F}$ replayed as $t\mapsto T-t$.
The terminal condition~\eqref{a:eq:fo-tc} sits at the reversed time $t=0$ and becomes a genuine initial condition,
\begin{equation}\label{a:eq:fo-ic-forward}
    \bfm{a}(0) = -\mathbb{P}\psi_{\bfm{u}}\bigl[\mathbb{P}\bfm{u}(T)\bigr].
\end{equation}
Likewise $\bfm{b}(0) = \bfm{a}(T)$, so substituting $\bfm{b}(t) = \bfm{a}(T-t)$ throughout the gradient~\eqref{a:eq:fo-gradient-backward} gives
\begin{equation}\label{a:eq:fo-gradient-forward}
    \frac{d\chi}{d\bfm{p}} = \int_0^T \bigl\langle \bfm{a}(T-t),\, \bfm{F}_{\bfm{p}_F}[\bfm{u}(t),\bfm{p}_F,t] - \bfm{f}_{\bfm{p}_f}(t,\bfm{p}_f)\bigr\rangle\,dt - \bigl\langle \bfm{a}(T),\, \tfrac{d\bfm{u}_0}{d\bfm{p}_{u_0}}\bigr\rangle.
\end{equation}
Eqs.~\eqref{a:eq:fo-adjoint-forward} and~\eqref{a:eq:fo-ic-forward} are the adjoint equation~\eqref{eq:main-first-order-adjoint} with the source and initial condition~\eqref{eq:adj-src-IC-first-order} quoted in the main text.
They are likewise the first-order analogues of the second-order results of Appendix~\ref{apA}, and follow formally from them by the overdamped reduction $\bfm{M}=\bfm{0}$, $\bfm{D}=\bfm{I}$.

\section{Wirtinger derivation of the Schr\"odinger adjoint}\label{a:schrodinger-wirtinger}

We derive the Wirtinger adjoint state, gradient, and forward-time conjugated field used in Sec.~\ref{s:schrodinger} from an augmented Lagrangian in which $\bfm{u}$ and $\overline{\bfm{u}}$ are treated as independent variables~\cite{bosch2025adjoint}. The derivation parallels the real first-order case of Appendix~\ref{a:first-order-adjoint}. The only new ingredient is Wirtinger calculus \cite{koor2023short}.
The use of Wirtinger variables is natural here because real-valued objectives of complex fields are generically non-holomorphic: the variation of the cost depends on both $\delta \bfm{u}$ and $\delta \overline{\bfm{u}}$. 
This viewpoint is standard in complex optimization, where one treats a complex variable and its conjugate as formally independent variables to obtain stationary conditions and descent directions \cite{Brandwood1983,Hjorungnes2007,Sorber2012}. 
For time-dependent complex dynamics, the same variational structure appears in quantum optimal control, where the Schr\"odinger equation is imposed by a complex Lagrange multiplier and stationarity of the augmented functional yields a terminal-value adjoint equation \cite{Werschnik2007}. 
A rigorous PDE-constrained optimization formulation in complex Banach spaces, including semigroup Schr\"odinger equations and first- and second-order optimality conditions, was developed in \cite{Aronna2019}. 
Our derivation follows this complex Lagrangian/Wirtinger viewpoint, but is specialized to the adjoint field and forward-adjoint overlap gradients needed for the physical backpropagation constructions in Sec.~\ref{s:schrodinger}.

\subsection{Problem and conventions}\label{a:schr-problem-sub}

The state is the Schr\"odinger-type system~\eqref{eq:main-schrodinger-state},
\begin{equation}\label{a:eq:schr-state}
    i\dot{\bfm{u}}(t) + \bfm{F}\bigl[\bfm{u}(t),\overline{\bfm{u}}(t),\bfm{p}_F,t\bigr] - \bfm{f}(t,\bfm{p}_f) = \bfm{0},
    \qquad \bfm{u}(0)=\bfm{u}_0(\bfm{p}_{u_0}),
\end{equation}
with $\bfm{u}\in\mathbb{C}^N$, real parameters $\bfm{p}$, and a real-valued cost
\begin{equation}\label{a:eq:schr-cost}
    \chi[\bfm{u}] = \int_0^T \theta\bigl[\mathbb{P}\bfm{u}(t),\mathbb{P}\overline{\bfm{u}}(t)\bigr]\,dt + \psi\bigl[\mathbb{P}\bfm{u}(T),\mathbb{P}\overline{\bfm{u}}(T)\bigr].
\end{equation}
We use the Hermitian inner product $\langle\bfm{x},\bfm{y}\rangle_{\mathbb{C}}=\overline{\bfm{x}}^T\bfm{y}$, treat $\bfm{u}$ and $\overline{\bfm{u}}$ as independent, and write the two Wirtinger Jacobians $\bfm{F}_{\bfm{u}}$ and $\bfm{F}_{\overline{\bfm{u}}}$. Since $\theta$ and $\psi$ are real, $\theta_{\bfm{u}}=\overline{\theta_{\overline{\bfm{u}}}}$ and $\psi_{\bfm{u}}=\overline{\psi_{\overline{\bfm{u}}}}$, and, for an arbitrary variation $\bfm{h}$ of $\bfm{u}$, the first variation of the (real) cost is
\begin{equation}\label{a:eq:schr-loss-variation}
    \delta\chi = 2\operatorname{Re}\Bigl[\int_0^T\bigl\langle\mathbb{P}\theta_{\overline{\bfm{u}}},\bfm{h}\bigr\rangle_{\mathbb{C}}\,dt + \bigl\langle\mathbb{P}\psi_{\overline{\bfm{u}}},\bfm{h}(T)\bigr\rangle_{\mathbb{C}}\Bigr].
\end{equation}

\begin{problem}[Constrained optimization of a Schr\"odinger-type system]\label{a:prob:schr}
Find
\begin{equation}\label{a:eq:schr-opt}
    \min_{\bfm{p}}\;\chi[\bfm{u}] \qquad\text{subject to}\qquad
    \begin{cases}
    i\dot{\bfm{u}}(t) + \bfm{F}[\bfm{u}(t),\overline{\bfm{u}}(t),\bfm{p}_F,t] - \bfm{f}(t,\bfm{p}_f) = \bfm{0}, & t\in[0,T],\\[2pt]
    \bfm{u}(0)=\bfm{u}_0(\bfm{p}_{u_0}).
    \end{cases}
\end{equation}
\end{problem}

\subsection{Augmented Lagrangian and the Wirtinger adjoint}\label{a:schr-lagrangian-sub}

Introduce a complex adjoint field $\bfm{b}\in C^1([0,T],\mathbb{C}^N)$ and an initial-condition multiplier $\boldsymbol{\lambda}\in\mathbb{C}^N$, and pair the complex constraint against them through $2\operatorname{Re}$ so that the Lagrangian is real,
\begin{equation}\label{a:eq:schr-lagrangian}
    \mathcal{L} = \chi + 2\operatorname{Re}\int_0^T \bigl\langle \bfm{b}(t),\, i\dot{\bfm{u}} + \bfm{F} - \bfm{f}\bigr\rangle_{\mathbb{C}}\,dt + 2\operatorname{Re}\bigl\langle\boldsymbol{\lambda},\,\bfm{u}(0)-\bfm{u}_0\bigr\rangle_{\mathbb{C}}.
\end{equation}
The Lagrangian is affine in $\bfm{b}$ and $\boldsymbol{\lambda}$, which enter only through the real pairing, so those two derivatives can be read off as the residuals they multiply,
\begin{equation}\label{a:eq:schr-Lb}
    \mathcal{L}_{\bfm{b}}(t) = i\dot{\bfm{u}}(t) + \bfm{F}\bigl[\bfm{u}(t),\overline{\bfm{u}}(t),\bfm{p}_F,t\bigr] - \bfm{f}(t,\bfm{p}_f),
    \qquad
    \mathcal{L}_{\boldsymbol{\lambda}} = \bfm{u}(0)-\bfm{u}_0(\bfm{p}_{u_0}).
\end{equation}
A real pairing $2\operatorname{Re}\langle\,\cdot\,,\bfm{X}\rangle_{\mathbb{C}}$ that vanishes for every complex direction forces $\bfm{X}=\bfm{0}$, since a direction and its multiple by $i$ recover the real and the imaginary part separately. Hence $\mathcal{L}_{\bfm{b}} = 0$ holds if and only if $\bfm{u}$ satisfies the state equation~\eqref{a:eq:schr-state} on $[0,T]$, and $\mathcal{L}_{\boldsymbol{\lambda}} = 0$ if and only if it satisfies the initial condition. Stationarity in the multipliers returns the forward problem, as in the real case.

Evaluated on that forward solution the two constraint pairings vanish and $\mathcal{L} = \chi(\bfm{p})$. Letting the multipliers depend differentiably on $\bfm{p}$ as well and differentiating, with one term per argument of $\mathcal{L}$ and each product denoting the real pairing of a derivative with a sensitivity,
\begin{equation}\label{a:eq:schr-chain-rule}
    \frac{d\chi}{d\bfm{p}} = \mathcal{L}_{\bfm{u}}\frac{d\bfm{u}}{d\bfm{p}} + \mathcal{L}_{\bfm{b}}\frac{d\bfm{b}}{d\bfm{p}} + \mathcal{L}_{\boldsymbol{\lambda}}\frac{d\boldsymbol{\lambda}}{d\bfm{p}} + \mathcal{L}_{\bfm{p}}.
\end{equation}
Here $\mathcal{L}_{\bfm{u}}$ is the state derivative, the object whose real pairing with a variation $\bfm{h}$ gives the first variation computed below. Because that pairing is taken under $2\operatorname{Re}$, it already carries the $\bfm{u}$ and the $\overline{\bfm{u}}$ contributions together, and no separate condition in $\overline{\bfm{u}}$ is needed. The two multiplier terms vanish by the previous paragraph, whatever the sensitivities $d\bfm{b}/d\bfm{p}$ and $d\boldsymbol{\lambda}/d\bfm{p}$ may be, and the expensive $d\bfm{u}/d\bfm{p}$ enters only against $\mathcal{L}_{\bfm{u}}$. Choosing $\bfm{b}$ and $\boldsymbol{\lambda}$ so that $\mathcal{L}_{\bfm{u}} = 0$ therefore leaves the key identity
\begin{equation}\label{a:eq:schr-key-identity}
    \frac{d\chi}{d\bfm{p}} = \mathcal{L}_{\bfm{p}}.
\end{equation}

We make $\mathcal{L}$ stationary by varying $\bfm{u}\to\bfm{u}+\varepsilon\bfm{h}$ (with the accompanying conjugate variation $\overline{\bfm{h}}$), the first variation $\delta\mathcal{L}$ being the coefficient of $\varepsilon$. The linearized residual is $i\dot{\bfm{h}} + \bfm{F}_{\bfm{u}}\bfm{h} + \bfm{F}_{\overline{\bfm{u}}}\overline{\bfm{h}}$, so together with the cost variation~\eqref{a:eq:schr-loss-variation},
\begin{equation}\label{a:eq:schr-variation}
    \delta\mathcal{L} = 2\operatorname{Re}\int_0^T\bigl\langle\bfm{b},\, i\dot{\bfm{h}} + \bfm{F}_{\bfm{u}}\bfm{h} + \bfm{F}_{\overline{\bfm{u}}}\overline{\bfm{h}}\bigr\rangle_{\mathbb{C}}\,dt + 2\operatorname{Re}\bigl\langle\boldsymbol{\lambda},\bfm{h}(0)\bigr\rangle_{\mathbb{C}} + \delta\chi.
\end{equation}
To read off the stationarity conditions we move every term onto $\bfm{h}$. The $\overline{\bfm{h}}$ contribution is folded onto $\bfm{h}$ under the real part,
\begin{equation}\label{a:eq:schr-B-identity}
    2\operatorname{Re}\bigl\langle\bfm{b},\bfm{B}\,\overline{\bfm{h}}\bigr\rangle_{\mathbb{C}} = 2\operatorname{Re}\bigl\langle\bfm{B}^T\overline{\bfm{b}},\bfm{h}\bigr\rangle_{\mathbb{C}},
\end{equation}
applied with $\bfm{B}=\bfm{F}_{\overline{\bfm{u}}}$, while the holomorphic term moves through the Hermitian adjoint, $2\operatorname{Re}\langle\bfm{b},\bfm{F}_{\bfm{u}}\bfm{h}\rangle_{\mathbb{C}}=2\operatorname{Re}\langle\overline{\bfm{F}_{\bfm{u}}}^T\bfm{b},\bfm{h}\rangle_{\mathbb{C}}$. The only time derivative is integrated by parts. Writing $\langle\bfm{b},i\dot{\bfm{h}}\rangle_{\mathbb{C}}=\langle-i\bfm{b},\dot{\bfm{h}}\rangle_{\mathbb{C}}$,
\begin{equation}\label{a:eq:schr-ibp}
    2\operatorname{Re}\int_0^T\langle\bfm{b},i\dot{\bfm{h}}\rangle_{\mathbb{C}}\,dt
    = 2\operatorname{Re}\int_0^T\langle i\dot{\bfm{b}},\bfm{h}\rangle_{\mathbb{C}}\,dt
    + 2\operatorname{Re}\Bigl[\langle-i\bfm{b}(T),\bfm{h}(T)\rangle_{\mathbb{C}} - \langle-i\bfm{b}(0),\bfm{h}(0)\rangle_{\mathbb{C}}\Bigr].
\end{equation}
Every surviving factor of $i$ originates from this step. Substituting~\eqref{a:eq:schr-loss-variation}--\eqref{a:eq:schr-ibp} into~\eqref{a:eq:schr-variation} and grouping by where $\bfm{h}$ is evaluated,
\begin{multline}\label{a:eq:schr-full-variation}
    \delta\mathcal{L} = 2\operatorname{Re}\int_0^T\bigl\langle i\dot{\bfm{b}} + \overline{\bfm{F}_{\bfm{u}}}^T\bfm{b} + \bfm{F}_{\overline{\bfm{u}}}^T\overline{\bfm{b}} + \mathbb{P}\theta_{\overline{\bfm{u}}},\,\bfm{h}\bigr\rangle_{\mathbb{C}}\,dt \\
    + 2\operatorname{Re}\bigl\langle -i\bfm{b}(T) + \mathbb{P}\psi_{\overline{\bfm{u}}},\,\bfm{h}(T)\bigr\rangle_{\mathbb{C}}
    + 2\operatorname{Re}\bigl\langle i\bfm{b}(0) + \boldsymbol{\lambda},\,\bfm{h}(0)\bigr\rangle_{\mathbb{C}}.
\end{multline}
Since $\bfm{h}$ is arbitrary and unconstrained at both endpoints, $\mathcal{L}_{\bfm{u}} = 0$ requires each bracket to vanish separately. The $t=0$ term fixes the multiplier $\boldsymbol{\lambda}=-i\bfm{b}(0)$, while the bulk and $t=T$ terms define the adjoint state.

\begin{definition}[Wirtinger adjoint state]\label{a:def:schr-adjoint}
The \emph{Wirtinger adjoint state} $\bfm{b}(t)\in\mathbb{C}^N$ associated with Problem~\ref{a:prob:schr} is the solution of the backward-time ODE
\begin{equation}\label{a:eq:schr-adjoint-backward}
    i\dot{\bfm{b}}(t) + \overline{\bfm{F}_{\bfm{u}}}^T\,\bfm{b}(t) + \bfm{F}_{\overline{\bfm{u}}}^T\,\overline{\bfm{b}}(t) + \mathbb{P}\theta_{\overline{\bfm{u}}}\bigl[\mathbb{P}\bfm{u}(t),\mathbb{P}\overline{\bfm{u}}(t)\bigr] = \bfm{0},
\end{equation}
with the Wirtinger Jacobians evaluated at $(\bfm{u}(t),\overline{\bfm{u}}(t))$, and the terminal condition
\begin{equation}\label{a:eq:schr-adjoint-tc}
    \bfm{b}(T) = -\,i\,\mathbb{P}\psi_{\overline{\bfm{u}}}\bigl[\mathbb{P}\bfm{u}(T),\mathbb{P}\overline{\bfm{u}}(T)\bigr].
\end{equation}
\end{definition}

\begin{theorem}[Wirtinger adjoint gradient]\label{a:thm:schr-gradient}
Let $\bfm{u}$ solve Problem~\ref{a:prob:schr} and $\bfm{b}$ be the adjoint state of Definition~\ref{a:def:schr-adjoint}. For real parameters $\bfm{p}$, the gradient of the cost is
\begin{equation}\label{a:eq:schr-gradient-a}
    \frac{d\chi}{d\bfm{p}} = 2\operatorname{Re}\int_0^T\bigl\langle\bfm{b}(t),\,\bfm{F}_{\bfm{p}_F}[\bfm{u}(t),\overline{\bfm{u}}(t),\bfm{p}_F,t]-\bfm{f}_{\bfm{p}_f}(t,\bfm{p}_f)\bigr\rangle_{\mathbb{C}}\,dt + 2\operatorname{Re}\bigl\langle i\,\bfm{b}(0),\,\bfm{u}_{0,\bfm{p}_{u_0}}\bigr\rangle_{\mathbb{C}}.
\end{equation}
\end{theorem}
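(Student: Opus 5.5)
The proof follows the template of Theorems~\ref{apA:thm:gradient} and~\ref{a:thm:fo-gradient}. With $\bfm{b}$ chosen as in Definition~\ref{a:def:schr-adjoint} and $\boldsymbol{\lambda}=-i\bfm{b}(0)$, every bracket in the full variation~\eqref{a:eq:schr-full-variation} vanishes. The bulk and $t=T$ brackets vanish by~\eqref{a:eq:schr-adjoint-backward}--\eqref{a:eq:schr-adjoint-tc}, and the $t=0$ bracket vanishes by the multiplier choice. Hence $\mathcal{L}_{\bfm{u}}=0$. Together with $\mathcal{L}_{\bfm{b}}=\mathcal{L}_{\boldsymbol{\lambda}}=0$ at the forward solution, the key identity~\eqref{a:eq:schr-key-identity} gives $d\chi/d\bfm{p}=\mathcal{L}_{\bfm{p}}$.

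Before invoking the identity, I will check that the adjoint state is well defined and depends smoothly on $\bfm{p}$. Equation~\eqref{a:eq:schr-adjoint-backward} is not complex-linear because of the $\overline{\bfm{b}}$ term. It is, however, real-linear in $(\operatorname{Re}\bfm{b},\operatorname{Im}\bfm{b})$, with coefficients that depend continuously on $t$ and smoothly on $\bfm{p}$ through the forward trajectory. Writing it as a $2N$-dimensional real linear ODE and solving backward from the terminal datum gives existence, uniqueness and smooth $\bfm{p}$-dependence, which is all the chain-rule argument~\eqref{a:eq:schr-chain-rule} needs.

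The remaining step is to differentiate~\eqref{a:eq:schr-lagrangian} at fixed $(\bfm{u},\bfm{b},\boldsymbol{\lambda})$. The parameters $\bfm{p}$ are real, so derivatives pass through $2\operatorname{Re}$ and the pairing with no Wirtinger subtleties. The cost $\chi$ has no explicit $\bfm{p}$-dependence, and $i\dot{\bfm{u}}$ has none either. The constraint integral therefore contributes $2\operatorname{Re}\int_0^T\langle\bfm{b},\bfm{F}_{\bfm{p}_F}-\bfm{f}_{\bfm{p}_f}\rangle_{\mathbb C}\,dt$. The initial-condition term contributes $-2\operatorname{Re}\langle\boldsymbol{\lambda},\bfm{u}_{0,\bfm{p}_{u_0}}\rangle_{\mathbb C}$. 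Substituting $\boldsymbol{\lambda}=-i\bfm{b}(0)$ turns this into $2\operatorname{Re}\langle i\bfm{b}(0),\bfm{u}_{0,\bfm{p}_{u_0}}\rangle_{\mathbb C}$, which is~\eqref{a:eq:schr-gradient-a}.

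The main obstacle is sign and conjugation bookkeeping, specifically the sign of $\boldsymbol{\lambda}$. The inner product is antilinear in its first slot, so $\langle -i\bfm{b},\cdot\rangle_{\mathbb C}=i\langle\bfm{b},\cdot\rangle_{\mathbb C}$ and $\langle i\bfm{b},\cdot\rangle_{\mathbb C}=-i\langle\bfm{b},\cdot\rangle_{\mathbb C}$. I will recheck the boundary terms of~\eqref{a:eq:schr-ibp} against this rule to confirm that the $t=0$ coefficient is indeed $i\bfm{b}(0)+\boldsymbol{\lambda}$. The same check, applied to the rewriting $\bfm{u}_{0}$-term $=-2\operatorname{Re}\langle\overline{i\bfm{c}(T)},\cdot\rangle$ under $\bfm{c}(t)=\overline{\bfm{b}(T-t)}$, should confirm consistency with~\eqref{eq:gradient-schrodinger-IC}.
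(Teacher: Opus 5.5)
Your proposal is correct and follows the paper's proof essentially line for line. You set $\boldsymbol{\lambda}=-i\bfm{b}(0)$ so that $\mathcal{L}_{\bfm{u}}=0$, invoke $d\chi/d\bfm{p}=\mathcal{L}_{\bfm{p}}$, differentiate the constraint and initial-condition pairings, and substitute the multiplier. Your added well-posedness remark (real-linearity of the adjoint ODE in $(\operatorname{Re}\bfm{b},\operatorname{Im}\bfm{b})$) and both sign checks are correct: the $t=0$ coefficient is $i\bfm{b}(0)+\boldsymbol{\lambda}$, and the result agrees with the $\bfm{c}$-form of the initial-condition gradient. The well-posedness remark is a detail the paper spells out only in the real cases.
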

\noindent\textit{Proof.} With $\bfm{b}$ and $\boldsymbol{\lambda}$ chosen as in Definition~\ref{a:def:schr-adjoint} we have $\mathcal{L}_{\bfm{u}} = 0$, and the key identity~\eqref{a:eq:schr-key-identity} gives $d\chi/d\bfm{p}=\mathcal{L}_{\bfm{p}}$. The explicit parameter dependence of~\eqref{a:eq:schr-lagrangian} enters only through $\bfm{F}(\bfm{p}_F)$ and $\bfm{f}(\bfm{p}_f)$ in the constraint pairing and through $\bfm{u}_0(\bfm{p}_{u_0})$ in the initial-condition pairing, giving $\mathcal{L}_{\bfm{p}} = 2\operatorname{Re}\int_0^T\langle\bfm{b},\bfm{F}_{\bfm{p}_F}-\bfm{f}_{\bfm{p}_f}\rangle_{\mathbb{C}}\,dt - 2\operatorname{Re}\langle\boldsymbol{\lambda},\bfm{u}_{0,\bfm{p}_{u_0}}\rangle_{\mathbb{C}}$. Substituting $\boldsymbol{\lambda}=-i\bfm{b}(0)$ yields~\eqref{a:eq:schr-gradient-a}. The external drive enters the residual as $-\bfm{f}$, so its gradient carries the opposite sign to the internal force, and the initial-condition term carries the explicit factor $i$ inherited from $\boldsymbol{\lambda}=-i\bfm{b}(0)$. \qed

\subsection{Forward-time conjugated adjoint}\label{a:schr-forward-sub}

The adjoint of Definition~\ref{a:def:schr-adjoint} runs backward. For a physical realization we pass to the conjugated time-reversed field $\bfm{c}(t):=\overline{\bfm{b}(T-t)}$. Evaluating~\eqref{a:eq:schr-adjoint-backward} at $T-t$, using $\dot{\bfm{b}}(T-t)=-\overline{\dot{\bfm{c}}(t)}$, and complex conjugating the entire equation gives the forward-time equation~\eqref{eq:main-schrodinger-adjoint},
\begin{equation}\label{a:eq:schr-c-forward}
\begin{split}
    i\dot{\bfm{c}}(t)
    &+ \bfm{F}_{\bfm{u}}^T\bigl[\bfm{u}(T-t),\overline{\bfm{u}}(T-t),\bfm{p}_F,T-t\bigr]\bfm{c}(t) \\
    &+ \overline{\bfm{F}_{\overline{\bfm{u}}}\bigl[\bfm{u}(T-t),\overline{\bfm{u}}(T-t),\bfm{p}_F,T-t\bigr]}^T\,\overline{\bfm{c}}(t)
    + \mathbb{P}\theta_{\bfm{u}}\bigl[\mathbb{P}\bfm{u}(T-t),\mathbb{P}\overline{\bfm{u}}(T-t)\bigr] = \bfm{0},
\end{split}
\end{equation}
where we used $\overline{\theta_{\overline{\bfm{u}}}}=\theta_{\bfm{u}}$, with initial condition
\begin{equation}\label{a:eq:schr-c-ic}
    \bfm{c}(0) = \overline{\bfm{b}(T)} = i\,\mathbb{P}\psi_{\bfm{u}}\bigl[\mathbb{P}\bfm{u}(T),\mathbb{P}\overline{\bfm{u}}(T)\bigr].
\end{equation}
Conjugating term by term flips the signs in front of the Wirtinger Jacobians, which is why the conjugated field $\bfm{c}$, rather than $\bfm{b}$, can be propagated on the same hardware. The raw adjoint is recovered by $\bfm{b}(t)=\overline{\bfm{c}(T-t)}$. Substituting into~\eqref{a:eq:schr-gradient-a} gives the gradient overlaps quoted in Sec.~\ref{s:schrodinger} in terms of the measured field,
\begin{align}
    \frac{d\chi}{d\bfm{p}_F} &= 2\operatorname{Re}\int_0^T\bigl\langle\overline{\bfm{c}(T-t)},\,\bfm{F}_{\bfm{p}_F}[\bfm{u}(t),\overline{\bfm{u}}(t),\bfm{p}_F,t]\bigr\rangle_{\mathbb{C}}\,dt,\label{a:eq:schr-grad-pF}\\
    \frac{d\chi}{d\bfm{p}_f} &= -\,2\operatorname{Re}\int_0^T\bigl\langle\overline{\bfm{c}(T-t)},\,\bfm{f}_{\bfm{p}_f}(t,\bfm{p}_f)\bigr\rangle_{\mathbb{C}}\,dt,\label{a:eq:schr-grad-pf}\\
    \frac{d\chi}{d\bfm{p}_{u_0}} &= -2\operatorname{Re}\bigl\langle i\,\overline{\bfm{c}(T)},\,\bfm{u}_{0,\bfm{p}_{u_0}}\bigr\rangle_{\mathbb{C}}.\label{a:eq:schr-grad-pu0}
\end{align}

\section{Free-space optics}\label{app:free-space-optics}

\paragraph{Free-space diffractive optics.}
In the scalar, monochromatic, paraxial, and thin-element approximation, a free-space diffractive optical system is a linear Schr\"odinger-type system of the form considered in Sec.~\ref{s:schrodinger}, with the propagation coordinate $z$ playing the role of time~\cite{longhi2009quantum}. Starting from the scalar Helmholtz equation and writing the field as a slowly varying envelope,
\begin{equation}\label{a:eq:fso-helmholtz}
\left(\nabla_\perp^2+\partial_z^2+k_0^2 n^2(\bfm{r}_\perp,z)\right)E(\bfm{r}_\perp,z)=0,
\qquad
E(\bfm{r}_\perp,z)=u(\bfm{r}_\perp,z)e^{ikz},
\qquad
k=k_0n_0,
\end{equation}
the paraxial approximation neglects $\partial_z^2u$ relative to $2ik\partial_z u$, giving
\begin{equation}\label{a:eq:fso-paraxial}
i\partial_z u+
\left[
\frac{1}{2k}\nabla_\perp^2
+k_0\Delta n(\bfm{r}_\perp,z)
\right]u=0.
\end{equation}
An ideal SLM plane at $z=z_j$ is modeled as a thin local transmission
\begin{equation}\label{a:eq:fso-thin}
u(z_j^+)=m_j(\bfm{r}_\perp)u(z_j^-),
\qquad
m_j(\bfm{r}_\perp)=\rho_j(\bfm{r}_\perp)e^{i\phi_j(\bfm{r}_\perp)}
=e^{i\alpha_j(\bfm{r}_\perp)},
\end{equation}
or equivalently as a delta-function potential, the form underlying split-step beam propagation~\cite{feit1978light},
\begin{equation}\label{a:eq:fso-delta}
i\partial_z u+
\left[
\frac{1}{2k}\nabla_\perp^2
+\sum_j \alpha_j(\bfm{r}_\perp)\delta(z-z_j)
\right]u=0.
\end{equation}
After transverse discretization this has the form treated in Sec.~\ref{s:schrodinger},
\begin{equation}\label{a:eq:fso-state}
i\dot{\bfm{u}}(z)+\bfm{F}(\bfm{u},\overline{\bfm{u}},\bfm{p}_F)=0,
\qquad
\bfm{F}(\bfm{u},\overline{\bfm{u}},\bfm{p}_F)=\bfm{K}(z;\bfm{p}_K)\bfm{u},
\end{equation}
with
\begin{equation}\label{a:eq:fso-K}
\bfm{K}(z;\bfm{p}_K)
=
\frac{1}{2k}\bfm{L}_\perp
+
\sum_j \bfm{A}_j(\bfm{p}_j)\delta(z-z_j),
\qquad
\bfm{D}_j
=
e^{i\bfm{A}_j}
=
\operatorname{diag}(m_{j1},\ldots,m_{jN}).
\end{equation}
The internal force~\eqref{a:eq:fso-state} is linear and \emph{holomorphic}: it carries no $\overline{\bfm{u}}$ dependence, so $\bfm{F}_{\overline{\bfm{u}}}=\bfm{0}$ and the second Wirtinger Jacobian drops out of every formula in Appendix~\ref{a:schrodinger-wirtinger}. Equivalently, the input--output map is the usual free-space/SLM cascade~\cite{lin2018all}
\begin{equation}\label{a:eq:fso-cascade}
\bfm{U}
=
\bfm{P}_m\bfm{D}_m\bfm{P}_{m-1}\bfm{D}_{m-1}
\cdots
\bfm{P}_1\bfm{D}_1\bfm{P}_0,
\qquad
\bfm{P}_\ell
=
\exp\!\left(
i\frac{\Delta z_\ell}{2k}\bfm{L}_\perp
\right).
\end{equation}
For a reciprocal scalar discretization,
\begin{equation}\label{a:eq:fso-reciprocity}
\bfm{L}_\perp^T=\bfm{L}_\perp,
\qquad
\bfm{A}_j^T=\bfm{A}_j,
\qquad
\bfm{D}_j^T=\bfm{D}_j.
\end{equation}
Reciprocity therefore requires \emph{complex symmetry}, not Hermiticity or unitarity. In particular, lossy or linear-gain pixels with $|m_{ji}|\neq 1$ are admissible even though generally
\begin{equation}\label{a:eq:fso-nonunitary}
\overline{\bfm{D}}_j^T\bfm{D}_j\neq \bfm{I}.
\end{equation}

Accordingly, the physical adjoint experiment is the $\bfm{c}$-adjoint run of Appendix~\ref{a:schrodinger-wirtinger}, not a direct integration of the backward Wirtinger adjoint $\bfm{b}$. In the linear reciprocal case the $\bfm{c}$-field obeys
\begin{equation}\label{a:eq:fso-c-forward}
i\dot{\bfm{c}}(z)
+
\bfm{K}^T(L-z;\bfm{p}_K)\bfm{c}(z)
+
\mathbb P\theta_{\bfm{u}}
\!\left[
\mathbb P\bfm{u}(L-z),
\mathbb P\overline{\bfm{u}}(L-z)
\right]
=0,
\end{equation}
with initial condition
\begin{equation}\label{a:eq:fso-c-ic}
\bfm{c}(0)
=
i\,\mathbb P\psi_{\bfm{u}}
\!\left[
\mathbb P\bfm{u}(L),
\mathbb P\overline{\bfm{u}}(L)
\right].
\end{equation}
Since $\bfm{K}^T=\bfm{K}$, this is implemented by the same reciprocal optical system traversed in the reverse direction, with the adjoint source and initial condition applied at the output side. Note that $\bfm{K}^T(L-z)$ places the modulator planes at $L-z_j$, that is, in reversed element order, which is what reverse traversal physically realizes. The backward adjoint is then recovered as
\begin{equation}\label{a:eq:fso-recover}
\bfm{b}(z)=\overline{\bfm{c}(L-z)}.
\end{equation}
For parameters that enter only through the optical operator, the gradient of Theorem~\ref{a:thm:schr-gradient} becomes
\begin{equation}\label{a:eq:fso-gradient}
\frac{d\chi}{dp}
=
2\operatorname{Re}
\int_0^L
\left\langle
\overline{\bfm{c}(L-z)},
\bfm{K}_p(z;\bfm{p}_K)\bfm{u}(z)
\right\rangle_{\mathbb C}
\,dz.
\end{equation}

\paragraph{Pixel gradients.}
Each SLM pixel carries two trainable parameters, a phase and a log-amplitude,
\begin{equation}\label{a:eq:fso-pixel-param}
    m_{ji}=e^{\gamma_{ji}+i\phi_{ji}}=\rho_{ji}e^{i\phi_{ji}},
    \qquad
    \rho_{ji}=e^{\gamma_{ji}},
    \qquad
    \alpha_{ji}=\phi_{ji}-i\gamma_{ji},
\end{equation}
so that $\gamma_{ji}<0$ is an attenuating pixel, $\gamma_{ji}>0$ a linear-gain pixel, and a phase-only device is the special case $\gamma_{ji}\equiv0$. Both gradients are obtained from the same pair of runs.
We use the cascade to derive the gradient because it exhibits the pixel dependence directly~\cite{zhou2020situ}. Letting $\bfm{E}_i$ be diagonal with a single nonzero entry at pixel $i$,
\begin{equation}\label{a:eq:fso-dU}
\frac{\partial\bfm{U}}{\partial \phi_{ji}}
=
\bfm{P}_m\cdots\bfm{P}_j\,\bigl(i\,m_{ji}\bfm{E}_i\bigr)\,\bfm{P}_{j-1}\cdots\bfm{P}_0,
\end{equation}
and since $\bfm{E}_i\bfm{D}_j=\bfm{D}_j\bfm{E}_i$ the phase and amplitude gradients are the two parts of one complex overlap at plane $j$,
\begin{align}
\frac{\partial\chi}{\partial \phi_{ji}} &= 2\operatorname{Re}\bigl[\overline{b_i}\,u_i\bigr]_{z_j^+} , \label{a:eq:fso-grad-phi}\\
\frac{\partial\chi}{\partial \gamma_{ji}} &= 2\operatorname{Im}\bigl[\overline{b_i}\,u_i\bigr]_{z_j^+} . \label{a:eq:fso-grad-gamma}
\end{align}
Both fields must be evaluated on the \emph{same} side of plane $j$, here the output side $z_j^+$. The assignment of the real and imaginary parts follows from the factor $-i$ in $\bfm{b}(L)=-i\,\mathbb{P}\psi_{\overline{\bfm{u}}}$ of Definition~\ref{a:def:schr-adjoint}, which rotates the overlap by a quarter turn.

\section{Details for stationary problems}\label{app:helmholtz-scattering}

This appendix gives the derivations behind Sec.~\ref{s:stationary}. The setting is the stationary problem
\begin{equation}\label{app:eq:helmholtz-state}
    \bfm{F}\bigl[\bfm{u},\overline{\bfm{u}},\bfm{p}_F\bigr]-\bfm{f}(\bfm{p}_f)=\bfm{0},
\end{equation}
with an invertible tangent operator at the solution, covering static linear elasticity, linear resistor networks, and frequency-domain scattering at a fixed drive frequency alike. We keep the complex-valued notation throughout. Real problems are the special case treated in the last subsection. The loss is a real-valued function of the measured field,
\begin{equation}\label{app:eq:helmholtz-cost}
    \chi=\psi\bigl[\mathbb{P}\bfm{u},\mathbb{P}\overline{\bfm{u}}\bigr].
\end{equation}

\begin{problem}[Constrained optimization of a stationary system]\label{app:prob:helmholtz}
Find
\begin{equation}\label{app:eq:helmholtz-opt}
    \min_{\bfm{p}}\;\psi\bigl[\mathbb{P}\bfm{u},\mathbb{P}\overline{\bfm{u}}\bigr]
    \qquad\text{subject to}\qquad
    \bfm{F}\bigl[\bfm{u},\overline{\bfm{u}},\bfm{p}_F\bigr]=\bfm{f}(\bfm{p}_f),
\end{equation}
over the parameters $\bfm{p}=[\bfm{p}_F,\bfm{p}_\bfm{f}]\in\mathbb{R}^M$.
\end{problem}

The derivation is the static instance of the template used in the preceding appendices: the field equation is imposed as a constraint with an adjoint multiplier, stationarity with respect to the multiplier returns the field equation, stationarity with respect to the field yields the adjoint equation, and the remaining explicit parameter dependence gives the gradient as a forward--adjoint overlap. For frequency-domain problems this is the standard adjoint-variable sensitivity construction used for Maxwell and Helmholtz systems~\cite{VeronisDutton2004}.

\subsection{The stationary adjoint}\label{app:helmholtz-nonlinear}

There is no time, hence no integration by parts and no boundary term, and the single algebraic constraint carries a single multiplier. Pairing the residual through $2\operatorname{Re}$ so that the Lagrangian is real,
\begin{equation}\label{app:eq:lagrangian-helmholtz}
    \mathcal{L}[\bfm{u}, \bar{\bfm{u}}, \bfm{a}, \bar{\bfm{a}},\bfm{p}] = \chi + 2 \operatorname{Re} \bigl \langle \bfm{a}, \bfm{F}\bigl[\bfm{u},\overline{\bfm{u}},\bfm{p}_F\bigr] - \bfm{f}(\bfm{p}_f) \bigr\rangle_{\mathbb{C}},
\end{equation}
a real-valued function of independent arguments in which $\bfm{u}$ is an arbitrary field and is \emph{not} assumed to solve~\eqref{app:eq:helmholtz-state}. The loss sits inside $\mathcal{L}$, so the partial derivatives below already carry its contribution.

Stationarity in the multiplier returns the constraint,
\begin{equation}\label{app:eq:helmholtz-La}
    \mathcal{L}_{\bfm{a}} = \bfm{F}\bigl[\bfm{u},\overline{\bfm{u}},\bfm{p}_F\bigr]-\bfm{f}(\bfm{p}_f),
\end{equation}
so $\mathcal{L}_{\bfm{a}}=0$ holds if and only if $\bfm{u}$ solves~\eqref{app:eq:helmholtz-state}, since a real pairing that vanishes in every complex direction forces the residual to vanish, as in Appendix~\ref{a:schrodinger-wirtinger}. Evaluated there the constraint term drops and $\mathcal{L}=\chi(\bfm{p})$, so differentiating with one term per argument of $\mathcal{L}$ gives
\begin{equation}\label{app:ep:lagrangian-p-derivative-linear}
    \frac{d \chi}{d \bfm{p}} = \mathcal{L}_{\bfm{u}}\frac{d \bfm{u}}{d \bfm{p}} +  \mathcal{L}_{\bar{\bfm{u}}}\frac{d \bar{\bfm{u}}}{d \bfm{p}} + \mathcal{L}_{\bfm{a}}\frac{d \bfm{a}}{d \bfm{p}} + \mathcal{L}_{\bar{\bfm{a}}}\frac{d \bar{\bfm{a}}}{d \bfm{p}} + \mathcal{L}_{\bfm{p}}.
\end{equation}
The two multiplier terms vanish by~\eqref{app:eq:helmholtz-La}, whatever the sensitivities of $\bfm{a}$ may be. The expensive object is $d\bfm{u}/d\bfm{p}$, which would otherwise cost one stationary solve per parameter, and it enters only against $\mathcal{L}_{\bfm{u}}$ and its conjugate. Choosing $\bfm{a}$ so that $\mathcal{L}_{\overline{\bfm{u}}}=0$, which by conjugacy also gives $\mathcal{L}_{\bfm{u}}=0$, therefore leaves the key identity $d\chi/d\bfm{p}=\mathcal{L}_{\bfm{p}}$, and that condition is what defines the adjoint. Writing the real pairing out,
\begin{equation}
    2\operatorname{Re}\bigl\langle \bfm{a},\bfm{F}-\bfm{f}\bigr\rangle_{\mathbb{C}}
    =
    \overline{\bfm{a}}^{T}(\bfm{F}-\bfm{f})
    +
    \bfm{a}^{T}(\overline{\bfm{F}}-\overline{\bfm{f}}),
\end{equation}
and collecting the terms that carry $\overline{\bfm{u}}$, through the Wirtinger Jacobian $\bfm{F}_{\overline{\bfm{u}}}$ in the first term and $\overline{\bfm{F}}_{\overline{\bfm{u}}}=\overline{\bfm{F}_{\bfm{u}}}$ in the second,
\begin{equation}\label{app:eq:helmholtz-Lebar}
    \mathcal{L}_{\overline{\bfm{u}}} = \mathbb{P}\psi_{\overline{\bfm{u}}}\bigl[\mathbb{P}\bfm{u},\mathbb{P}\overline{\bfm{u}}\bigr] + \overline{\bfm{F}_{\bfm{u}}}^{\,T}\bfm{a} + \bfm{F}_{\overline{\bfm{u}}}^{T}\overline{\bfm{a}}.
\end{equation}
Setting it to zero gives the adjoint.

\begin{definition}[Stationary adjoint state]\label{app:def:helmholtz-adjoint}
The adjoint state $\bfm{a}\in\mathbb{C}^N$ associated with Problem~\ref{app:prob:helmholtz} is the solution of $\mathcal{L}_{\overline{\bfm{u}}}=0$, that is
\begin{equation}\label{app:eq:complex-adjoint-general}
    \overline{\bfm{F}_{\bfm{u}}}^{\,T}\bfm{a}
    +
    \bfm{F}_{\overline{\bfm{u}}}^{T}\overline{\bfm{a}}
    =
    -\,\mathbb{P}\psi_{\overline{\bfm{u}}}
    \bigl[\mathbb{P}\bfm{u},\mathbb{P}\overline{\bfm{u}}\bigr],
\end{equation}
with both Wirtinger Jacobians evaluated at the solution of~\eqref{app:eq:helmholtz-state}.
\end{definition}

\begin{theorem}[Stationary adjoint gradient]\label{app:thm:helmholtz-gradient}
Let $\bfm{u}$ solve Problem~\ref{app:prob:helmholtz} and $\bfm{a}$ be the adjoint state of Definition~\ref{app:def:helmholtz-adjoint}. Then, for each parameter $p_m$,
\begin{equation}\label{app:eq:adj-grad-general}
    \frac{d \chi}{d p_m} = 2 \operatorname{Re} \bigl\langle \bfm{a},\, \bfm{F}_{p_m}-\bfm{f}_{p_m} \bigr\rangle_{\mathbb{C}}.
\end{equation}
\end{theorem}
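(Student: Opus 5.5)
The plan follows the template of Theorems~\ref{apA:thm:gradient} and~\ref{a:thm:schr-gradient}, now without time. Use the Lagrangian~\eqref{app:eq:lagrangian-helmholtz} with $\bfm{a}$ taken from Definition~\ref{app:def:helmholtz-adjoint}. At the forward solution the constraint pairing vanishes, so $\mathcal{L}=\chi(\bfm{p})$ for every $\bfm{a}$. The chain-rule expansion~\eqref{app:ep:lagrangian-p-derivative-linear} therefore holds, with real parameters $p_m$ and $\bfm{u}(\bfm{p})$ differentiable in them. Differentiability of $\bfm{u}(\bfm{p})$ follows from the implicit function theorem, applied to the real-linear tangent operator of~\eqref{app:eq:helmholtz-state}, which is assumed invertible.

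The multiplier terms $\mathcal{L}_{\bfm{a}}\,d\bfm{a}/dp_m$ and $\mathcal{L}_{\overline{\bfm{a}}}\,d\overline{\bfm{a}}/dp_m$ vanish by~\eqref{app:eq:helmholtz-La} and its conjugate. For the state terms, the key observation is that $\mathcal{L}$ is real-valued, so $\mathcal{L}_{\bfm{u}}=\overline{\mathcal{L}_{\overline{\bfm{u}}}}$. Consequently $\mathcal{L}_{\bfm{u}}\,d\bfm{u}/dp_m+\mathcal{L}_{\overline{\bfm{u}}}\,d\overline{\bfm{u}}/dp_m=2\operatorname{Re}\bigl[\mathcal{L}_{\overline{\bfm{u}}}^T\,d\overline{\bfm{u}}/dp_m\bigr]$. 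By~\eqref{app:eq:helmholtz-Lebar} and the defining equation~\eqref{app:eq:complex-adjoint-general}, $\mathcal{L}_{\overline{\bfm{u}}}=0$, so both state terms drop out. The identity $d\chi/dp_m=\mathcal{L}_{p_m}$ remains.

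It then suffices to compute $\mathcal{L}_{p_m}$ at fixed $\bfm{u},\overline{\bfm{u}},\bfm{a}$. The cost $\psi$ has no explicit parameter dependence, and the pairing is real-linear in its second slot. Because $p_m$ is real, $\partial_{p_m}$ commutes with $2\operatorname{Re}$ and with the sesquilinear pairing. This gives $\mathcal{L}_{p_m}=2\operatorname{Re}\langle\bfm{a},\bfm{F}_{p_m}-\bfm{f}_{p_m}\rangle_{\mathbb{C}}$, which is~\eqref{app:eq:adj-grad-general}.

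The only delicate step is justifying that $\mathcal{L}_{\overline{\bfm{u}}}=0$ is well posed, i.e. that~\eqref{app:eq:complex-adjoint-general} has a unique solution, and that it kills both state terms. I would handle this by rewriting the real-linear map $\bfm{a}\mapsto\overline{\bfm{F}_{\bfm{u}}}^T\bfm{a}+\bfm{F}_{\overline{\bfm{u}}}^T\overline{\bfm{a}}$ as a $2N\times2N$ real matrix. This matrix is the transpose of the real form of the tangent operator $\bfm{h}\mapsto\bfm{F}_{\bfm{u}}\bfm{h}+\bfm{F}_{\overline{\bfm{u}}}\overline{\bfm{h}}$, up to the fixed real isomorphism between $\mathbb{C}^N$ and $\mathbb{R}^{2N}$ induced by $2\operatorname{Re}\langle\cdot,\cdot\rangle_{\mathbb{C}}$. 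Invertibility of the tangent operator therefore gives existence and uniqueness of $\bfm{a}$. The conjugacy argument above then disposes of $\mathcal{L}_{\bfm{u}}$.
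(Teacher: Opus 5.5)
Your proposal is correct and is essentially the paper's proof. It uses the same Lagrangian and the same conjugacy $\mathcal{L}_{\bfm{u}}=\overline{\mathcal{L}_{\overline{\bfm{u}}}}$ to remove both state terms once the adjoint equation sets $\mathcal{L}_{\overline{\bfm{u}}}=0$, leaving $d\chi/dp_m=\mathcal{L}_{p_m}$, read off from the explicit parameter dependence of $\bfm{F}$ and $\bfm{f}$. Your extra step adds something the paper leaves implicit: the adjoint map is the transpose of the real $2N\times 2N$ form of the tangent operator, so it is invertible under the standing invertibility assumption, which gives the well-posedness that the paper's Definition simply presupposes.
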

\noindent\textit{Proof.} With $\bfm{a}$ as in Definition~\ref{app:def:helmholtz-adjoint} we have $\mathcal{L}_{\overline{\bfm{u}}}=0$ and hence $\mathcal{L}_{\bfm{u}}=0$, so~\eqref{app:ep:lagrangian-p-derivative-linear} leaves $d\chi/d\bfm{p}=\mathcal{L}_{\bfm{p}}$, and the explicit parameter dependence of the Lagrangian~\eqref{app:eq:lagrangian-helmholtz} sits in $\bfm{F}$ and $\bfm{f}$ alone. \qed

\subsection{Linear systems}\label{app:helmholtz-linear}

For $\bfm{F}=\bfm{K}(\bfm{p}_\bfm{K})\bfm{u}$ with invertible $\bfm{K}$, the Wirtinger Jacobians are $\bfm{F}_{\bfm{u}}=\bfm{K}$ and $\bfm{F}_{\overline{\bfm{u}}}=\bfm{0}$, and the adjoint equation~\eqref{app:eq:complex-adjoint-general} becomes the Hermitian-adjoint system
\begin{equation}\label{app:eq:adj-state-helmholtz}
    \bfm{K}(\bfm{p}_\bfm{K})^{\dagger} \bfm{a} = -\,\mathbb{P} \psi_{\bfm{\bar{u}}}[\mathbb{P} \bfm{u}, \mathbb{P} \bar{\bfm{u}}],
\end{equation}
with the gradients of Theorem~\ref{app:thm:helmholtz-gradient} reading
\begin{align}
    \frac{d \chi}{d \bfm{p}_{\bfm{K}}} &= \phantom{-}2 \operatorname{Re} \bigl\langle \bfm{a},\, \bfm{K}_{\bfm{p}_\bfm{K}} \bfm{u} \bigr\rangle_{\mathbb{C}}, \label{app:eq:adj-grad-pA}\\
    \frac{d \chi}{d \bfm{p}_\bfm{f}} &= -2 \operatorname{Re} \bigl\langle \bfm{a},\, \bfm{f}_{\bfm{p}_\bfm{f}} \bigr\rangle_{\mathbb{C}}. \label{app:eq:adj-grad-pb}
\end{align}
The conjugated equation~\eqref{app:eq:complex-adjoint-general} becomes
\begin{equation}\label{app:eq:adj-state-helmholtz-conjugated}
    \bfm{K}(\bfm{p}_\bfm{K})^{T}\bfm{c}
    =
    -\,\mathbb{P}\psi_{\bfm{u}}
    \bigl[\mathbb{P}\bfm{u},\mathbb{P}\overline{\bfm{u}}\bigr].
\end{equation}
For a reciprocal system, $\bfm{K}^T=\bfm{K}$, so $\bfm{c}$ is obtained by solving the \emph{same} forward problem with the source replaced by $-\mathbb{P}\psi_{\bfm{u}}$, applied at the measurement degrees of freedom. In terms of that physically generated field the gradients read
\begin{align}
    \frac{d \chi}{d \bfm{p}_{\bfm{K}}}
    &= \phantom{-}2\operatorname{Re}\bigl[\bfm{c}^{T}\bfm{K}_{\bfm{p}_\bfm{K}}\bfm{u}\bigr],
    \label{app:eq:adj-grad-real-pA}\\
    \frac{d \chi}{d \bfm{p}_\bfm{f}}
    &= -2\operatorname{Re}\bigl[\bfm{c}^{T}\bfm{f}_{\bfm{p}_\bfm{f}}\bigr].
    \label{app:eq:adj-grad-real-pb}
\end{align}
Eq.~\eqref{app:eq:adj-state-helmholtz-conjugated} is Eq.~\eqref{eq:main-helmholtz-adj-state-cc} of the main text, and~\eqref{app:eq:adj-grad-real-pA}--\eqref{app:eq:adj-grad-real-pb} are the gradients~\eqref{eq:stationary-gradient-F}--\eqref{eq:stationary-gradient-f} specialized to $\bfm{F}=\bfm{K}\bfm{u}$.

\subsection{Real fields: the static adjoint at a fixed point}\label{a:static-adjoint}

When $\bfm{F}$ is real and nonlinear, the trajectory-based construction is obstructed (Theorem~\ref{thm:first-order-physical} and Sec.~\ref{ss:first-order-nonlinear}). What remains accessible is the fixed point of the relaxation
\[
    \dot{\bfm{u}}(t) + \bfm{F}[\bfm{u}(t),\bfm{p}_F] - \bfm{f}(\bfm{p}_f) = \bfm{0},
\]
with a time-independent drive and a terminal-cost-only objective ($\theta\equiv0$), assumed to converge to a linearly exponentially stable fixed point $\bfm{u}^*(\bfm{p})$, on which the loss $\psi[\mathbb{P}\bfm{u}^*]$ is read after settling. The relevant parameters are those entering the fixed-point equation. Parameters affecting only the initial condition leave $\bfm{u}^*$ unchanged as long as the trajectory stays in the same basin of attraction. For real fields the conjugations in the derivation above drop and the factor $2\operatorname{Re}$ becomes the Euclidean inner product, so the adjoint equation~\eqref{app:eq:complex-adjoint-general} reduces to the following.

\begin{definition}[Static adjoint state]\label{a:def:static-adjoint}
The \emph{static adjoint} $\bfm{a}^*(\bfm{p})$ is the solution of
\begin{equation}\label{a:eq:fo-static-adjoint}
    \bfm{F}_{\bfm{u}}^T\bigl[\bfm{u}^*(\bfm{p}),\bfm{p}_F\bigr]\,\bfm{a}^* + \mathbb{P}\psi_{\bfm{u}}\bigl[\mathbb{P}\bfm{u}^*(\bfm{p})\bigr] = \bfm{0}.
\end{equation}
\end{definition}
\noindent No separate invertibility hypothesis is needed. Linearizing the forward dynamics about the fixed point gives $\delta\dot{\bfm{u}} = -\bfm{F}_{\bfm{u}}\delta\bfm{u}$, so linear exponential stability of $\bfm{u}^*$ places the spectrum of $\bfm{F}_{\bfm{u}}[\bfm{u}^*,\bfm{p}_F]$ in the open right half-plane, $\operatorname{Re}\lambda(\bfm{F}_{\bfm{u}})>0$. In particular $\bfm{F}_{\bfm{u}}$ is invertible, $\bfm{a}^*$ is the unique solution of~\eqref{a:eq:fo-static-adjoint}, and $\bfm{u}^*(\bfm{p})$, $\bfm{a}^*(\bfm{p})$ depend smoothly on $\bfm{p}$ by the implicit function theorem.

\begin{theorem}[Static adjoint gradient]\label{a:thm:static-gradient}
Let $\bfm{u}^*$ be the linearly exponentially stable fixed point above, so that $\bfm{F}_{\bfm{u}}[\bfm{u}^*,\bfm{p}_F]$ is invertible, and $\bfm{a}^*$ the static adjoint of Definition~\ref{a:def:static-adjoint}. Then
\begin{equation}\label{a:eq:fo-static-gradient}
    \frac{d\chi}{d\bfm{p}} = \bigl\langle \bfm{a}^*,\, \bfm{F}_{\bfm{p}_F}[\bfm{u}^*(\bfm{p}),\bfm{p}_F] - \bfm{f}_{\bfm{p}_f}(\bfm{p}_f)\bigr\rangle.
\end{equation}
\end{theorem}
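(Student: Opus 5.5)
The plan is to run the stationary Lagrangian argument of Appendix~\ref{app:helmholtz-nonlinear} in its real-valued form. A fixed point is a solution of the algebraic constraint $\bfm{F}[\bfm{u}^*,\bfm{p}_F]-\bfm{f}(\bfm{p}_f)=\bfm{0}$, and the cost $\chi(\bfm{p})=\psi[\mathbb{P}\bfm{u}^*(\bfm{p})]$ sees the dynamics only through $\bfm{u}^*$. Trajectory details, initial conditions and the time derivative therefore drop out as long as we stay in the same basin. I would define the real Lagrangian
\[
\mathcal{L}[\bfm{u},\bfm{a},\bfm{p}]=\psi[\mathbb{P}\bfm{u}]+\bigl\langle \bfm{a},\bfm{F}[\bfm{u},\bfm{p}_F]-\bfm{f}(\bfm{p}_f)\bigr\rangle,
\]
with $\bfm{u}$ and $\bfm{a}$ as independent arguments. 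At $\bfm{u}=\bfm{u}^*(\bfm{p})$ the constraint term vanishes, so $\chi(\bfm{p})=\mathcal{L}[\bfm{u}^*(\bfm{p}),\bfm{a},\bfm{p}]$ holds for every $\bfm{a}$.

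The first step is regularity, which the paragraph after Definition~\ref{a:def:static-adjoint} already supplies. Linear exponential stability puts the spectrum of $\bfm{F}_{\bfm{u}}[\bfm{u}^*,\bfm{p}_F]$ in the open right half-plane, so $\bfm{F}_{\bfm{u}}$ is invertible. The implicit function theorem then gives a $C^1$ map $\bfm{p}\mapsto\bfm{u}^*(\bfm{p})$. The same holds for $\bfm{a}^*(\bfm{p})$, because $\bfm{F}_{\bfm{u}}^T$ is invertible and its entries are continuous in $\bfm{p}$.

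The second step is the chain rule. Differentiating $\chi(\bfm{p})=\mathcal{L}[\bfm{u}^*(\bfm{p}),\bfm{a}^*(\bfm{p}),\bfm{p}]$ gives three terms, $\mathcal{L}_{\bfm{u}}\,d\bfm{u}^*/d\bfm{p}+\mathcal{L}_{\bfm{a}}\,d\bfm{a}^*/d\bfm{p}+\mathcal{L}_{\bfm{p}}$, which I handle in turn:
\begin{itemize}
\item The term in $\mathcal{L}_{\bfm{a}}$ vanishes, since $\mathcal{L}_{\bfm{a}}$ is exactly the fixed-point residual.
\item For the term in $\mathcal{L}_{\bfm{u}}$, using $\mathbb{P}^T=\mathbb{P}$ we have $\mathcal{L}_{\bfm{u}}=\mathbb{P}\psi_{\bfm{u}}+\bfm{F}_{\bfm{u}}^T\bfm{a}$. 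This is zero exactly when $\bfm{a}$ solves~\eqref{a:eq:fo-static-adjoint}, so choosing $\bfm{a}=\bfm{a}^*$ removes the expensive sensitivity $d\bfm{u}^*/d\bfm{p}$.
\item What remains is $\mathcal{L}_{\bfm{p}}=\langle\bfm{a}^*,\bfm{F}_{\bfm{p}_F}-\bfm{f}_{\bfm{p}_f}\rangle$, which is~\eqref{a:eq:fo-static-gradient}.
\end{itemize}

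As a cross-check, and to avoid Lagrangian formalism, I would also give the direct argument. Differentiating the fixed-point equation gives $\bfm{F}_{\bfm{u}}\,d\bfm{u}^*/d\bfm{p}=-(\bfm{F}_{\bfm{p}_F}-\bfm{f}_{\bfm{p}_f})$. Then
\[
d\chi/d\bfm{p}=\langle\mathbb{P}\psi_{\bfm{u}},d\bfm{u}^*/d\bfm{p}\rangle=-\langle\bfm{F}_{\bfm{u}}^T\bfm{a}^*,d\bfm{u}^*/d\bfm{p}\rangle=-\langle\bfm{a}^*,\bfm{F}_{\bfm{u}}\,d\bfm{u}^*/d\bfm{p}\rangle,
\]
which equals the claimed overlap.

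There is no real obstacle. The only delicate point is justifying that $\bfm{u}^*$ is a differentiable function of $\bfm{p}$ and is the fixed point the dynamics actually selects. This is where stability and the basin assumption enter: the implicit function theorem gives a local smooth branch, and persistence of exponential stability under small perturbations of $\bfm{p}$ keeps the relaxation on that branch.
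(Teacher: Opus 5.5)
Your proposal is correct and follows the paper's route: the paper proves this as the real-valued instance of the stationary Lagrangian argument of Theorem~\ref{app:thm:helmholtz-gradient}, where $\mathcal{L}_{\bfm{a}}=0$ at the fixed point and $\mathcal{L}_{\bfm{u}}=0$ at $\bfm{a}^*$ leave $d\chi/d\bfm{p}=\mathcal{L}_{\bfm{p}}$, and your implicit-differentiation cross-check is a valid elementary confirmation of the same identity. One minor point: you do not need $\bfm{p}\mapsto\bfm{a}^*(\bfm{p})$ to be differentiable, and continuity of the entries of $\bfm{F}_{\bfm{u}}^T$ alone would not give that anyway. Since $\chi(\bfm{p})=\mathcal{L}[\bfm{u}^*(\bfm{p}),\bfm{a},\bfm{p}]$ holds for every fixed $\bfm{a}$, you may freeze $\bfm{a}=\bfm{a}^*$ at the base point before differentiating.
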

\noindent\textit{Proof.} The real instance of Theorem~\ref{app:thm:helmholtz-gradient}. \qed

\paragraph{Relation to the time-dependent adjoint.}
The static adjoint is the time integral of the forward-time adjoint of Appendix~\ref{a:first-order-adjoint}. Take the forward system to be settled, $\bfm{u}(t)\equiv\bfm{u}^*$ on the whole window, which is the free phase of the protocol below. The linearization is then constant along the trajectory, and with $\theta\equiv0$ the running source vanishes, so~\eqref{a:eq:fo-adjoint-forward} is the homogeneous relaxation
\begin{equation}\label{a:eq:ep-static-relax}
    \dot{\bfm{a}}(t) + \bfm{F}_{\bfm{u}}^T[\bfm{u}^*,\bfm{p}_F]\,\bfm{a}(t) = \bfm{0},
    \qquad
    \bfm{a}(0) = -\mathbb{P}\psi_{\bfm{u}}\bigl[\mathbb{P}\bfm{u}^*\bigr],
    \qquad
    \bfm{a}(t) = e^{-\bfm{F}_{\bfm{u}}^T t}\,\bfm{a}(0).
\end{equation}
The spectrum of $\bfm{F}_{\bfm{u}}$ lies in the open right half-plane, so $\bfm{a}$ decays to zero and its late-time value carries no information. Its time integral does:
\begin{equation}\label{a:eq:ep-static-integral}
    \int_0^\infty \bfm{a}(t)\,dt = \bfm{F}_{\bfm{u}}^{-T}\bfm{a}(0) = -\bfm{F}_{\bfm{u}}^{-T}\,\mathbb{P}\psi_{\bfm{u}} = \bfm{a}^*,
\end{equation}
which is~\eqref{a:eq:fo-static-adjoint}. Over a window long compared with the decay time, truncating the integral to $[0,T]$ makes no difference. The two gradient formulas agree for the same reason: on the settled trajectory the second factor in the integrand of~\eqref{a:eq:fo-gradient-forward} is the constant $\bfm{F}_{\bfm{p}_F}[\bfm{u}^*,\bfm{p}_F] - \bfm{f}_{\bfm{p}_f}(\bfm{p}_f)$, so the time integral falls entirely on the adjoint and returns $\bfm{a}^*$, recovering~\eqref{a:eq:fo-static-gradient}. The initial-condition term drops out because $\bfm{a}(T)$ has decayed, matching the remark above that parameters entering only $\bfm{u}_0$ leave $\bfm{u}^*$ unchanged.

\paragraph{Physical realization: Equilibrium Propagation.}
The static adjoint~\eqref{a:eq:fo-static-adjoint} requires the transpose $\bfm{F}_{\bfm{u}}^T$, which is generally not realizable on the same device. Reciprocity removes it: for $\bfm{F}_{\bfm{u}}^T=\bfm{F}_{\bfm{u}}$, i.e.\ $\bfm{F}=E_{\bfm{u}}$, the static adjoint is the infinitesimal difference between the free fixed point and one nudged by the adjoint source, $\bfm{a}^*=\lim_{\epsilon\to0}(\bfm{u}^{*\epsilon}-\bfm{u}^*)/\epsilon$, the real-valued instance of Eq.~\eqref{eq:stationary-ep}. Both relaxations descend the energy $E$ tilted by the drive, the second additionally by $\epsilon\psi$, the two-phase energy contrast of Ref.~\cite{scellier2017equilibrium}. Substituting the difference of equilibria into Theorem~\ref{a:thm:static-gradient} gives the learning rule.

\subsection{Layered photonic networks}\label{app:helmholtz-layered-photonic}

Integrated photonic neural networks can be described as a sequence of
frequency-domain optical solves and elementwise activations. For
$\ell=1,\dots,L$, write
\begin{subequations}\label{app:eq:layer-state}
\begin{align}
    \bfm{K}_{\ell}(\bfm{p})\bfm{u}_{\ell}
    &=
    \bfm{B}_{\ell}\bfm{X}_{\ell-1},
    \label{app:eq:layer-field}\\
    \bfm{Z}_{\ell}
    &=
    \bfm{P}_{\ell}\bfm{u}_{\ell},
    \label{app:eq:layer-project}\\
    \bfm{X}_{\ell}
    &=
    f_{\ell}\bigl(\bfm{Z}_{\ell},\overline{\bfm{Z}}_{\ell}\bigr).
    \label{app:eq:layer-activation}
\end{align}
\end{subequations}
Here $\bfm{K}_{\ell}$ is the reciprocal frequency-domain operator of layer
$\ell$, $\bfm{B}_{\ell}$ injects the incoming modal amplitudes into the device,
and $\bfm{P}_{\ell}$ projects the internal field onto output modes. Eliminating
the field gives the usual layer transfer matrix
\begin{equation}\label{app:eq:layer-transfer}
    \bfm{Z}_{\ell}
    =
    \widehat{\bfm{W}}_{\ell}\bfm{X}_{\ell-1},
    \qquad
    \widehat{\bfm{W}}_{\ell}
    :=
    \bfm{P}_{\ell}\bfm{K}_{\ell}^{-1}\bfm{B}_{\ell}.
\end{equation}

Applying the Lagrangian construction to the three constraints in
Eq.~\eqref{app:eq:layer-state} gives three adjoint variables: the optical
adjoint field $\bfm{a}_{\ell}$, the pre-activation error
$\boldsymbol{\delta}_{\ell}$, and the output error
$\boldsymbol{\gamma}_{\ell}$. Here we pair each residual bilinearly, through $2\operatorname{Re}[\bfm{a}^T(\,\cdot\,)]$ rather than through the Hermitian $2\operatorname{Re}\langle\bfm{a},\,\cdot\,\rangle_{\mathbb{C}}$ used above. The two differ only by conjugating the multiplier, and the bilinear choice is the convenient one here because it produces transposed rather than Hermitian-adjoint operators, the combination a reciprocal device realizes. It is the layered counterpart of passing to $\bfm{c}=\overline{\bfm{a}}$ in Eq.~\eqref{app:eq:adj-state-helmholtz-conjugated}. The backward equations are
\begin{align}
    \bfm{K}_{\ell}^T\bfm{a}_{\ell}
    &=
    \bfm{P}_{\ell}^T\boldsymbol{\delta}_{\ell},
    \label{app:eq:layer-adj-field}\\
    \boldsymbol{\delta}_{\ell}
    &=
    (f_{\ell})_{\bfm{Z}}^T\boldsymbol{\gamma}_{\ell}
    +
    \overline{(f_{\ell})_{\overline{\bfm{Z}}}}^{\,T}
    \overline{\boldsymbol{\gamma}}_{\ell},
    \label{app:eq:layer-adj-activation}\\
    \boldsymbol{\gamma}_{\ell-1}
    &=
    \bfm{B}_{\ell}^T\bfm{a}_{\ell},
    \qquad
    \boldsymbol{\gamma}_{L}=-\psi_{\bfm{X}_{L}}.
    \label{app:eq:layer-adj-recursion}
\end{align}

Taken from $\ell=L$ down to $1$, these equations form the \emph{physical backpropagation} \cite{wagner1987multilayer, pai2023experimentally}: begin at the output error $\boldsymbol{\gamma}_L=-\psi_{\bfm{X}_L}$, pull it through the activation~\eqref{app:eq:layer-adj-activation} to obtain $\boldsymbol{\delta}_\ell$, solve the adjoint field equation~\eqref{app:eq:layer-adj-field} for $\bfm{a}_\ell$, and read off the preceding layer's output error $\boldsymbol{\gamma}_{\ell-1}=\bfm{B}_\ell^T\bfm{a}_\ell$.

If the optical layer is reciprocal, $\bfm{K}_{\ell}^T=\bfm{K}_{\ell}$, then
Eq.~\eqref{app:eq:layer-adj-field} is a physical solve in the same device,
excited from the output side with source
$\bfm{P}_{\ell}^T\boldsymbol{\delta}_{\ell}$. Combining
Eqs.~\eqref{app:eq:layer-adj-field} and~\eqref{app:eq:layer-adj-recursion}
gives
\begin{equation}\label{app:eq:layer-transpose-map}
    \boldsymbol{\gamma}_{\ell-1}
    =
    \bfm{B}_{\ell}^T\bfm{K}_{\ell}^{-T}
    \bfm{P}_{\ell}^T\boldsymbol{\delta}_{\ell}
    =
    \widehat{\bfm{W}}_{\ell}^T\boldsymbol{\delta}_{\ell}.
\end{equation}
Thus the physical adjoint solve implements the transpose of the forward optical
map.

The parameter gradient follows from the explicit parameter dependence of the
layer residuals:
\begin{equation}\label{app:eq:layer-gradient}
    \frac{d\chi}{dp_m}
    =
    2\,\operatorname{Re}
    \sum_{\ell=1}^{L}
    \left[
        \bfm{a}_{\ell}^T
        \bigl(
            (\bfm{K}_{\ell})_{p_m}\bfm{u}_{\ell}
            -
            (\bfm{B}_{\ell})_{p_m}\bfm{X}_{\ell-1}
        \bigr)
        -
        \boldsymbol{\delta}_{\ell}^T
        (\bfm{P}_{\ell})_{p_m}\bfm{u}_{\ell}
        -
        \boldsymbol{\gamma}_{\ell}^T
        (f_{\ell})_{p_m}
    \right].
\end{equation}
When the trainable parameters live only in the optical operators
$\bfm{K}_{\ell}$, this reduces to the local field overlap
\begin{equation}\label{app:eq:layer-gradient-optical}
    \frac{d\chi}{dp_m}
    =
    2\,\operatorname{Re}
    \sum_{\ell=1}^{L}
    \bfm{a}_{\ell}^T
    (\bfm{K}_{\ell})_{p_m}
    \bfm{u}_{\ell}.
\end{equation}
For a local permittivity perturbation in a phase shifter region,
$(\bfm{K}_{\ell})_{\epsilon_m}$ is supported only on that region, so
Eq.~\eqref{app:eq:layer-gradient-optical} is measured by a local product of the
forward and adjoint optical fields.

\subsection{Equivalence with scattering-matrix learning rules}\label{app:helmholtz-green}

The learning rules derived for linear wave-scattering networks by Wanjura and Marquardt~\cite{wanjura2024fully} express each parameter gradient as a product of two measured scattering responses. We show that these two responses are the forward and the adjoint field of Appendix~\ref{app:helmholtz-linear}, so that their rule is the adjoint identity read at single ports.

Their neuron modes obey the coupled-mode and input--output equations
\begin{equation}\label{app:eq:green-def}
    \bfm{K}\bfm{u}=-i\boldsymbol{\Gamma}\bfm{q},
    \qquad
    \bfm{q}^{\mathrm{out}}=\bfm{S}\bfm{q},
    \qquad
    \bfm{S}=\bfm{I}+\boldsymbol{\Gamma}\mathcal{G}\boldsymbol{\Gamma},
\end{equation}
with $\bfm{q}$ the probe amplitudes, $\boldsymbol{\Gamma}=\operatorname{diag}(\sqrt{\kappa_1},\ldots,\sqrt{\kappa_N})$ the couplings to the probe waveguides, $\bfm{K}=\omega\bfm{I}-\bfm{H}$ and $\mathcal{G}=-i\bfm{K}^{-1}$. The dynamical matrix $\bfm{H}$ collects the detunings and couplings together with the decay rates $-\tfrac{i}{2}\boldsymbol{\Gamma}^{2}$, so $\bfm{K}$ is symmetric without being Hermitian: the network is reciprocal, $\bfm{K}^{T}=\bfm{K}$ and $\mathcal{G}^{T}=\mathcal{G}$, and the conjugated adjoint equation~\eqref{app:eq:adj-state-helmholtz-conjugated} applies.

\paragraph{Both fields are probe responses.}
Probing port $p$ with unit amplitude, $\bfm{q}=\bfm{e}_p$, the forward field is the $p$-th column of $\mathcal{G}$,
\begin{equation}\label{app:eq:green-forward}
    \bfm{u}=\sqrt{\kappa_p}\,\mathcal{G}\bfm{e}_p.
\end{equation}
The network output is a quadrature of the scattering amplitude recorded at port $r$, $y=\operatorname{Re}(e^{i\phi}S_{r,p})$, and $S_{r,p}=\delta_{r,p}+\sqrt{\kappa_r}\,u_r$, so with the real weight $\lambda:=\partial\chi/\partial y$ the loss derivative entering~\eqref{app:eq:adj-state-helmholtz-conjugated} is $\psi_{\bfm{u}}=\tfrac{\lambda}{2}e^{i\phi}\sqrt{\kappa_r}\,\bfm{e}_r$. Reciprocity then delivers the adjoint field as
\begin{equation}\label{app:eq:green-adjoint}
    \bfm{c}
    =-\bfm{K}^{-1}\psi_{\bfm{u}}
    =-\frac{i\lambda}{2}e^{i\phi}\sqrt{\kappa_r}\,\mathcal{G}\bfm{e}_r,
\end{equation}
the $r$-th column of the same $\mathcal{G}$. The adjoint field is the device's response to a probe at the readout port, so both factors of the gradient are scattering measurements on the same hardware.

\paragraph{The gradient is their product rule.}
Let $\theta_m$ be a trainable parameter of $\bfm{H}$, so that $\bfm{K}_{\theta_m}=-\bfm{H}_{\theta_m}$. Substituting~\eqref{app:eq:green-forward} and~\eqref{app:eq:green-adjoint} into the gradient~\eqref{app:eq:adj-grad-real-pA} and using $\mathcal{G}^{T}=\mathcal{G}$,
\begin{equation}\label{app:eq:green-gradient}
    \frac{d\chi}{d\theta_m}
    =2\operatorname{Re}\bigl[\bfm{c}^{T}\bfm{K}_{\theta_m}\bfm{u}\bigr]
    =\lambda\operatorname{Re}\!\left[
        e^{i\phi}\,i\sqrt{\kappa_p\kappa_r}\,
        \bigl(\mathcal{G}\bfm{H}_{\theta_m}\mathcal{G}\bigr)_{r,p}
      \right].
\end{equation}
The bracket is the sensitivity of the scattering matrix itself: differentiating $\bfm{S}=\bfm{I}+\boldsymbol{\Gamma}\mathcal{G}\boldsymbol{\Gamma}$ with $\partial_{\theta_m}\bfm{K}^{-1}=\bfm{K}^{-1}\bfm{H}_{\theta_m}\bfm{K}^{-1}$ gives
\begin{equation}\label{app:eq:green-dS}
    \frac{\partial S_{r,p}}{\partial\theta_m}
    =i\sqrt{\kappa_p\kappa_r}\,
      \bigl(\mathcal{G}\bfm{H}_{\theta_m}\mathcal{G}\bigr)_{r,p},
\end{equation}
so~\eqref{app:eq:green-gradient} is the chain rule $d\chi/d\theta_m=(\partial\chi/\partial y)(\partial y/\partial\theta_m)$, with several outputs simply summed.

\paragraph{Read at single ports.}
Inverting the input--output relation gives $\mathcal{G}_{m,n}=(S_{m,n}-\delta_{m,n})/\sqrt{\kappa_m\kappa_n}$, so the overlap in~\eqref{app:eq:green-dS} is a product of measured amplitudes. Writing $\bfm{E}_{j,\ell}$ for the matrix whose only nonzero entry is a unit entry at $(j,\ell)$, an on-site detuning has $\bfm{H}_{\Delta_j}=\bfm{E}_{j,j}$ and
\begin{equation}\label{app:eq:green-detuning}
    \frac{\partial S_{r,p}}{\partial\Delta_j}
    =\frac{i}{\kappa_j}\,
      (S_{r,j}-\delta_{r,j})(S_{j,p}-\delta_{j,p}),
\end{equation}
the response from the probe port to the perturbed site times the response from that site to the readout port, while a reciprocal coupling has $\bfm{H}_{J_{j,\ell}}=\bfm{E}_{j,\ell}+\bfm{E}_{\ell,j}$ and
\begin{equation}\label{app:eq:green-coupling}
    \frac{\partial S_{r,p}}{\partial J_{j,\ell}}
    =\frac{i}{\sqrt{\kappa_j\kappa_\ell}}
      \bigl[
        (S_{r,j}-\delta_{r,j})(S_{\ell,p}-\delta_{\ell,p})
        +(S_{r,\ell}-\delta_{r,\ell})(S_{j,p}-\delta_{j,p})
      \bigr],
\end{equation}
the sum over the two paths through the coupled pair. These are the learning rules of Ref.~\cite{wanjura2024fully}, written here in terms of the measured amplitudes $S_{m,n}-\delta_{m,n}$. Their rule is therefore an instance of the adjoint method, in which the drive and the measurement each address a single port and the backward pass is a second scattering experiment on the same device.

\section{How to test a gradient: the Taylor test}\label{a:taylor-test}
Every implementation of an adjoint method requires a simple, model-agnostic way of checking that it is correct. The standard tool, used widely in the PDE-constrained optimization literature, is the \emph{Taylor test}~\cite{ghelichkhan2024automatic,farrell2013automated}.

Let $\bfm{g} := d\chi/d\bfm{p}$ denote the gradient produced by the adjoint procedure, and let $\bfm{v}\in\mathbb{R}^M$ be a random direction (e.g.\ a Gaussian vector). For a small step size $h>0$, define the zeroth- and first-order remainders
\begin{align}
    R_0(h) &\;:=\; \bigl|\,\chi(\bfm{p}+h\bfm{v}) \;-\; \chi(\bfm{p})\,\bigr|, \label{eq:R0} \\
    R_1(h) &\;:=\; \bigl|\,\chi(\bfm{p}+h\bfm{v}) \;-\; \chi(\bfm{p}) \;-\; h\,\langle \bfm{g},\bfm{v}\rangle\,\bigr|. \label{eq:R1}
\end{align}
A Taylor expansion of $\chi$ about $\bfm{p}$ gives
\begin{equation}\label{eq:taylor-rates}
    R_0(h) \;=\; h\,|\langle \bfm{g},\bfm{v}\rangle| \;+\; \mathcal{O}(h^2), \qquad
    R_1(h) \;=\; \mathcal{O}(h^2).
\end{equation}
The diagnostic is a log--log plot of $R_0(h)$ and $R_1(h)$ as $h$ is swept through several decades:
\begin{itemize}
    \item $R_0$ should decrease with slope $1$ in $\log h$.
    \item $R_1$ should decrease with slope $2$, \emph{provided} $\bfm{g}$ is the true gradient.
    \item If $\bfm{g}$ is wrong, $R_1$ instead decreases with slope $1$, and the curve typically transitions from a slope of $1$ at large $h$ to a noise-dominated floor at small $h$, producing the characteristic ``hockey-stick'' shape.
\end{itemize}
For very small $h$, $R_1$ becomes dominated by floating-point round-off in the forward evaluation. The Taylor test should therefore be read by reporting the slope over the regime where $R_1$ still decreases monotonically.

\end{document}